\def\llncs{0}
\def\fullpage{1}
\def\anonymous{0}
\def\notxfont{0}
\def\submission{0}
\def\cameraready{0}
\def\anonymous{1}
\def\llncs{1}
\def\submission{1}
\def\llncs{1}
\def\anonymous{0}
\def\llncs{0}
\definecolor{darkblue}{rgb}{0,0,0.6}
\definecolor{darkgreen}{rgb}{0,0.5,0}
\definecolor{maroon}{rgb}{0.5,0.1,0.1}
\definecolor{dpurple}{rgb}{0.2,0,0.65}
\newtheoremstyle{thicktheorem}%
{\topsep}
{\topsep}
{\itshape}{}%
{\bfseries}%
{.}
{ }%
{\thmname{#1}\thmnumber{ #2}%
		\thmnote{ (#3)}%
}
\newtheoremstyle{remark}
{\topsep}
{\topsep}
	{}
	{}
	{}
	{.}
	{ }
	{\textit{\thmname{#1}}\thmnumber{ #2}
			\thmnote{ (#3)}%
	}
	\theoremstyle{thicktheorem}
	\newtheorem{theorem}{Theorem}[section]
	\newtheorem{lemma}[theorem]{Lemma}
        \newtheorem{assumption}[theorem]{Assumption}
	\newtheorem{definition}[theorem]{Definition}
	\newtheorem{game}[theorem]{Game}
	\theoremstyle{remark}
	\newtheorem{claim}[theorem]{Claim}
	\crefname{theorem}{Theorem}{Theorems}
	\crefname{assumption}{Assumption}{Assumptions}
	\crefname{construction}{Construction}{Constructions}
	\crefname{corollary}{Corollary}{Corollaries}
	\crefname{conjecture}{Conjecture}{Conjectures}
	\crefname{definition}{Definition}{Definitions}
	\crefname{exmaple}{Example}{Examples}
	\crefname{experiment}{Experiment}{Experiments}
	\crefname{counterexample}{Counterexample}{Counterexamples}
	\crefname{lemma}{Lemma}{Lemmata}
	\crefname{observation}{Observation}{Observations}
	\crefname{proposition}{Proposition}{Propositions}
	\crefname{remark}{Remark}{Remarks}
	\crefname{claim}{Claim}{Claims}
	\crefname{fact}{Fact}{Facts}
	\crefname{note}{Note}{Notes}
 \crefname{appendix}{App.}{Appendices}
 \crefname{section}{Sec.}{Sections}
\renewcommand*{\backref}[1]{}
	\renewcommand*{\backref}[1]{(Cited on page~#1.)}
\newcommand{\taiga}[1]{$\ll$\textsf{\color{red} Taiga: { #1}}$\gg$}
\newcommand{\mor}[1]{$\ll$\textsf{\color{blue} Tomoyuki: { #1}}$\gg$}
\newcommand{\Samp}{\algo{Samp}}
\newcommand{\Ver}{\algo{Ver}}
\newcommand{\ans}{\mathsf{ans}}
\newcommand{\puzz}{\mathsf{puzz}}
\newcommand{\la}{\leftarrow}
\newcommand{\ra}{\rightarrow}
\newcommand{\seteq}{\coloneqq}
\newcommand{\cA}{\mathcal{A}}
\newcommand{\cB}{\mathcal{B}}
\newcommand{\cD}{\mathcal{D}}
\newcommand{\cL}{\mathcal{L}}
\newcommand{\cM}{\mathcal{M}}
\newcommand{\cP}{\mathcal{P}}
\newcommand{\cQ}{\mathcal{Q}}
\newcommand{\cR}{\mathcal{R}}
\newcommand{\cS}{\mathcal{S}}
\newcommand{\cV}{\mathcal{V}}
\def\makeuppercase#1{
\expandafter\newcommand\csname tl#1\endcsname{\widetilde{#1}}
}
\def\makelowercase#1{
\expandafter\newcommand\csname tl#1\endcsname{\widetilde{#1}}
}
\newcommand{\N}{\mathbb{N}}
\newcommand{\R}{\mathbb{R}}
\newcommand*{\algo}[1]{\ensuremath{\mathsf{#1}}}
\newenvironment{boxfig}[2]{\begin{figure}[#1]\fbox{\begin{minipage}{0.97\linewidth}
                        \vspace{0.2em}
                        \makebox[0.025\linewidth]{}
                        \begin{minipage}{0.95\linewidth}
            {{
                        #2 }}
                        \end{minipage}
                        \vspace{0.2em}
                        \end{minipage}}}{\end{figure}}
\newcommand{\bit}{\{0,1\}}
\newcommand{\Gen}{\algo{Gen}}
\newcommand{\negl}{{\mathsf{negl}}}
\newcommand{\poly}{{\mathrm{poly}}}
\let\oldvec\vec
\let\vec\oldvec
\renewcommand*\l@author[2]{}
\renewcommand*\l@title[2]{}
\theoremstyle{remark}
\newtheorem{fact}[theorem]{Fact}
\title{
\textbf{
Quantum Cryptography and Meta-Complexity
} 
}
\begin{document}

\ifnum\anonymous=1
\author{\empty}
\else
%
%
\ifnum\llncs=1
\index{Taiga, Hiroka}
\author{
	Taiga Hiroka\inst{1} 
}
\institute{
	Yukawa Institute for Theoretical Physics, Kyoto University, Japan  \and NTT Corporation, Tokyo, Japan
}
\else
%
%

\author[1]{Taiga Hiroka}
\author[1]{Tomoyuki Morimae}
\affil[1]{{\small Yukawa Institute for Theoretical Physics, Kyoto University, Kyoto, Japan}\authorcr{\small \{taiga.hiroka,tomoyuki.morimae\}@yukawa.kyoto-u.ac.jp}}

\renewcommand\Authands{, }
\fi 
\fi

\ifnum\llncs=1
\date{}
\else
\date{}
\fi

\maketitle


\pagenumbering{gobble} 

\begin{abstract}
In classical cryptography, one-way functions (OWFs) are the minimal assumption, while
it is not the case in quantum cryptography.
Several new primitives have been introduced such as pseudorandom unitaries
(PRUs), pseudorandom function-like state generators (PRFSGs), pseudorandom state
generators (PRSGs), one-way state generators (OWSGs), one-way puzzles (OWPuzzs), and
EFI pairs. They seem to be weaker than OWFs, but
still imply many useful applications such as private-key quantum money schemes, secret-key encryption, message authentication codes, digital
signatures, commitments, and multiparty computations.
Now that the possibility of quantum cryptography without OWFs has opened up, the most important
goal in the field is to build a foundation of it. 
In this paper, we, for the first time, characterize quantum cryptographic primitives with meta-complexity. 
We show that one-way puzzles (OWPuzzs) exist if and only if
GapK is weakly-quantum-average-hard.
GapK is a promise problem to decide whether a given bit string has a small Kolmogorov complexity or not.
Weakly-quantum-average-hard means that an instance is sampled from a QPT samplable distribution,
and for any QPT adversary the probability that it makes mistake is larger than ${\rm 1/poly}$.
We also show that if quantum PRGs exist then GapK is strongly-quantum-average-hard.
Here, strongly-quantum-average-hard is a stronger version of weakly-quantum-average-hard
where the probability that the adversary makes mistake is larger than $1/2-1/{\rm poly}$.
Finally, we show that if GapK is weakly-classical-average-hard, then inefficient-verifier proofs of quantumness (IV-PoQ) exist.
Weakly-classical-average-hard is the same as weakly-quantum-average-hard except that the adversary is PPT. 
IV-PoQ are a generalization of proofs of quantumness (PoQ) that capture sampling-based and search-based quantum advantage,
and an important application of OWpuzzs.
This is the fist time that quantum advantage is based on meta-complexity.
(Note: There are two concurrent works, \cite{cryptoeprint:2024/1490,cavalarmeta}.)
\end{abstract}

\if0
In classical cryptography, one-way functions (OWFs) are the minimal assumption, while
recent active studies have demonstrated that OWFs are not
necessarily the minimum assumption in quantum cryptography. 
Several new primitives have been introduced such as 
pseudorandom state generators (PRSGs), one-way state generators (OWSGs), one-way puzzles (OWPuzzs), and
EFI pairs. They are believed to be weaker than OWFs, but they still imply many useful applications.
Now that the possibility of quantum cryptography without OWFs has opened up, the most important
goal in the field is to provide them with concrete instantiations. 
However, in quantum cryptography, all known constructions of those primitives are only from
OWFs. We therefore have the following important open problem: {\it Do they have instantiations based on some concrete hardness assumptions that will not imply OWFs?}
Ideally, the assumptions should be the ones that are studied in other contexts than cryptography.
In this paper, we give a candidate answer to the question by showing that
quantum-average-hardness of GapK problem implies the existence of OWPuzzs.
As far as we know, this is the first time that a ``Microcrypt'' primitive is constructed
based on concrete hardness assumptions that do not seem to imply OWFs. 
Moreover, the assumptions are studied in other contexts than cryptography, especially in the field of meta-complexity. 
\fi

\ifnum\cameraready=1
\else
\ifnum\llncs=1
\else
\newpage
  \setcounter{tocdepth}{2}      
  \setcounter{secnumdepth}{2}   
  \tableofcontents
  \pagenumbering{arabic}
  \setcounter{page}{0}          
  \thispagestyle{empty}
  \clearpage
\fi
\fi

\section{Introduction}

In classical cryptography, the existence of one-way functions (OWFs) is the minimal assumption~\cite{FOCS:ImpLub89}, because
they are existentially equivalent to many primitives, such as pseudorandom generators (PRGs)~\cite{Hill99}, 
pseudorandom functions (PRFs)~\cite{JACM:GolGolMic86}, 
secret-key encryption (SKE)~\cite{GM84}, message authentication codes (MAC)~\cite{C:GolGolMic84}, digital signatures~\cite{STOC:Rompel90}, and commitments~\cite{C:Naor89}. 
Moreover, almost all primitives (including public-key encryption and multiparty computations) imply OWFs.

On the other hand, in quantum cryptography,
OWFs are not necessarily the minimum assumption~\cite{TQC:Kre21,C:MorYam22,C:AnaQiaYue22}.
Several new primitives have been introduced such as pseudorandom unitaries (PRUs)~\cite{C:JiLiuSon18}, 
pseudorandom function-like state generators (PRFSGs)~\cite{C:AnaQiaYue22,TCC:AGQY22},
pseudorandom state generators~(PRSGs)~\cite{C:JiLiuSon18}, one-way state generators (OWSGs)~\cite{C:MorYam22}, 
one-way puzzles~(OWPuzzs)~\cite{STOC:KhuTom24}, and EFI pairs~\cite{ITCS:BCQ23}. 
Although they could be weaker than OWFs~\cite{TQC:Kre21,STOC:KQST23,STOC:LomMaWri24},
they still imply many useful applications such as private-key quantum money schemes~\cite{C:JiLiuSon18}, SKE~\cite{C:AnaQiaYue22}, 
MAC~\cite{C:AnaQiaYue22}, 
digital signatures~\cite{C:MorYam22}, commitments~\cite{C:MorYam22,C:AnaQiaYue22,AC:Yan22}, 
and multiparty computations~\cite{C:MorYam22,C:AnaQiaYue22,C:BCKM21b,EC:GLSV21}.

Now that the possibility of the ``OWFs-free'' quantum cryptographic world (so-called Microcrypt) has opened up,
the most important goal in the field is to build a foundation of Microcrypt.
In classical cryptography, OWFs are founded in several ways and levels.
Although basing OWFs on $\mathbf{P}\neq\mathbf{NP}$ or
its average version is still open, OWFs have many instantiations based on concrete hardness assumptions, such as
the hardness of discrete logarithm~\cite{DifHel76} or lattice problems~\cite{STOC:Ajtai96},
and abstracted assumptions such as cryptographic group actions~\cite{cryptoeprint:2006/291,TCC:JQSY19,AC:ADMP20,C:BraYun90}.
Moreover, recent active studies have succeeded to base OWFs on meta-complexity (for example~\cite{FOCS:LiuPas20,IRS21}).

On the other hand, for the foundation of Microcrypt, currently what we know is only that
all Microcrypt primitives can be generically constructed at least from OWFs~\cite{cryptoeprint:2024/1652,C:JiLiuSon18,C:AnaQiaYue22,C:MorYam22,TQC:MorYam24,STOC:KhuTom24}. 
In particular, we do not know whether they can be based on some worst-case or average-case complexity assumptions,
hardness assumptions of concrete mathematical problems,
or some meta-complexity assumptions that do not imply OWFs.\footnote{There are two concurrent works~\cite{cryptoeprint:2024/1490,cavalarmeta} that tackle this open problem. See \cref{sec:concurrent_work}.}

\subsection{Our Results}
The goal of the present paper is to base Microcrypt primitives on meta-complexity.
As far as we know, this is the first time that quantum cryptographic primitives are characterized by meta-complexity.\footnote{There is a concurrent work \cite{cavalarmeta}. See \cref{sec:concurrent_work}.}
In the following, we explain each result.
Our results are also summarized in \cref{fig:graph}.

\paragraph{One-way puzzles.}
Our first result is a characterization of OWPuzzs with average-hardness of GapK.
\begin{theorem}\label{thm:main}
OWPuzzs exist if and only if GapK is weakly-quantum-average-hard. 
\end{theorem}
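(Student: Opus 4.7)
The plan is to prove both directions of the biconditional separately.

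For the forward direction ($\Rightarrow$), I assume OWPuzzs exist with sampler $\Samp$ and verifier $\Verify$, producing $(\puzz,\ans) \leftarrow \Samp(1^\lambda)$ with $|\puzz|=n(\lambda)$ and $|\ans|=m(\lambda)$, where $m \ll n$. I would exhibit a QPT-samplable distribution $D$ on $\{0,1\}^n$ on which GapK is weakly-quantum-average-hard. Concretely, $D$ is the equal mixture of (i) the marginal distribution of $\puzz$ under $\Samp$ and (ii) the uniform distribution on $\{0,1\}^n$. For appropriately chosen thresholds $t_1 < t_2$ (with $t_1 \approx m + O(\log\lambda)$ and $t_2 = t_1 + \omega(\log\lambda)$), strings in branch (i) are YES instances of GapK because $\puzz$ admits a short program encoding $\ans$ together with the description of $\Samp$ and enough auxiliary information to reconstruct $\puzz$ deterministically from $\ans$, while strings in branch (ii) are NO instances by a standard counting argument on short descriptions. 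If some QPT decided GapK on $D$ with mistake probability below $1/\poly(\lambda)$, I would derive a contradiction by reducing to OWPuzz security.

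For the reverse direction ($\Leftarrow$), I assume GapK is weakly-quantum-average-hard with hard QPT-samplable distribution $D$. The plan is to first build a weak OWPuzz and then apply known weak-to-strong amplification. Define $\Samp'(1^\lambda)$ to sample $x \leftarrow D$ together (in the YES case) with the short program witnessing low Kolmogorov complexity of $x$, and output $\puzz' = x$ and $\ans' = (\text{label},\text{witness})$, where the label is the correct GapK label of $x$. The verifier $\Verify'$ recomputes the label from the witness and checks consistency. Any QPT attacker recovering a valid $\ans'$ from $\puzz'$ with probability at least $1 - 1/\poly$ would in particular output the correct GapK label on $x \sim D$ with mistake probability below $1/\poly$, contradicting weak-quantum-average-hardness of GapK on $D$. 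Amplification then upgrades the weak OWPuzz into a standard one.

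The main obstacle is the search-to-decision step in the forward direction: converting a GapK decider into an algorithm that recovers $\ans$ from $\puzz$. In classical meta-complexity this is the crux of the corresponding results of Liu--Pass and Ilango--Ren--Santhanam, who use techniques such as the Goldreich--Levin hardcore lemma or careful prefix-finding to turn a low-K distinguisher into an inverter; here they must be adapted to a QPT adversary. A secondary subtlety is ensuring that the $\Samp$-branch of $D$ genuinely concentrates on strings of Kolmogorov complexity $\le t_1$ despite the (potentially long) randomness of $\Samp$; the plan is to address this by preprocessing $\Samp$ so that $\puzz$ becomes a (nearly) deterministic function of $\ans$ (for instance, by absorbing the fresh randomness into a longer $\ans$), or by routing the reduction through an intermediate primitive such as a distributional OWPuzz for which the short-description property is transparent.
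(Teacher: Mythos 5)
Both directions of your plan have genuine gaps, and they are essentially the two obstacles the paper's proof is organized around avoiding.

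For OWPuzzs $\Rightarrow$ wGapK, your YES-branch claim is unsupported: the puzzle marginal of an arbitrary OWPuzz need not concentrate on strings of small Kolmogorov complexity. Nothing forces $|\ans|\ll|\puzz|$, and since $\Samp$ is a \emph{quantum} sampler, $\puzz$ is not a deterministic function of $\ans$ and there is no classical randomness to ``absorb into a longer $\ans$''; indeed the puzzle distribution can be statistically close to uniform over $\bit^n$, in which case $K(\puzz)\approx n$ with high probability and your mixture distribution has no low-$K$ branch at all. Moreover, even granting the $K$-structure, a GapK decider on your mixture only yields a \emph{distinguisher} between the puzzle distribution and uniform, and distinguishing does not contradict the search-based security of a OWPuzz — this is exactly the search-to-decision obstacle you flag but do not resolve. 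The paper sidesteps both problems by first invoking the known equivalence OWPuzzs $\Leftrightarrow$ non-uniform QPRGs, amplifying the generator (parallel repetition plus padding) so that its output is $1-2^{-n^{\tau}}$-far from uniform, and then proving low $K$ on the generator branch by a counting/indexing argument (the strings of probability mass above a threshold form a set of size about $2^{n-n^\tau}$, so each is described by its index); for that primitive a GapK decider \emph{is} a distinguisher from uniform, which is precisely what breaks it, and guessing the $\log n$-bit advice is what makes the resulting hardness only weak ($1/\poly$).

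For wGapK $\Rightarrow$ OWPuzzs, your construction is not a valid OWPuzz because the honest sampler is not QPT-implementable: $\Samp'$ is required to output the correct GapK label of $x$ together with a short program witnessing $K(x)\le s_1$, but producing the label is exactly the task assumed to be hard on average (so correctness and security cannot both hold with an efficient sampler), and producing a shortest (or even any short) program for $x$ is not efficiently computable — the sampler only sees $x\la\cQ(1^n)$ and, being quantum, has no randomness transcript to hand over. The paper instead proves the contrapositive through an intermediate problem, quantum probability estimation: if OWPuzzs do not exist, then by the equivalence with distributional OWPuzzs there is a QPT next-bit extrapolator $\mathsf{Ext}$; repeatedly sampling $\mathsf{Ext}$ and multiplying empirical conditional probabilities gives a multiplicative-error estimate of $\Pr[x\la\cQ(1^n)]$, and thresholding that estimate decides GapK on average, because high-mass strings have small $K$ (index into the high-probability set) while low-mass strings have $K\ge s$ except with probability about $2^{-\Delta/3}$ (counting). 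Your appeal to weak-to-strong amplification is not the issue; the missing ingredients are the probability-estimation route (or some substitute for the uncomputable witness) in this direction, and the passage to an indistinguishability-style primitive (nuQPRGs) in the other.
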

OWPuzzs~\cite{STOC:KhuTom24} are a quantum analogue of OWFs, and one of the most fundamental primitives in quantum cryptography.
A OWPuzz is a pair $(\Samp,\Ver)$ of two algorithms. $\Samp$ is a quantum polynomial-time (QPT) algorithm that, on input the security parameter $1^n$, outputs
two classical bit strings, $\ans$ and $\puzz$. $\Ver$ is an unbounded algorithm that, on input $(\puzz,\ans')$, outputs $\top/\bot$.
Correctness requires that $\Ver$ accepts $(\puzz,\ans)\gets\Samp(1^n)$ with high probability.
Security requires that no QPT adversary that receives $\puzz$ can output $\ans'$ such that $(\puzz,\ans')$ is accepted by $\Ver$ with high probability.
OWPuzzs are implied by almost all primitives including PRUs, PRFSGs, PRSGs, OWSGs, SKE, MAC, private-key quantum money schemes, etc.~\cite{TQC:MorYam24,C:MorYam22,STOC:KhuTom24,BatJai24}.
OWPuzzs imply EFI pairs, commitments, multiparty computations, and quantum advantage~\cite{STOC:KhuTom24,EC:GLSV21,C:BCKM21b,MorShiYam24}.

GapK~\cite{IRS21} is a promise problem 
to decide whether a given classical bit string $x$ has a small Kolmogorov complexity or not.
Roughly speaking, a Kolmogorov complexity~\cite{Sol64,Kol68,Cha69} of a bit string $x$ is the length of 
the shortest program that outputs $x$. (For more details, see for example \cite{LV19}.)
Weakly-quantum-average-hard means that
an instance $x$ is sampled from a QPT samplable distribution,
and for any QPT adversary the probability that it makes mistake is larger than $1/\poly$.
(Here, the probability is taken over the sampling of the instance and the algorithm of the adversary.)

\paragraph{Quantum PRGs.}
Having characterized OWPuzzs, the next question is whether we can do the same for other Microcrypt primitives.
In paricular, OWPuzzs are a search-type primitive, and hence it is interesting to ask whether we can characterize decision-type primitives with meta-complexity.
Our second result is a meta-complexity lower-bound for quantum PRGs (QPRGs).
\begin{theorem}
If QPRGs exist, then GapK is strongly-quantum-average-hard.    
\end{theorem}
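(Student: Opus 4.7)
The plan is to exhibit a QPT samplable distribution on which solving GapK is essentially as hard as distinguishing QPRG outputs from uniform, so that any QPT solver whose error is bounded by $1/2 - 1/\poly$ would directly break QPRG security.

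First, I would fix a QPRG $G:\{0,1\}^n\to\{0,1\}^{m(n)}$ with stretch $m(n)-n=\omega(\log n)$ (available from any QPRG by iterated composition) and define the QPT sampler that flips a fair coin $b$: on $b=0$ it samples a uniform seed $s$ and outputs $x=G(s)$; on $b=1$ it outputs a uniform $x\in\{0,1\}^{m(n)}$. For the GapK thresholds I would pick $s_1(m)\geq n+O(1)$ and $s_2(m)=m-\omega(\log m)$. Every $G$-image satisfies $K(x)\leq |s|+O(1)$ (concatenate the seed with the constant-length description of $G$), so the $b=0$ branch always lands in the YES side; conversely, the standard counting bound $\Pr_{x\leftarrow\{0,1\}^m}[K(x)\leq k]\leq 2^{k+1-m}$ ensures the $b=1$ branch lands in the NO side except with negligible probability.

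Next, I would reduce to QPRG security by contradiction. Suppose some QPT algorithm $A$ solves GapK on this sampler with error at most $1/2-1/p(n)$ for some polynomial $p$, and write $\alpha=\Pr_s[A(G(s))=\mathsf{NO}]$ and $\beta=\Pr_{x\leftarrow\{0,1\}^m}[A(x)=\mathsf{NO}]$. Then the total error is at least $\tfrac{1}{2}\alpha+\tfrac{1}{2}(1-\beta)-\negl(n)$, where the negligible term absorbs the event that a uniform $x$ is accidentally a YES instance. Rearranging yields $\beta-\alpha\geq 2/p(n)-\negl(n)$, so $A$ itself is a QPT distinguisher between $G(U_n)$ and $U_{m(n)}$ with non-negligible advantage, contradicting QPRG security.

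The main point to get right, though not a serious obstacle, is the joint choice of thresholds: I need $s_2(m)-s_1(m)=\omega(\log m)$ so that both branches lie inside the promise with overwhelming probability, and simultaneously $s_1(m)\geq n+O(1)$ so that every $G$-image is genuinely a YES instance. Because the QPRG stretch can be made as large as any fixed polynomial, both conditions are easy to meet. A minor technical nuisance is that the QPT samplable distribution must be formally supported on promise instances; since only a negligible fraction of uniform strings violate the NO-side of the promise, this can either be absorbed into the $1/p(n)$ slack in the error calculation or removed by truncation without affecting QPT samplability.
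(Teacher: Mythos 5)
There is a genuine gap at the heart of your argument: you treat the QPRG as a \emph{seeded} classical-style generator $G:\bit^n\to\bit^{m(n)}$ and bound $K(G(s))\le |s|+O(1)$ by concatenating the seed with a description of $G$. But the paper's QPRG (\cref{def:uniform_qefid}) is a QPT algorithm $\Gen(1^n)$ that takes only $1^n$ as input and outputs an $n$-bit string; its randomness comes from quantum measurements, so there is no seed whose length bounds the Kolmogorov complexity of the output, and there is no notion of stretch (the output length equals $n$, and ``iterated composition'' has nothing to compose with). Consequently the step that places the pseudorandom branch inside $\cL_{\mathsf{Yes}}$ --- the only step that makes the reduction to indistinguishability meaningful --- has no justification for the object the theorem is actually about. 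Your final distinguishing calculation (relating the solver's error to $\beta-\alpha$) is fine and matches the paper's, but it is the easy half.

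The paper closes exactly this gap differently: it first boosts the statistical-farness of the QPRG to $1-2^{-n^{\tau}}$ by parallel repetition and padding (\cref{lem:far}), and then proves (\cref{claim:low}) that for \emph{any} sampler whose output distribution is exponentially far from uniform, all but a $G+2^{-n^{\tau}}$ fraction of the output mass sits on strings $x$ with $\Pr[x\la\Gen(1^n)]$ above a threshold; such strings form a set of size at most $\frac{1}{G}2^{n-n^{\tau}}$ that is enumerable from the constant-size code of $\Gen$, so each has $K(x)\le n-n^{\tau}+\log(1/G)+O(\log n)$. That enumeration-by-index argument, not a seed argument, is what certifies the YES side, and it is why the exponential farness (and hence \cref{lem:far}) is needed at all; with only the $1-\negl$ farness of the original QPRG the bound on $K$ would be far too weak to meet the threshold $s_1=n-n^{\epsilon}$. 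To repair your proof you would need to replace the seed-compression step with an argument of this kind (or restrict the theorem to classically-seeded PRGs, which is a strictly weaker statement than the one claimed).
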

A QPRG is a QPT algorithm that takes the security parameter $1^n$ as input and outputs a classical bit string.
Its output probability distribution is statistically far but computationally indistinguishable from a uniform distribution.
Strongly-quantum-average-hard is a stronger version of weakly-quantum-average-hard where
the probability that a QPT adversary makes mistake is larger than $1/2-1/\poly$.
QPRGs are a special case of QEFID and EFI. An EFI~\cite{ITCS:BCQ23} is a QPT algorithm that outputs two quantum states that are statistically far
but computationally indistinguishable. QEFID are a variant of EFI where the outputs are bit strings.
We left the problem of characterizing EFI and QEFID with meta-complexity open.
Another open problem is whether we can establish a meta-complexity upperbound for QPRGs.

\paragraph{Quantum advantage.}
Quantum advantage is another important application of OWPuzzs~\cite{MorShiYam24}.
We can base quantum advantage on meta-complexity.
As far as we know, this is the first time that quantum advantage is based on meta-complexity.
\begin{theorem}\label{thm:IV_POQ}
If GapK is weakly-classical-average-hard,    
then IV-PoQ exist.
\end{theorem}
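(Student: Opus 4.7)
The plan is to mirror the route from OWPuzzs to IV-PoQ with ``classical-security'' in place of ``quantum-security''. Specifically: (i) re-run the ``if'' direction of \cref{thm:main} with a PPT adversary to obtain what one may call a \emph{classically-secure} OWPuzz --- a pair $(\Samp,\Ver)$ with QPT sampler whose security only rules out PPT inverters; and then (ii) instantiate the known OWPuzz-to-IV-PoQ transformation of~\cite{MorShiYam24} with this weaker primitive. Since IV-PoQ soundness is defined against PPT cheating provers, classical security of the underlying OWPuzz is exactly the right notion to make the soundness reduction go through.

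For step (i), I would revisit the construction behind \cref{thm:main} and verify that its security reduction is \emph{type-preserving}: a PPT adversary breaking OWPuzz security should be turned into a PPT algorithm that solves GapK on average with inverse-polynomial advantage, contradicting weakly-classical-average-hardness. This requires checking that every intermediate step in the proof of \cref{thm:main} --- the basic construction of the OWPuzz sampler from the GapK sampler, and the weak-to-strong OWPuzz amplification --- stays in the classical world when the adversary is classical. Under weakly-classical-average-hardness of GapK, this yields a classically-secure OWPuzz. For step (ii), I would plug this OWPuzz into the IV-PoQ construction of~\cite{MorShiYam24}: the honest QPT prover runs $\Samp$ to obtain $(\puzz,\ans)$, interacts with the unbounded verifier, and the verifier checks $\Ver(\puzz,\ans)=\top$. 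Completeness is immediate from OWPuzz correctness; soundness follows because any PPT cheating prover would, by the~\cite{MorShiYam24} reduction, yield a PPT inverter for the OWPuzz, contradicting classical security.

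The main obstacle is the classical type-preservation in step (i): one must ensure that the arguments behind \cref{thm:main} --- which internally use Kolmogorov-complexity estimation and sampling procedures in the spirit of~\cite{IRS21} --- do not require any essentially quantum operation on the adversary's output during the reduction. This should hold because the reduction manipulates only classical bit strings and invokes the adversary in a black-box way, but it has to be checked line-by-line. A secondary point is confirming that the~\cite{MorShiYam24} transformation to IV-PoQ does not need more than PPT-security of the underlying OWPuzz, which is again plausible because IV-PoQ soundness itself quantifies only over PPT cheating provers. Once both ``classical-preservation'' checks are made, \cref{thm:IV_POQ} follows by composition.
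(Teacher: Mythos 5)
There is a genuine gap in step (i), and it is precisely the one the paper flags: your ``type-preservation'' check fails at the reduction from hardness of probability estimation to OWPuzzs. The paper's proof of \cref{thm:estimate} is by contraposition and its key ingredient is \cref{thm:dist_invert}, i.e., the equivalence between OWPuzzs and distributional OWPuzzs due to \cite{C:ChuGolGra24}: from ``OWPuzzs do not exist'' one obtains a QPT extrapolator $\mathsf{Ext}$, and from $\mathsf{Ext}$ one builds $\mathsf{Estimate}$. If you replace all adversaries by PPT ones, the hypothesis ``classically-secure OWPuzzs do not exist'' only gives you a PPT adversary winning the standard OWPuzz game; it is \emph{not} known that this yields a PPT extrapolator, because the equivalence between classically-secure OWPuzzs and classically-secure distributional OWPuzzs is open (the \cite{C:ChuGolGra24} argument is not known to be classical-type-preserving). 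So the line-by-line check you propose would stall exactly there, and you cannot conclude that weakly-classical-average-hardness of GapK yields a classically-secure OWPuzz. Your step (ii) is fine in principle, since \cite{MorShiYam24} shows classically-secure OWPuzzs are equivalent to IV-PoQ, but you never reach its hypothesis.

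The paper takes a different route that sidesteps the missing classical equivalence: it uses the characterization of \cite{MorShiYam24} that IV-PoQ exist iff the QAS/OWF condition (\cref{def:QAS/OWF}) holds, and proves weakly-classical-average-hard GapK $\Rightarrow$ classical-average-hardness of QPE (\cref{ass:c_AH_hardness_Estimate}) $\Rightarrow$ QAS/OWF. For the second implication (\cref{thm:c_estimate}), one assumes QAS/OWF fails and invokes \cref{lem:shirakawa}: the failure of quantum advantage gives a PPT sampler $\cS$ statistically close to the GapK instance sampler $\cQ$, and the failure of one-wayness gives a PPT distributional inverter $\cR$ for the classical function $f_n(r,i)=(i,$ first $i$ bits of $\cS(1^n;r))$; composing $\cR$ with $\cS$ yields a PPT extrapolator and hence a PPT $\mathsf{Estimate}$, contradicting classical QPE-hardness. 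In other words, the role that \cite{C:ChuGolGra24} plays in the quantum case is played here by the QAS/OWF structure, which is what makes the classical argument go through. If you want to salvage your outline, you would either have to prove the classical OWPuzz/distributional-OWPuzz equivalence (open) or reorganize the proof along the QAS/OWF route as the paper does.
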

Weakly-classical-average-hard is equivalent to weakly-quantum-average-hard except that
the adversary is PPT. (Note that the instance sampling algorithm is still QPT.)
Inefficient-verifier proofs of quantumness (IV-PoQ)~\cite{C:MorYam24} 
are a generalization of proofs of quantumness (PoQ)~\cite{JACM:BCMVV21} and capture various notions of quantum advantage such as sampling-based 
quantum advantage~\cite{NatPhys:BFNV19,STOC:AarArk11,TD04,BreJozShe10,BreMonShe16,FKMNTT18}
and searching-based one~\cite{CCC:AarChe17,AarGun19,STOC:Aaronson10,STOC:ACCGSW23,FOCS:Shor94,10.1145/3658665}.
An IV-PoQ is an interactive protocol over a classical channel between a verifier and a QPT prover.
During the interaction phase, the verifier is PPT, but after the interaction, the verifier becomes unbounded.
If the QPT prover honestly runs the protocol, the unbounded verifier accepts with high probability, but
for any PPT prover, the unbounded verifier does not accept except for a small probability.
It was shown recently~\cite{MorShiYam24} that IV-PoQ are existentially equivalent to
classically-secure OWPuzzs. 
This in particular means that
OWPuzzs imply IV-PoQ.
Our result, for the first time, bases quantum advantage on meta-complexity.
It is an open problem whether any meta-complexity lower bound for quantum advantage can be established.

\paragraph{Quantum probability estimation.}
\cref{thm:main} shows that average-hardness of GapK implies OWPuzzs.
However, technically, we do not directly construct OWPuzzs from the hardness of GapK.
We introduce a new problem, which we call quantum probability estimation (QPE),
and show the following two results from which GapK$\Rightarrow$OWPuzzs is obtained.
\begin{theorem}
If GapK is weakly-quantum-average-hard, then QPE is quantum-average-hard.
\end{theorem}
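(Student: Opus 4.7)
The plan is to argue the contrapositive: if QPE is not quantum-average-hard, then GapK is not weakly-quantum-average-hard. Concretely, let $S^*$ be the QPT sampler that witnesses the assumed weak-average-hardness of GapK, and consider the QPE instance sampler that outputs the pair $(S^*, x)$ with $x \sim S^*$. Suppose for contradiction there is a QPT adversary $\cB$ that, on input $(S^*, x)$, outputs with probability $1-1/\poly(n)$ a multiplicative estimate $\hat{p} = (1 \pm 1/\poly(n)) \cdot S^*(x)$. I would use $\cB$ to decide GapK on $x \sim S^*$ by thresholding $-\log \hat{p}$ at a value placed inside the promise gap.

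The correctness of this reduction rests on showing a classical coding-theorem equivalence $K(x) = -\log S^*(x) \pm O(\log n)$ for all but a $1/\poly(n)$ fraction of $x \sim S^*$. The upper bound $K(x) \le -\log S^*(x) + O(\log n)$ holds for every $x$ in the support via a Shannon--Fano--Elias prefix code that assigns $x$ the first $\lceil -\log S^*(x)\rceil + 1$ bits of the mid-CDF; decoding enumerates the support and classically simulates $S^*$ in unbounded time, which is fine for plain Kolmogorov complexity, and since $S^*$ is a uniform QPT machine the description overhead is $O(\log n)$ (dominated by specifying the security parameter). For the matching lower bound, using the standard counting bound $|\{x : K(x) = k\}| \le 2^{k+1}$,
\begin{align*}
\Pr_{x \sim S^*}\!\bigl[K(x) \le -\log S^*(x) - t\bigr]
&\le \sum_{k=0}^{n+O(1)} \ \sum_{\substack{x\,:\,K(x)=k \\ S^*(x) \le 2^{-k-t}}} S^*(x) \\
&\le (n+O(1)) \cdot 2^{1-t},
\end{align*}
which is $1/\poly(n)$ once $t = \omega(\log n)$.

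Combining the two inequalities with a union bound over the failure event of $\cB$, for all but a $1/\poly(n)$ fraction of $x \sim S^*$ we have $|{-\log \hat{p}} - K(x)| = O(\log n)$, so thresholding in the middle of a GapK gap of width $\omega(\log n)$ returns the correct YES/NO answer. This contradicts the weak-average-hardness of GapK on $S^*$ and completes the proof.

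The main obstacle will be carefully chaining the error parameters: the multiplicative accuracy of $\cB$, passed through $\log$, must be additively dominated by the $O(\log n)$ coding-theorem slack, which in turn must fit strictly inside the GapK gap. A secondary technicality is that the Shannon--Fano--Elias decoder has to simulate the quantum sampler $S^*$ to arbitrary precision to compute its output probabilities exactly, but only classical exponential-time simulation is needed, which is admissible for the plain (time-unbounded) Kolmogorov complexity used in GapK.
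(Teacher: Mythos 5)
Your proposal is correct and follows essentially the same route as the paper's proof: argue the contrapositive, run the QPE estimator on the output of the GapK-hardness sampler and threshold $-\log\hat{p}$ inside the promise gap, use the $2^{k+1}$ counting bound to show that strings whose sampling probability is much below $2^{-K(x)}$ occur with only $\mathrm{poly}(n)\cdot 2^{-t}$ probability, and use a coding-type bound to show every sufficiently probable string has $K(x)\le -\log\Pr[x\la\cQ(1^n)]+O(\log n)$. The only cosmetic differences are that the paper derives the latter bound by indexing $x$ within the set of strings whose probability exceeds a fixed threshold (rather than via a Shannon--Fano--Elias code), and that the negation of the QPE assumption only supplies a constant multiplicative factor $c$ rather than $1\pm 1/\mathrm{poly}(n)$ accuracy --- which, as your parameter discussion already indicates, is harmless since it contributes only an $O(1)$ additive term after taking logarithms.
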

\begin{theorem}
If QPE is quantum-average-hard, then OWPuzzs exist.
\end{theorem}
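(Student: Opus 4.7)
The plan is to construct a OWPuzz $(\Samp, \Ver)$ directly from the QPT sampler underlying the presumed-hard QPE distribution. Let $D$ denote that sampler, and recall that QPE, on an instance $x \leftarrow D(1^n)$, asks an efficient algorithm to output a probability-witness associated with $x$ (e.g.\ an estimate of $\Pr[D(1^n)=x]$ or a short certificate of its value).

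First, I would define $\Samp(1^n)$ to internally execute $D(1^n)$ using randomness $r$, set $\puzz \leftarrow x$ (the QPE instance thus produced), and set $\ans$ to be a correct QPE-witness computable from $r$. The key design principle is that $\Samp$ has white-box access to the very coins used to produce $\puzz$, so it can output a correct $\ans$ in QPT even though a QPE solver, given only $\puzz$, could not. Next, I would define $\Ver(\puzz,\ans')$ to exploit its unbounded power: it computes the exact probability $\Pr[D(1^n)=\puzz]$ and accepts if and only if $\ans'$ matches this value to the precision demanded by QPE. Correctness is immediate since the honest $\ans$ produced by $\Samp$ is, by construction, a valid QPE-witness for $\puzz$.

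For security I would argue by contrapositive: suppose some QPT adversary $\cA$ inverts the OWPuzz with non-negligible probability. I build a QPT algorithm $\cB$ for QPE that, on input an instance $x$, simply runs $\cA(x)$ and outputs whatever $\ans'$ it returns. Because $\Ver$ accepts precisely the correct QPE-witnesses, $\cA$'s success probability on $\puzz \leftarrow \Samp(1^n)$ translates directly into $\cB$'s success probability on $x \leftarrow D(1^n)$, contradicting quantum-average-hardness of QPE. I would also need to observe that the distribution of $\puzz$ produced by $\Samp$ exactly matches the hard QPE input distribution, which holds by construction.

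The main obstacle will be matching the hardness parameters and ensuring that $\Samp$ really can output a valid $\ans$ in QPT. The first point depends on how tightly quantum-average-hardness of QPE is phrased: OWPuzz security requires negligible inverter success, whereas average-hardness typically permits $1-1/\poly$ success, so the reduction must either invoke a tight version of QPE-hardness or be accompanied by a parallel-repetition amplification of the OWPuzz. The second point is the crux of the whole approach: QPE must be formulated so that its witness is recoverable from the sampler's internal randomness (so $\Samp$ succeeds trivially) yet not recoverable from $\puzz$ alone (so the QPE-hardness is actually used). Getting this asymmetry right—which is ultimately why QPE was introduced as an intermediate step between GapK and OWPuzzs—is the decisive design choice on which the proof hinges.
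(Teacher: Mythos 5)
There is a genuine gap, and it sits exactly where you flagged it: the claim that $\Samp$ can output a correct QPE answer ``computable from $r$'' in QPT. In this paper QPE has no witness notion at all --- an answer is simply a number $v$ with $\frac{1}{c}\Pr[x\la\cQ(1^n)]\le v\le c\Pr[x\la\cQ(1^n)]$. Knowing the particular randomness (or, for a genuinely quantum sampler, the particular measurement outcomes) of the one execution that produced $x$ gives essentially no information about $\Pr[x\la\cQ(1^n)]$, which aggregates over \emph{all} executions; at best you learn a lower bound of the form $2^{-|r|}$. So the honest sampler is in exactly the same position as the adversary: estimating this probability is the hard task, and there is no trapdoor. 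Consequently your $\Samp$ cannot produce $\ans$ efficiently, correctness fails, and the ``asymmetry'' your security reduction relies on does not exist. Reformulating QPE so that a witness is recoverable from the coins but hidden given $x$ is not a parameter tweak --- it would be a different assumption, and you would then owe a proof that hardness of the paper's QPE implies hardness of that reformulation, which is the actual content of the theorem. (The parameter mismatch you mention, $1-1/\poly$ versus negligible, is a second real issue, but it is secondary to this one.)

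The paper avoids the direct construction entirely and argues the contrapositive: if OWPuzzs do \emph{not} exist, then by the equivalence of OWPuzzs and distributional OWPuzzs (\cite{C:ChuGolGra24}, \cref{thm:dist_invert}) there is a QPT extrapolator $\mathsf{Ext}$ that, given the first $i-1$ output bits of $\cQ(1^n)$, samples the $i$-th bit almost correctly. Running $\mathsf{Ext}$ polynomially many times per position and multiplying the empirical conditional frequencies yields a QPT $\mathsf{Estimate}$ achieving a multiplicative $c$-approximation of $\Pr[x\la\cQ(1^n)]$ with probability $1-1/n^q$ (the analysis combines a bound showing the true conditional probabilities are rarely tiny, a Markov argument transferring statistical closeness to per-prefix additive accuracy, and Hoeffding for the empirical counts). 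This refutes quantum-average-hardness of QPE. If you want to salvage your write-up, this telescoping-conditional-probabilities estimator built from a distributional inverter is the missing idea; the direct ``sampler-knows-the-answer'' route cannot be made to work for QPE as defined.
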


QPE is, roughly speaking, the task of computing 
$\Pr[x\gets \cQ(1^n)]$ 
within a multiplicative-error 
given $x$ and a classical description of a QPT algorithm $\cQ$ that outputs $n$-bit strings on input $1^n$.
Its quantum-average-hardness
means that an instance $x\in\bit^n$ is sampled by $\cQ(1^n)$,
and for any QPT adversary the probability that it outputs a correct estimate
is less than $1-1/\poly(n)$.

The problem QPE, and the assumption of its average-hardess
themselves seem to be of independent interest, and will be useful in other applications.\footnote{In fact, in a concurrent paper \cite{cryptoeprint:2024/1490},
Khurana and Tomer showed that hardness of QPE is implied by the assumption previously studied in the field of sampling-based
quantum advantage (plus $\mathbf{P}^{\#\mathbf{P}}\not\subseteq(io)\mathbf{BQP/qpoly}$).}
\if0
\taiga{In fact, in the classical case, the equivalence between OWFs and the classical-average-hardness of classical probability estimation (CPE) problem~\cite{FOCS:ImpLev90} is a central technique to establish several important results obtained recently.
For example, the technique is used to show the equivalence between classical cryptography and the failure of average-case symmetry of information~\cite{STOC:HILNO23}, the failure of coding theorem for time-bounded Kolmogorov complexity, and the average-case hardness of learning programs~\cite{FOCS:HirNana23}.
We make an important step towards obtaining quantum generalizations of these results.}
\fi

\if0
\mor{[Behera, Malavolta, Morimae, Yamakawa] showed that QEFID will not imply OWSGs.
This means that quantum-average-hardness of GapK problem will not imply OWSGs.
OWSGs do not seem to imply quantum-average-hardness of GapK, so 
the assumption of quantum-average-hardness of GapK
and OWSGs are imcomparable assumptions. Hence we have constructed OWPuzzs from another assumption that seems to be incomparable with OWSGs.}
\mor{QEFID will not imply OWFs. So this result suggests that GapK will not imply OWFs.}
\mor{QOWF imply GapK and GapK imply OWPuzz. This means that even if P=NP, OWPuzz exist. This is not implied by the known result of Luowen which is even if P=NP, 1-PRSG exist}
\fi

\if0
We believe that our assumption 
will not imply OWFs, because of the following argument.\footnote{See also \cref{sec:discussion}.}
In the classical case, classical-average-hardness of GapK problem
is equivalent to the existence of OWFs~\cite{IRS21}.
With a similar proof, we can show that
quantum-average-hardness of GapK problem is implied by the existence of (quantumly-secure) {\it quantum OWFs} (QOWFs), 
which are OWFs with {\it quantum} evaluations.\footnote{For a OWF $f$, $f(x)$ is computed in classical deterministic polynomial time. For a QOWF $f$, $f(x)$ is computed in quantum almost-deterministic polynomial time.}\footnote{In the classical case, OWFs are constructed from classical-average-hardness of GapK problem, but we do not know how to apply the proof technique to the quantum case.}
If our assumption implies OWFs, then QOWFs imply OWFs.
However, it seems unlikely that QOWFs imply OWFs. 

As far as we know, this is the first time that a ``Microcrypt'' primitive (i.e., OWPuzzss) is  
instanciated with concrete hardness assumptions that do not seem to imply OWFs.
Moreover, the assumptions are studied in other contexts than cryptography, especially in the field of meta-complexity.

Finally, we note that during the preparation of this manuscript,
Khurana and Tomer uploaded a concurrent work~\cite{cryptoeprint:2024/1490}. See \cref{sec:concurrent_work}.
\fi


\usetikzlibrary{positioning} 
\usetikzlibrary{calc} 
\usetikzlibrary {quotes}
\tikzset{>=latex} 

\tikzstyle{mysmallarrow}=[->,black,line width=1.6]
\tikzstyle{mydbarrow}=[<->,black,line width=1.6]
\tikzstyle{newarrow}=[<->,red,line width=1.6]
\tikzstyle{newsinglearrow}=[->,red,line width=1.6]
\tikzstyle{carrow}=[<->,red,line width=1.6]
\begin{figure}
\begin{center}
    \begin{tikzpicture}[scale=0.9,every edge quotes/.style = {font=\footnotesize,fill=white}]
      \def\h{-2.0} 
      \def\w{2.6} 

        \node[] (sharpP) at (2.9*\w,1*\h) {$\mathbf{P}^{\# \mathbf{P}}\not\subseteq \mathbf{(io)BQP/qpoly}$};
        \node[] (PH) at (0*\w,1.6*\h) {$\mathbf{P}^{\# \mathbf{P}}\not\subseteq \mathbf{(io)BPP}^{\bf NP}$};
        \node[] (dummy2) at (0.8*\w,2.5*\h) {};
        \node[] (QAA) at (0.8*\w,0.5*\h) {Quantum advantage assumption};
        \node[] (QPE) at (2*\w,3*\h) {QPE};
        \node[] (dummy) at (2*\w,2*\h) {};
       \node[] (QPRGs) at (4.2*\w,0.5*\h) {QPRGs};
       \node[] (QEFID) at (4.2*\w,1.5*\h) {QEFID};
       \node[] (sGapK) at (5.3*\w,1.0*\h) {sGapK};
       \node[] (OWPuzzs) at (4.2*\w,3*\h) {OWPuzzs};
       \node[] (nuQPRGs) at (5.5*\w,4*\h) {nuQPRGs};
        \node[] (IV-PoQ) at (2.5*\w,5*\h) {IV-PoQ};
        \node[] (QASs) at (1.5*\w,4*\h) {QASs};
        \node[] (Samp) at (0.8*\w,5*\h) {$\mathbf{SampBPP}\neq\mathbf{SampBQP}$};
        \node[] (OWForSamp) at (2.5*\w,6*\h) {qs-OWFs or $\mathbf{SampBPP}\neq\mathbf{SampBQP}$};
        \node[] (wGapK) at (6*\w,1.5*\h) {wGapK};
        \node[] (csOWPuzzs) at (4.2*\w,5*\h) {cs-OWPuzzs};
        \node[] (cwGapK) at (6.3*\w,4.3*\h) {cwGapK};
        \node[] (QAS/OWF) at (6*\w,5*\h) {QAS/OWF};


        \draw[mysmallarrow=black] (sharpP) edge[] (dummy);
        \draw[mysmallarrow=black] (QAA) edge[] (dummy);
        \draw[mysmallarrow=black] (dummy) edge["\cite{cryptoeprint:2024/1490}"] (QPE);
        \draw[newsinglearrow=red] (QPE) edge[] (OWPuzzs);
        \draw[mysmallarrow=black] (OWPuzzs) edge[] (csOWPuzzs);
        \draw[newsinglearrow=red] (wGapK) edge[] (QPE);
        \draw[newsinglearrow=red] (cwGapK) edge[] (QAS/OWF);
        \draw[mydbarrow=black] (IV-PoQ) edge["\cite{MorShiYam24}"] (csOWPuzzs);
        \draw[mysmallarrow=black] (IV-PoQ) edge["\cite{MorShiYam24}"] (OWForSamp);
        \draw[mysmallarrow=black] (QASs) edge[] (IV-PoQ);
        \draw[mysmallarrow=black] (QASs) edge[] (Samp);
        \draw[mysmallarrow=black] (dummy2) edge["\cite{STOC:AarArk11,BreMonShe16}"] (Samp);
        \draw[mysmallarrow=black] (dummy2) edge["\cite{MorShiYam24}"] (QASs);
        \draw[mysmallarrow=black] (QAA) edge[] (dummy2);
        \draw[mysmallarrow=black] (PH) edge[] (dummy2);
        \draw[mysmallarrow=black] (QPRGs) edge[] (QEFID);
        \draw[mysmallarrow=black] (QEFID) edge[] (OWPuzzs);
        \draw[mysmallarrow=black] (sGapK) edge[] (wGapK);
        \draw[mysmallarrow=black] (wGapK) edge[] (cwGapK);
        \draw[newsinglearrow=red] (QPRGs) edge[] (sGapK);
        \draw[mydbarrow=black] (nuQPRGs) edge["\cite{STOC:KhuTom24,C:ChuGolGra24}"] (OWPuzzs);
        \draw[mydbarrow=black] (QAS/OWF) edge["\cite{MorShiYam24}"] (csOWPuzzs);
        \draw[newsinglearrow=red] (nuQPRGs) edge[] (wGapK);
    \end{tikzpicture}
\end{center}
\caption{A summary of results. Black lines are known results or trivial implications.
Red lines are new in our work. 
The concurrent work \cite{cryptoeprint:2024/1490} showed QPE$\Leftrightarrow$OWPuzzs.
The concurrent work \cite{cavalarmeta} showed OWPuzzs$\Rightarrow$wGapK, wGapK$\Rightarrow$QPE, and QPE$\Rightarrow$OWPuzzs.
``qs'' stands for quantumly-secure, and ``cs'' stands for classically-secure.
``nu'' stands for non-uniform.
sGapK means strongly-quantum-average-hardness of GapK.
wGapK means weakly-quantum-average-hardness of GapK.
cwGapK means weakly-classical-average-hardness of GapK.
Quantum advantage assumption is the combination of so-called ``$\#\mathbf{P}$-hardness assumption'' (computing some functions within multiplicative errors
is $\#\mathbf{P}$-hard on average) and so-called ``anti-concentration''~\cite{STOC:AarArk11,BreMonShe16}.)
}
\label{fig:graph}
\end{figure}

\subsection{Technical Overview}
In this subsection, we provide high-level overview of our proofs. 

\paragraph{OWPuzzs from weakly-quantum-average-hardness of GapK.}
$\mathsf{GapK}[s_1,s_2]$ is the following promise problem:
Given a bit string $x$, decide $K(x)\le s_1$ or $K(x)\ge s_2$,
where $K(x)$ is the Kolmogorov complexity of $x$, i.e., the length of the shortest program that a universal Turing machine outputs $x$.
Weakly-quantum-average-hardness means that an instance $x$ is sampled from a QPT samplable distribution, and
for any QPT adversary, the probability that it makes mistake is larger than $1/\poly$.
(Here, the probability is taken over the sampling of the instance and the algorithm of the adversary.)
More precisely, we require that there exist an integer $k>0$ and a QPT algorithm $\cQ$ 
(that takes $1^n$ as input and outputs $n$-bit strings)
such that, for any QPT adversary $\cA$,
  \begin{align}
        \Pr_{x\la\cQ(1^n)}[ \mathsf{no}\la \cA(x) \wedge x\in\cL_{\mathsf{Yes}} ] +\Pr_{x\la\cQ(1^n)}[\mathsf{yes}\la\cA(x)\wedge x\in\cL_{\mathsf{No}} ]\geq \frac{1}{n^k}
    \end{align}
    for all sufficiently large $n\in\N$, 
    where $\cL_{\mathsf{Yes}}$ (resp. $\cL_{\mathsf{No}}$) is the set of yes (resp. no) instances of $\mathsf{GapK}[s_1,s_2]$.\footnote{A bit string $x$ is a yes (no) instance if $K(x)\le s_1$ ($K(x)\ge s_2$).}
    

We do not directly show the existence of OWPuzzs from weakly-quantum-average-hardness of GapK. 
We introduce another problem, which we call quantum probability estimation (QPE), and
assume its quantum-average-hardness.
We first show that weakly-quantum-average-hardness of GapK implies quantum-average-hardness of QPE, and then
we show that quantum-average-hardness of QPE implies OWPuzzs.

A QPE is a problem of computing $\Pr[x\gets\cQ(1^n)]$ within a multiplicative error given $x$ and a classical description of
a QPT algorithm $\cQ$.
Its quantum-average-hardness means that $x$ is sampled from $\cQ(1^n)$ and for any QPT adversary,
the probability that it outputs a correct estimate is smaller than $1-1/\poly(n)$.
More precisely, 
we require that there exist a real $c>1$, an integer $q>0$, and a QPT algorithm $\cQ$
(that takes $1^n$ as input and outputs $n$-bit strings)
such that, 
for any QPT algorithm $\mathsf{Estimate}$, 
    \begin{align}
        \Pr_{x\la\cQ(1^n)}\left[\frac{1}{c}\Pr[x\la\cQ(1^n)]\leq \mathsf{Estimate}(x)\leq c \Pr[x\la\cQ(1^n)]\right]\leq 1-\frac{1}{n^{q}}
    \end{align}
for all sufficiently large $n\in\N$.

Let us first explain how to construct OWPuzzs from quantum-average-hardness of QPE.
Let $\cQ$ be the instance sampling QPT algorithm for quantum-average-hardness of QPE. 
\cite{C:ChuGolGra24} showed the equivalence between OWPuzzs and distributional OWPuzzs,
which means that if OWPuzzs do not exist, then distributional OWPuzzs do not exist as well.
If distributional OWPuzzs do not exist, then
there exists a QPT extrapolation algorithm $\mathsf{Ext}$ that, given $(x_1,...,x_{i-1})\gets \cQ(1^n)$, can sample $x_i$, such that
the statistical distance between $(x_1,...,x_i)_{(x_1,...,x_i)\gets\cQ(1^n)}$ 
and $(x_1,...,x_{i-1},\mathsf{Ext}(x_1,...,x_{i-1}))_{(x_1,...,x_{i-1})\gets\cQ(1^n)}$
is small for all $i\in[n-1]$. Then, by repeatedly running $\mathsf{Ext}$, we can compute an approximation of $\Pr[(x_1,...,x_n)\gets\cQ(1^n)]$.
This means that quantum-average-hardness of QPE does not hold.

Next, let us explain how quantum-average-hardness of QPE is derived
from weakly-quantum-average-hardness of GapK.\footnote{This was essentially shown in \cite{IRS21}, but we slightly improved the parameters for our purpose.
}
Assume that
quantum-average-hardness of QPE does not hold.
Then, there is a QPT algorithm $\mathsf{Estimate}$ that can estimate output probability of any QPT distribution.
We construct a QPT algorithm $\cA$ that solves GapK as follows: 
Let $\cQ$ be a QPT algorithm that samples the instance of GapK.
Given $x\gets\cQ(1^n)$, run $\mathsf{Estimate}(x)$ and output yes if the approximation of $\Pr[x\gets \cQ(1^n)]$
is larger than a certain threshold.
Intuitively, $K(x)$ is small (resp. large) if $\Pr[x\gets\cQ(1^n)]$ is large (resp. small) with high probability over $x\la\cQ(1^n)$\footnote{If $\Pr[x\gets\cQ(1^n)]$ is large, $x$ can be taken from the support of $\cQ(1^n)$, which allows
a shorter program to output $x$. 
On the other hand, because there are at most $2^{s+1}$ strings $x\in\bit^n$ that satisfy $K(x)<s$, the probability that 
$\cQ(1^n)$ outputs $x$ such that both $K(x)<s$ and $\Pr[x\la\cQ(1^n)]<\alpha$ is at most $2^{s+1}\alpha$.
Therefore, if $\Pr[x\la\cQ(1^n)]$ is sufficiently small, then $K(x)$ must be large with high probability over $x\la\cQ(1^n)$. },
and therefore $\cA$ can correctly decide whether $K(x)$ is small or large.

\paragraph{Strongly-quantum-average-hardness of GapK from QPRGs.}
Next let us explain how to show strongly-quantum-average-hardness of GapK from the existence of QPRGs.
Strongly-quantum-average-hardness is the same as weakly-quantum-average-hardness explained above
except that the probability that the adversary makes a mistake is larger than $1/2-1/\poly$.
A QPRG is a QPT algorithm $\Gen$ that takes the security parameter $1^n$ as input 
and outputs $n$-bit strings whose distribution is statistically far but computationally indistinguishable from the uniform distribution
over $\bit^n$.
Our key observation is that if the output distribution of
$\Gen(1^n)$ is exponentially statistically far\footnote{We say that a family $\{D_n\}_{n\in\mathbb{N}}$ 
of distributions where $D_n$ is a distribution over $\bit^n$
is exponentially statistically far from a family $\{E_n\}_{n\in\mathbb{N}}$ of distributions where $E_n$ is a distribution over $\bit^n$, 
if for some real $0<\tau<1$, $\mathsf{SD}((x)_{x\la D_n}, (x)_{x\la E_n})>1-2^{-n^{\tau}}$ for all sufficiently large $n\in\mathbb{N}$. 
} from the uniform distribution over $\bit^n$, then 
the Kolmogorov complexity $K(x)$ of $x$ is small with high probability over $x\gets\Gen(1^n)$. 
In that case, it is easy to show strongly-quantum-average-hardness of GapK as follows.
Let $\cQ$ be a QPT algorithm that outputs $x\la \Gen(1^n)$ with probability $1/2$ and a uniformly random $n$-bit string $x$ with probability $1/2$.
If there is a QPT algorithm that
can decide if $K(x)$ is large or small with probability larger than $1/2+1/\poly(n)$ over the distribution $x\gets\cQ(1^n)$, 
then we can construct a QPT algorithm that can distinguish the distribution of $\Gen(1^n)$ from the uniform distribution over $\bit^n$,
which breaks the QPRG.

However, in general, a QPRG does not satisfy such an exponentially-statistically-far property. 
Fortunately, we can show that if QPRGs exist, then
QPRGs with the exponentially-statistically-far property exist
by the parallel repetition and the padding argument.

\paragraph{Weakly-quantum-average-hardness of GapK from OWPuzzs.}
This is obtained by showing that non-uniform QPRGs (nuQPRGs) imply weakly-quantum-average-hardness of GapK,
because nuQPRGs exist if and only if OWPuzzs exist~\cite{STOC:KhuTom24,C:ChuGolGra24}.

nuQPRGs are a non-uniform version of QPRGs:
A nuQPRG is a QPT algorithm $\Gen$ that takes the security parameter $1^n$ and an advice bit string $\mu$ of $\log(n)$ length as input,
and outputs $n$-bit strings.
It satisfies that for a $\mu^*\in [n]$, the distribution $\Gen(1^n,\mu^*)$ is statistically far but computationally indistinguishable from the uniform distribution
over $\bit^n$.

The proof of showing weakly-quantum-average-hardness of GapK from nuQPRGs is similar to that of
strongly-quantum-average-hardness of GapK from QPRGs.
As we have already explained above,
in the proof of showing strongly-quantum-average-hardness of GapK from QPRGs,
we define a QPT algorithm $\cQ$ that outputs $x\gets\Gen(1^n)$ with 1/2 and outputs random $n$-bit string with probability 1/2,
and show that $\cQ$ works as an instance sampling algorithm for which GapK is strongly-quantum-average-hard.
In order to show 
weakly-quantum-average-hardness of GapK from nuQPRGs,
a difference is that $\cQ$ additionally samples an advice string $\mu\la [n]$ 
and outputs $x\la \Gen(1^n,\mu)$ with probability $1/2$ and a random $n$-bit string with probability $1/2$.
Because the good advice $\mu^*$ is obtained with probability $1/n$, weakly-quantum-average-hardness of GapK is satisfied.

\if
We consider the distribution $\cD$ that randomly samples $\mu^*$ and outputs $x\la G(\mu^*)$ with probability $1/2$ and outputs uniform random string $x$ with probability $1/2$.
If we can check if $K(x)$ is high or small, then we can break the security of non-uniform QPRG $G$ because with inverse polynomial probability $G(\mu^*)$ is statistically far from uniform distribution, and hence $K(x)$ is low over the distribution $G(\mu^*)$.
\fi

\paragraph{IV-PoQ from weakly-classical-average-hardness of GapK.} 
From the result of \cite{MorShiYam24}, IV-PoQ exist if and only if the QAS/OWF condition holds. Therefore we show
that
if weakly-classical-average-hardness of GapK is satisfied, then the QAS/OWF condition is satisfied.

Informally, if the QAS/OWF condition is satisfied, then quantum advantage samplers (QASs) exist or OWFs exist.
Here a QAS is QPT algorithm that outputs classical bit strings whose distribution
cannot be sampled with any PPT algorithm.
For contradiction, assume that the QAS/OWF condition is not satisfied,
which roughly means that both QASs and OWFs do not exist.
Let $\cQ$ be a QPT algorithm that samples the instance $x$ of GapK. Because QASs do not exist,
there exists a PPT algorithm $\cS$ that approximately samples $\cQ$.
Moreover, because OWFs do not exist, we can construct a PPT algorithm $\mathsf{Estimate}$ that estimates $\Pr[x\la\cS(1^n)]\approx\Pr[x\la\cQ(1^n)]$.
As we have explained above, if we can estimate the value of $\Pr[x\la\cQ(1^n)]$ then we can solve GapK.

As is shown in \cite{MorShiYam24}, classically-secure OWPuzzs exist if and only if IV-PoQ exist.
Therefore, readers might think that we can directly show that weakly-classical-average-hardness of GapK implies classically-secure OWPuzzs
(and hence IV-PoQ)
by replacing all QPT adversaries with PPT ones in the proof of OWPuzzs from weakly-quantum-average-hardness of GapK.
However, we do not know how to do that, because
we do not know whether classically-secure OWPuzzs are equivalent to classically-secure distributional OWPuzzs unlike the
quantumly-secure case~\cite{C:ChuGolGra24}.

\if0
\subsection{Discussion}
\label{sec:discussion}
In this paper, we construct OWPuzzs from the assumption, quantum-average-hardness of GapK problem.
The assumption requires the existence of a QPT samplable distribution family $\cD$ whose outputs are hard instances on average,
but the assumption itself holds even if $\cD$ is {\it PPT} samplable, because PPT is a special case of QPT.
However, in that case, the assumption implies OWFs as well, which is not interesting in the present context.
Hence if we require the assumption to be ``OWFs-free'', we need an additional condition that the distribution family $\cD$ that satisfies the assumption
should not be PPT samplable. In other words, we need the existence of quantum advantage to construct
(OWFs-free) OWPuzzs.\footnote{This observation was also given in \cite{cryptoeprint:2024/1490}, and can also be obtained from the equivalence between inefficient-verifier proofs of quantumness 
(IV-PoQ) and classically-secure OWPuzzs~\cite{MorShiYam24}.}
It is then an interesting open problem whether we can construct ``Microcrypt'' primitives solely from quantum advantage.\footnote{\cite{cryptoeprint:2024/1490} constructed OWPuzzs not from quantum advantage but
from assumptions that imply quantum advantage. \cite{MorShiYam24} constructed a crypto primitive from quantum advantage, but the primitive is only classically-secure.}
Because quantum advantage restricts only classical power and does not say anything about the upper bound of quantum power,
quantumly-secure crypto primitives solely from quantum advantage seem to be unlikely,
but as far as we know, there is no proof that prohibits it.
\fi

\subsection{Concurrent Works}
\label{sec:concurrent_work}
There are two concurrent works, \cite{cryptoeprint:2024/1490,cavalarmeta}. We explain relations between
our results and these concurrent works.

\paragraph{Relation to \cite{cryptoeprint:2024/1490}.}
During the preparation of the first version of this manuscript, Khurana and Tomer uploaded
\cite{cryptoeprint:2024/1490}.
They introduce two hardness assumptions, Assumption 1 and Assumption 2.
Assumption 1 is an assumption often studied in the field of quantum advantage
to derive
sampling-based quantum advantage.
They show that Assumption 1 plus a mild complexity assumption,
$\mathbf{P}^{\# \mathsf{P}}\not\subseteq \mathbf{(io)BQP}/\mathbf{qpoly}$,
imply Assumption 2.
They then show that Assumption 2 is equivalent to the existence of OWPuzzs.

Their Assumption 2 is the same as our assumption, \cref{ass:AH_hardness_Estimate},
the average-hardness of QPE.
Moreover, their proof of ``Assumption 2 $\Rightarrow$ OWPuzzs'' is similar to our proof of ``\cref{ass:AH_hardness_Estimate} $\Rightarrow$ OWPuzzs''.

In addition, their Assumption 2 allows quantum advice for adversaries. In this paper we do not explicitly consider quantum advice, but we believe
that our proofs can be straightforwardly extended to the case with quantum advice.

\paragraph{Relation to \cite{cavalarmeta}.}
After uploading the first version of this paper on arXiv, Cavalar, Goldin, Gray, and Hall sent us their manuscript (that was later uploaded \cite{cavalarmeta}).
In the first version of our paper, we showed only the direction of wGapK$\Rightarrow$OWPuzzs, while their manuscript
shows the both directions wGapK$\Leftrightarrow$OWPuzzs.
When they contacted us, we had shown QPRGs$\Rightarrow$sGapK, and planned to add the result to the revised version.
We later realized that with almost the same proof, this also means nuQPRGs$\Rightarrow$wGapK, which means
OWPuzzs$\Rightarrow$wGapK. Our proof for OWPuzzs$\Rightarrow$wGapK is different from theirs.

\section{Preliminaries}
\subsection{Basic Notations}
We use the standard notations of cryptography and quantum information.
$n$ is the security parameter.
$\negl$ is a negligible function.
$[n]$ denotes the set $\{1,2,...,n\}$.
QPT stands for quantum polynomial time,
and PPT stands for classical polynomial time.
For an algorithm (or a Turing machine) $\cA$, $y\gets\cA(x)$ means that $\cA$ runs on input $x$ and outputs $y$.
$\mathsf{SD}(D,E)$ is the statistical distance between two distributions $D$ and $E$.

\if0
\subsection{QPT Samplability}
In this paper, we use the notion of QPT samplable distribution family defined as follows.
\begin{definition}[QPT Samplable Distribution Family]
A family 
$\cD=\{\cD_n\}_{n\in\N}$ 
of distributions 
is called a QPT samplable distribution family
if there exists a uniform QPT algorithm $\cQ$ such that 
$\cQ(1^n)$ can exactly sample from $\cD_n$ for each $n\in\N$. 
\end{definition}
\fi

\subsection{One-Way Puzzles}
We review the definition of OWPuzzs.
\begin{definition}[One-Way Puzzles (OWPuzzs) \cite{STOC:KhuTom24}]
\label{def:OWPuzz}
A one-way puzzle is a pair $(\Samp, \Ver)$ of algorithms with the following syntax:
\begin{itemize}
    \item $\Samp(1^n) \rightarrow (\ans,\puzz)$: It is a QPT algorithm that, on input $1^n$, outputs two classical bit strings $(\ans,\puzz)$. 
    \item $\Ver(\ans',\puzz) \rightarrow \top/\bot$: It is an unbounded algorithm that, on input $(\ans',\puzz)$, outputs $\top/\bot$.
\end{itemize}
We require the following correctness and security.

\paragraph{Correctness:}
\begin{align}
\Pr_{(\ans,\puzz)\leftarrow \Samp(1^n)} [\top\gets\Ver(\ans,\puzz)] \ge 1 - \negl(n).
\end{align}

\paragraph{Security:}
For any \textbf{uniform} QPT adversary $\cA$,
\begin{align}
\Pr_{(\ans,\puzz) \leftarrow \Samp(1^n)} [\top\gets\Ver(\cA(1^n,\puzz),\puzz)] \le \negl(n).
\end{align}
\end{definition}

\cite{C:ChuGolGra24} introduced distributional OWPuzzs, and show their equivalence to OWPuzzs.
The following lemma comes from the equivalence. We will use the lemma later.
\begin{lemma}[\cite{C:ChuGolGra24}]\label{thm:dist_invert}
Suppose that (resp. infinitely-often) OWPuzzs do not exist.
Then, for any QPT algorithm $\cQ$ that on input $1^n$ outputs $n$-bit strings, and for any constant 
$t>0$,
there exists a uniform QPT algorithm $\mathsf{Ext}$ that outputs a single bit such that 
\begin{align}
    \mathsf{SD}\left( (y_1,...,y_i)_{(y_1,...,y_i)\gets\cQ(1^n)} , 
    (y_1,...,y_{i-1}, \mathsf{Ext}(1^n,i,y_1,..., y_{i-1}) )_{(y_1,...,y_{i-1})\gets \cQ(1^n)} \right)\leq \frac{1}{n^t} 
\end{align}
for all $i\in[n]$
and infinitely many $n\in\N$ (resp. all sufficiently large $n\in\N$).
Here $y_i$ is the $i$th bit of $y\in\bit^n$, and $(y_1,...,y_i)\gets\cQ(1^n)$ is the marginal distribution over
the first $i$ bits of the output of $\cQ(1^n)$.
\end{lemma}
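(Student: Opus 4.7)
The lemma is a direct consequence of the equivalence between standard and distributional OWPuzzs established in \cite{C:ChuGolGra24}. The plan is to repackage the next-bit extrapolation task at all positions $i\in[n]$ into a single uniform distributional OWPuzz scheme built from $\cQ$, and then invoke the CGG equivalence to obtain the extrapolator.

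First, I would define a candidate distributional OWPuzz parametrized by $\cQ$. On input $1^n$, its sampler runs $(y_1,\ldots,y_n)\gets\cQ(1^n)$, draws $i\gets[n]$ uniformly, and outputs $\mathsf{puzz}=(1^n, i, y_1,\ldots,y_{i-1})$ together with $\mathsf{ans}=y_i$. This is a single uniform QPT scheme. Under the distributional security notion of \cite{C:ChuGolGra24}, an adversary breaks it by producing, on input $\mathsf{puzz}$, a bit whose joint distribution with $\mathsf{puzz}$ is close in statistical distance to the honest joint distribution of $(\mathsf{puzz},\mathsf{ans})$.

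Next, I would invoke the CGG equivalence: since (infinitely-often) OWPuzzs are assumed not to exist, distributional OWPuzzs also do not exist, so for any inverse-polynomial target $\delta$ the scheme above admits a uniform QPT inverter $\mathsf{Inv}$ achieving joint statistical distance at most $\delta$ for (all sufficiently large / infinitely many) $n\in\N$. Picking $\delta=1/n^{t+1}$ and setting $\mathsf{Ext}(1^n,i,y_1,\ldots,y_{i-1}) := \mathsf{Inv}(1^n,i,y_1,\ldots,y_{i-1})$, the uniform choice of $i$ inside the sampler lets us unfold the joint statistical distance into an average over $i$:
\begin{align*}
\frac{1}{n}\sum_{i\in[n]} \mathsf{SD}\!\left((y_1,\ldots,y_i)_{(y_1,\ldots,y_i)\gets\cQ(1^n)},\; (y_1,\ldots,y_{i-1},\mathsf{Ext}(1^n,i,y_1,\ldots,y_{i-1}))_{(y_1,\ldots,y_{i-1})\gets\cQ(1^n)}\right) \le \frac{1}{n^{t+1}}.
\end{align*}
Since all summands are nonnegative, each individual term is bounded by $n$ times the average, i.e.\ by $\frac{1}{n^t}$, which is exactly the bound in the lemma.

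The main obstacle will be verifying that the CGG reduction from distributional to standard OWPuzzs can be tuned to any desired inverse-polynomial statistical-distance slack; this is standard in such equivalences but must be carried out carefully in the quantum setting, where the reduction (unlike its classical Impagliazzo--Levin analogue) cannot rewind an adversary in general. A secondary subtlety is threading the \emph{infinitely-often} versus \emph{all sufficiently large} quantifier through both the CGG reduction and the per-$i$ bound, but this is clean because the averaging over $i$ happens inside a fixed $n$ and therefore commutes with the quantifier on $n$.
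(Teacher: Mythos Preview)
Your proposal is correct and matches the paper's treatment. The paper does not actually give a proof of this lemma: it is stated as a direct consequence of the OWPuzz/distributional-OWPuzz equivalence from \cite{C:ChuGolGra24} (``The following lemma comes from the equivalence''), which is precisely the route you take---packaging the per-index extrapolation task as a single distributional OWPuzz over a uniformly random index and then averaging.
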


\subsection{Quantum PRGs}

We review the definitions of quantum PRGs (QPRGs).
\begin{definition}[Quantum PRGs (QPRGs)]\label{def:uniform_qefid}
A QPRG is a QPT algorithm $\Gen(1^n)$ that takes a security parameter $1^n$ as input, and outputs a classical bit
string $x\in\bit^n$ satisfying the following properties:
    \paragraph{Statistically far:} 
    \begin{align}
        \mathsf{SD}((x)_{x\la\Gen(1^n)},(x)_{x\la \bit^n})\geq 1-\negl(n).
    \end{align}
    \paragraph{Computationally indistinguishable:} 
    For any QPT algorithm $\cA$,
    \begin{align}
        \abs{\Pr_{x\la\Gen(1^n)}[1\la\cA(1^n,x)]-\Pr_{x\la \bit^n}[1\la\cA(1^n,x)]}\leq\negl(n).
    \end{align}
\end{definition}

We will later use the following lemma.
Its proof is given by the standard padding argument. We give its proof in \cref{sec:appendix}.

\begin{lemma}\label{lem:far}
    Suppose that there exists a QPRG $\Gen$.
    Then, for any real $0<\tau<1$, there exists a QPRG $\Gen^*$ that satisfies the following properties.
        \paragraph{Exponentially statistically far:}
        \begin{align}
        \mathsf{SD}((x)_{x\la\Gen^*(1^n)},(x)_{x\la \bit^n})\geq 1-2^{-n^{\tau}}
    \end{align}
    for all sufficiently large $n\in\N$.
    \paragraph{Computationally indistinguishable:} 
    For any QPT algorithm $\cA$,
    \begin{align}
        \abs{\Pr_{x\la\Gen^*(1^n)}[1\la\cA(1^n,x)]-\Pr_{x\la \bit^n}[1\la\cA(1^n,x)]}\leq\negl(n).
    \end{align}
\end{lemma}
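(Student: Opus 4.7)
The plan is to apply parallel repetition with a tuned security parameter. Define $\Gen^*(1^n)$ by choosing $k := \lceil n^{\tau'} \rceil$ and $m := \lfloor n/k \rfloor$ for some fixed $\tau' \in (\tau, 1)$, sampling $k$ independent outputs $x_1, \ldots, x_k \la \Gen(1^m)$, drawing a fresh uniform string $u \la \bit^{n - km}$, and returning $x_1 \| \cdots \| x_k \| u \in \bit^n$. Both $k$ and $m$ grow polynomially in $n$ and $\poly(n) = \poly(m)$, so any reduction that calls $\Gen$ on $1^m$ remains polynomial-time.

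Computational indistinguishability follows from a standard hybrid argument over the $k$ blocks: let $H_i$ be the distribution where the first $i$ blocks come from $\Gen(1^m)$ and the remaining $k-i$ blocks are uniform on $\bit^m$, concatenated with the uniform padding. Then $H_0 = U_n$, $H_k = \Gen^*(1^n)$, and a QPT distinguisher of advantage $\mu(n)$ gives, by the hybrid lemma, an index $i$ at which consecutive hybrids are distinguished with advantage at least $\mu(n)/k$. Embedding a $\Gen$-challenge into the $i$-th block and simulating the remaining blocks yields a $\poly(m)$-time distinguisher against $\Gen$, contradicting its QPRG security whenever $\mu(n)$ is non-negligible.

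For the exponential statistical-distance bound, the statistical-far property of $\Gen$ furnishes a set $A_m \subseteq \bit^m$ with $\Pr[\Gen(1^m) \in A_m] \ge 1 - \epsilon(m)$ and $|A_m|/2^m \le \epsilon(m)$ for some negligible $\epsilon$. The naive ``all blocks in $A_m$'' witness delivers only an additive loss of $k \cdot \epsilon(m)$, which is superpolynomially but not exponentially small and hence too weak for the $2^{-n^{\tau}}$ target. Instead, I would use the concentration-based witness
\[
B := \bigl\{ (x_1, \ldots, x_k) : \card{\{ i \in [k] : x_i \in A_m \}} \ge k/2 \bigr\}.
\]
A Chernoff bound gives $\Gen(1^m)^{\otimes k}(B) \ge 1 - 2 e^{-\Omega(k)}$, while a binomial estimate gives $U_m^k(B) \le 2^k \epsilon(m)^{k/2}$. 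With $k = \Theta(n^{\tau'})$, $\tau' > \tau$, and $\epsilon$ negligible, both error terms fall well below $2^{-n^{\tau}}$ for all sufficiently large $n$. Appending an independent uniform string preserves statistical distance, so $\mathsf{SD}(\Gen^*(1^n), U_n) \ge 1 - 2^{-n^{\tau}}$.

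The main obstacle is precisely this amplification of the statistical distance. Because the original QPRG $\Gen$ only guarantees a $1 - \negl$ gap, rather than, say, a $1 - 2^{-m^{c}}$ gap, the naive union-bound analysis of parallel repetition cannot push the gap past superpolynomial. Switching from an ``all typical'' witness to a ``majority typical'' witness replaces the additive loss $k\epsilon(m)$ by the Chernoff factor $e^{-\Omega(k)}$, which is exactly what converts a polynomial number of repetitions $k = n^{\tau'}$ into an exponential statistical-distance gap $2^{-n^{\tau}}$.
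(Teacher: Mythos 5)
Your proposal is correct, and at the level of the construction it is the same route the paper takes: rebalance the security parameter and output polynomially many independent copies of $\Gen$ run on a shorter input (the paper uses $A=n^{(1-\tau)/2}$-bit blocks repeated $B=n^{(1+\tau)/2}$ times so that $AB=n$, with no padding; your $k=\lceil n^{\tau'}\rceil$, $m=\lfloor n/k\rfloor$ plus uniform padding is an equivalent parametrization), and in both cases computational indistinguishability is dispatched by the standard block-hybrid reduction, which the paper does not even spell out. Where you genuinely differ is the key step, the amplification of statistical distance. The paper's appendix starts only from $\mathsf{SD}(\Gen(1^A),U)\ge 1/A$ and invokes the one-line inequality $\mathsf{SD}(P^{\otimes B},Q^{\otimes B})\ge 1-\exp\left(-B\cdot\mathsf{SD}(P,Q)\right)$; as a general statement this bound is not correct (correct direct-product bounds for statistical distance go through fidelity/Hellinger and have $\mathsf{SD}^2$, not $\mathsf{SD}$, in the exponent), and it only becomes salvageable once one uses the full $1-\negl$ farness guaranteed by the QPRG definition. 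Your argument does exactly that, and elementarily: extract from $\mathsf{SD}(\Gen(1^m),U_m)\ge 1-\epsilon(m)$ a witness set $A_m$ with $\Pr[\Gen(1^m)\in A_m]\ge 1-\epsilon(m)$ and $|A_m|2^{-m}\le\epsilon(m)$, then use the majority witness $B$ with a Chernoff bound on the $\Gen$ side and the $2^k\epsilon(m)^{k/2}$ binomial estimate on the uniform side, correctly noting that the naive ``all blocks typical'' witness only yields a $1-k\epsilon(m)$ (merely negligible) gap. This buys a self-contained and rigorous proof of the exponential gap where the paper leans on a quoted inequality of dubious generality. Two small points to finish: in the hybrid reduction choose the hybrid index uniformly at random (or argue non-uniformly) so the reduction is uniform QPT, and note explicitly that $n=\poly(m)$ since $\tau'<1$, so the reduction and the negligibility transfer between parameters $n$ and $m$ are legitimate; both are routine.
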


We also introduce a non-uniform version of QPRGs.
\begin{definition}[non-uniform QPRGs (nuQPRGs)] 
A non-uniform QPRG is a QPT algorithm $\Gen(1^n,\mu)$ that takes a security parameter $1^n$ and $\mu\in [n]$ as input, and outputs a classical 
bit string $x\in\bit^n$.
We require that there exists $\mu^*\in[n]$ such that the following two conditions hold:
    \paragraph{Statistically far:} 
    \begin{align}
        \mathsf{SD}((x)_{x\la\Gen(1^n,\mu^*)},(x)_{x\la \bit^n})\geq 1-\negl(n).
    \end{align}
    \paragraph{Computationally indistinguishable:} 
    For any QPT algorithm $\cA$,
    \begin{align}
        \abs{\Pr_{x\la\Gen(1^n,\mu^*)}[1\la\cA(1^n,x)]-\Pr_{x\la \bit^n}[1\la\cA(1^n,x)]}\leq\negl(n).
    \end{align}
\end{definition}

It is known that OWPuzzs are existentially equivalent to nuQPRGs.
\begin{theorem}[\cite{STOC:KhuTom24,C:ChuGolGra24}]\label{thm:QEFID_OWPuzz}
OWPuzzs exist if and only if nuQPRGs exist.    
\end{theorem}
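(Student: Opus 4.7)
The plan is to prove both directions of the equivalence separately.

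\textbf{nuQPRG $\Rightarrow$ OWPuzz.} The key observation is that the advice $\mu^*\in[n]$ is only $\log n$ bits long, which a uniform polynomial-time reduction can absorb by enumeration. I would define $\Samp(1^n)$ to, independently for each $\mu\in[n]$, sample fresh randomness $r_\mu$ and compute $x_\mu\gets\Gen(1^n,\mu;r_\mu)$, outputting $\puzz=(x_1,\ldots,x_n)$ and $\ans=(r_1,\ldots,r_n)$; the verifier accepts $(\ans',\puzz)$ iff $\Gen(1^n,\mu;r'_\mu)=x_\mu$ for every $\mu$. Correctness is immediate. For security, any uniform QPT adversary $\cA$ that wins with inverse-polynomial probability must in particular recover the $\mu^*$-th preimage, and by guessing $\mu^*$ via enumeration over $[n]$ one obtains a uniform QPT distinguisher for $\Gen(1^n,\mu^*)$ versus uniform: for each candidate $\mu_0$, substitute the nuQPRG challenge $y$ into slot $\mu_0$, fill the remaining slots honestly, run $\cA$, and test whether $\cA$'s answer vector is fully valid. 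Statistical-farness of $\Gen(1^n,\mu^*)$ guarantees that on a uniform $y$ no candidate can produce a valid $\mu^*$-slot match except with negligible probability, whereas on $y\gets\Gen(1^n,\mu^*)$ the candidate $\mu_0=\mu^*$ reproduces the honest puzzle distribution exactly and is therefore accepted with inverse-polynomial probability.

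\textbf{OWPuzz $\Rightarrow$ nuQPRG.} I would invoke the standard pseudoentropy-to-pseudorandomness paradigm. The puzzle $\puzz$ produced by $\Samp$ carries computational unpredictability of $\ans$ given $\puzz$, which is a form of conditional pseudo-min-entropy. After parallel repetition to amplify this pseudoentropy safely above the true min-entropy of $\puzz$, apply a universal hash family to the concatenated puzzles with output length strictly between the true min-entropy and the pseudoentropy. By the leftover hash lemma the output is statistically far from uniform, while a standard pseudoentropy argument shows it is computationally indistinguishable from uniform. The $\log n$-bit non-uniform advice $\mu$ selects the correct extraction length (and, if needed, an index into a polynomial-sized family of hash functions); this is affordable because only polynomially many candidates need to be distinguished.

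\textbf{Main obstacle.} The second direction is the crux. The precise min-entropy and pseudoentropy of $\puzz$ are a priori unknown, so the extraction length must be guessed, and one must simultaneously ensure statistical-farness (requiring the hash output length to exceed the true min-entropy) and computational indistinguishability (requiring it to lie within the pseudoentropy budget). The role of the $\log n$-bit advice is to pick out the right extraction parameters; verifying that $\log n$ advice bits truly suffice rests on a careful accounting of pseudoentropy combined with standard leftover-hash-style extraction and Goldreich--Levin-style hardcore arguments.
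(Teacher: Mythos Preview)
The paper does not prove this theorem; it is quoted as a known result from \cite{STOC:KhuTom24,C:ChuGolGra24}, so there is no in-paper argument to compare against. Your OWPuzz $\Rightarrow$ nuQPRG sketch is on the right track and matches the approach in \cite{STOC:KhuTom24}: extract pseudoentropy from the answer conditioned on the puzzle, hash down, and use the $O(\log n)$ advice bits to select the correct extraction parameters.

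Your nuQPRG $\Rightarrow$ OWPuzz direction, however, has a genuine gap. You write $x_\mu\gets\Gen(1^n,\mu;r_\mu)$ and have the verifier check $\Gen(1^n,\mu;r'_\mu)=x_\mu$, treating $\Gen$ as a deterministic function of a classical seed $r_\mu$. But in this paper a (nu)QPRG is by definition a \emph{QPT} algorithm whose only inputs are $1^n$ and $\mu$; there is no classical seed, and its randomness is intrinsically quantum. If you try to define $r$ as the transcript of measurement outcomes (deferring all measurements to the end), the output $x$ is simply a substring of $r$, and an adversary given $x$ can trivially pad it to a ``consistent'' $r'$. Your security claim that on a uniform $y$ no valid $\mu^*$-slot preimage exists relies on the image of $\Gen(1^n,\mu^*;\cdot)$ being a negligible fraction of $\{0,1\}^n$ --- which a short classical seed would give you, but statistical farness of a QPT sampler does not. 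The route in the cited works is different in spirit: one uses $\Gen(1^n,\mu)$ versus the uniform distribution directly as a statistically-far/computationally-close pair, building a puzzle whose answer is the bit $b$ indicating which of the two distributions the sample came from, letting the unbounded $\Ver$ run the optimal statistical test, sampling $\mu$ uniformly inside $\Samp$, and absorbing the resulting $1/n$ correctness loss via weak-to-strong OWPuzz amplification.
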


We will use the following lemma.
We omit its proof because it is the same as that of \cref{lem:far}.
\begin{lemma}\label{lem:non_uniform_far}
    Suppose that there exists a nuQPRG $\Gen$.
    Then, for any real $0<\tau<1$, there exists another nuQPRG
    $\Gen^*$ such that there exists $\mu^*\in[n]$ with the following properties:
        \paragraph{Exponentially statistically far:}
        \begin{align}
        \mathsf{SD}((x)_{x\la\Gen^*(1^n,\mu^*)},(x)_{x\la \bit^n})\geq 1-2^{-n^{\tau}}
    \end{align}
    for all sufficiently large $n\in\mathbb{N}$.
    \if0
        \item[Computationally indistinguishability:]
        For all QPT $\cA$,
    \begin{align}
        \abs{\Pr_{y\la\Gen^*(1^n)}[1\la\cA(y)]-\Pr_{y\la U_{n}}[1\la\cA(y)]}\leq\negl(n).
    \end{align}
    \fi
    \paragraph{Computationally indistinguishable:}
    For any QPT algorithm $\cA$,
    \begin{align}
        \abs{\Pr_{x\la\Gen^*(1^n,\mu^*)}[1\la\cA(1^n,x)]-\Pr_{x\la \bit^n}[1\la\cA(1^n,x)]}\leq\negl(n).
    \end{align}
\end{lemma}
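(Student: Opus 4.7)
The plan is to adapt the parallel-repetition plus padding argument used for \cref{lem:far} to the non-uniform setting, threading the \emph{same} advice $\mu$ through every parallel copy of $\Gen$. Since the statement notes the proof is ``the same as that of \cref{lem:far}'', the only genuinely new issue is checking that a single advice string suffices for all the copies simultaneously.

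Concretely, fix any $\tau' \in (\tau, 1)$, set $m := \lceil n^{1-\tau'} \rceil$ and $k := \lceil n/m \rceil = \Theta(n^{\tau'})$, and define $\Gen^*(1^n, \mu)$ to run $k$ independent invocations of $\Gen(1^m, \mu)$ in parallel, concatenate their outputs, and truncate to $n$ bits (padding with fresh uniform bits if necessary). As good advice take $\mu^*_{\mathrm{new}}(n) := \mu^*_{\mathrm{old}}(m)$, where $\mu^*_{\mathrm{old}}(m) \in [m]$ is the good advice for $\Gen$ at security parameter $m$; this lies in $[m] \subseteq [n]$ and so is a syntactically valid input to $\Gen^*$. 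Since $m,k = \poly(n)$ and $\Gen$ is QPT, so is $\Gen^*$.

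For the exponentially-statistically-far property, the hypothesis $\mathsf{SD}((x)_{x\la \Gen(1^m, \mu^*_{\mathrm{old}})}, (x)_{x\la \bit^m}) \geq 1 - \negl(m)$ provides a set $S \subseteq \bit^m$ with $\Pr_{\Gen}[S] \geq 1 - \negl(m)$ and $\Pr_{U_m}[S] \leq \negl(m)$. Counting the number of the $k$ parallel blocks that land in $S$ and thresholding at $k/2$ gives a distinguisher between $\Gen^*(1^n, \mu^*_{\mathrm{new}})$ and $U_n$ whose advantage, by a Chernoff bound, is at least $1 - 2 e^{-\Omega(k)} = 1 - 2^{-\Omega(n^{\tau'})}$, which is $\geq 1 - 2^{-n^\tau}$ for all sufficiently large $n$ because $\tau' > \tau$. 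This yields the desired exponential statistical distance. For computational indistinguishability, a standard $k$-step hybrid argument across the blocks reduces any QPT distinguisher against $\Gen^*(1^n, \mu^*_{\mathrm{new}})$ vs.\ $U_n$ to a QPT distinguisher against $\Gen(1^m, \mu^*_{\mathrm{old}})$ vs.\ $U_m$, losing only a $\poly(n)$ factor in advantage; since each block is $\negl(m) = \negl(n)$-indistinguishable, the total advantage remains $\negl(n)$.

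The only mild obstacle is verifying that each of the $k$ parallel invocations of $\Gen$ inside $\Gen^*$ receives the \emph{same} advice, so that one advice string $\mu^*_{\mathrm{new}}(n) \in [n]$ is enough to witness the non-uniform security of $\Gen^*$. This is immediate from the construction, and is precisely what keeps us inside the nuQPRG syntax (advice in $[n]$) while performing the amplification.
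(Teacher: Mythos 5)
Your proposal is correct and follows essentially the same route as the paper: \cref{lem:non_uniform_far} is proved exactly as \cref{lem:far}, by parallel repetition of $\Gen$ at a polynomially smaller security parameter so that the number of blocks is $n^{\Theta(\tau)}$, with the single advice string threaded identically through every block (so $\mu^*_{\mathrm{new}}(n)$ is just the good advice of $\Gen$ at the block length), followed by a standard hybrid argument for computational indistinguishability. The only difference is cosmetic: you certify the exponential statistical distance via a distinguishing set and a Chernoff bound, whereas the paper invokes the product bound $\mathsf{SD}\geq 1-\exp\left(-B\cdot\mathsf{SD}(\Gen(1^A),U_A)\right)$ with $A=n^{(1-\tau)/2}$, $B=n^{(1+\tau)/2}$; both yield the required $1-2^{-n^{\tau}}$ bound.
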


\subsection{One-Way Functions}
\begin{definition}[OWFs on $\Sigma$~\cite{MorShiYam24}]\label{def:OWFsSigma}
    Let $\Sigma\subseteq\mathbb{N}$ be a set.
    A function $f:\bit^*\to\bit^*$ 
    that is computable in classical deterministic polynomial-time
    is a classically-secure (resp. quantumly-secure) OWF on $\Sigma$ if
    there exists an efficiently-computable polynomial $m$ such that
    for any PPT (resp. QPT) adversary $\cA$ and any 
    polynomial $p$
    there exists $n^*\in\mathbb{N}$ such that
    \begin{equation}
    \Pr[f(x')=f(x): x\gets\bit^{m(n)}, x'\gets\cA(1^{m(n)},f(x))] \le\frac{1}{p(n)}
    \end{equation} 
    holds
    for all $n\ge n^*$ in $\Sigma$. 
\end{definition}

\subsection{QAS/OWF Condition}
\begin{definition}[The QAS/OWF Condition~\cite{MorShiYam24}]\label{def:QAS/OWF}
    The QAS/OWF condition holds if there exist a polynomial $p$, a QPT algorithm $\cQ$ that takes $1^n$ as input and outputs a classical string, 
    and a function $f:\bit^*\to\bit^*$ that is computable in classical deterministic polynomial-time
    such that for any PPT algorithm $\cS$, the following holds:
    if we define
    \begin{align}
        \Sigma_\cS := \left\{ n\in\mathbb{N} :  \mathsf{SD}(\cQ(1^n),\cS(1^n)) \le \frac{1}{p(n)} \right\},
    \end{align}
    then $f$ is a classically-secure OWF on $\Sigma_\cS$.
\end{definition}

We will use the following lemma.
\begin{lemma}[\cite{MorShiYam24}]\label{lem:shirakawa}
If the QAS/OWF condition is not satisfied, then the following statement
is satisfied:
for any QPT algorithm $\cQ$ that takes $1^{n}$ as input and outputs a classical string and for any real $k>0$, there exists a PPT algorithm $\cS$ such that for any efficiently-computable polynomial $m$ and any family $\{f_n:\bit^{m(n) }\ra \bit^*\}_{n\in\N}$ of functions that are computable in classical deterministic polynomial-time, there exists a PPT algorithm $\cR$ such that
\begin{align}
    \mathsf{SD}(\cQ(1^n),\cS(1^n))\leq \frac{1}{n^k}
\end{align}
and
\begin{align}
    \mathsf{SD}\left(\{x,f_n(x)\}_{x\la\bit^{m(n)}},\{\cR(1^{m(n)},f_n(x)),f_n(x)\}_{x\la\bit^{m(n)}} \right)\leq \frac{1}{n^k}
\end{align}
for infinitely many $n\in\N$.
\end{lemma}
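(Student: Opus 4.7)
The plan is to prove the contrapositive: starting from the failure of the QAS/OWF condition, I would simultaneously produce a PPT approximator $\cS$ of the given $\cQ$ and, for each target family $\{f_n\}$, a PPT distributional inverter $\cR$, all succeeding on a common infinite set of security parameters.

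First I would invoke the negation of \cref{def:QAS/OWF} on a carefully chosen triple $(p, \cQ, f^*)$. Set $p(n) = n^{k+c}$ for a sufficiently large constant $c$ (to absorb polynomial slack from later reductions), take $\cQ$ to be the QPT algorithm of the lemma, and take $f^*$ to be a Levin-style universal classical OWF candidate, i.e., a specific polynomial-time function whose classical one-wayness is equivalent to the existence of any classically-secure OWF. The negated condition yields a PPT algorithm $\cS$ together with a PPT adversary $\cA^*$ and polynomial $q$ such that $\cA^*$ inverts $f^*$ with probability at least $1/q(n)$ on infinitely many $n\in\Sigma_\cS$, where $\Sigma_\cS=\{n:\mathsf{SD}(\cQ(1^n),\cS(1^n))\le 1/p(n)\}$. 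Because \cref{def:OWFsSigma} becomes vacuous on a finite index set, $\Sigma_\cS$ must itself be infinite, giving the first half of the claim.

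Second I would propagate this inversion from $f^*$ to every family $\{f_n\}$. By universality of $f^*$, breaking $f^*$ on infinitely many $n\in\Sigma_\cS$ implies that no classically-secure OWF on the index set $\Sigma_\cS$ exists in the infinitely-often sense. I would then invoke an Impagliazzo--Luby-style equivalence between search and distributional one-wayness, relativized to the input-length set $\Sigma_\cS$, to conclude that every PPT sampler admits a PPT approximate inverter on infinitely many $n\in\Sigma_\cS$. Applying this to the sampler outputting $(x,f_n(x))$ with $x\la\bit^{m(n)}$ produces the desired $\cR$ satisfying the bound $1/n^k$ on infinitely many $n\in\Sigma_\cS$; on the very same $n$ the inequality $\mathsf{SD}(\cQ(1^n),\cS(1^n))\le 1/p(n)\le 1/n^k$ then holds automatically.

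The main obstacle I anticipate is making the Impagliazzo--Luby reduction uniform and length-set aware: one must show that failure of classical one-wayness on $\Sigma_\cS$ yields a \emph{uniform} PPT distributional inverter whose success is also measured on $\Sigma_\cS$, rather than on some potentially different infinite set. A secondary technical point is calibrating the constant $c$ in $p(n)=n^{k+c}$ so that the polynomial blow-up incurred in converting search inverters to distributional inverters (and in going from $f^*$ to arbitrary $f_n$) is absorbed, delivering the clean $1/n^k$ bounds demanded by the lemma.
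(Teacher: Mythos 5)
The paper itself does not prove this lemma (it is imported verbatim from \cite{MorShiYam24}), so the comparison can only be against the natural proof your plan gestures at; your skeleton (instantiate the negated \cref{def:QAS/OWF} with a fixed triple so that $\cS$ is chosen once, note that $\Sigma_\cS$ must be infinite, then convert non-one-wayness on $\Sigma_\cS$ into Impagliazzo--Luby-style distributional inversion of the sampler $(x,f_n(x))$, and use that the first statistical-distance bound holds for \emph{every} $n\in\Sigma_\cS$ so the two bounds hold on a common infinite set) is indeed the right shape.

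However, your step two, as stated, has a genuine gap: ``by universality of $f^*$, breaking $f^*$ on infinitely many $n\in\Sigma_\cS$ implies that no classically-secure OWF on $\Sigma_\cS$ exists.'' Under \cref{def:OWFsSigma} the security parameter $n$ ranges over $\Sigma$ while the input length is a separate polynomial $m(n)$, and a single Levin-style $f^*$ with one fixed input-length polynomial $m^*$ cannot, at parameter $n$, embed evaluations of candidate functions whose input length (or running time) is an arbitrary, larger polynomial in $n$ --- in particular the function produced by the Impagliazzo--Luby construction from $\{f_n\}$ with target quality $n^{-k}$, whose input length is not known when $f^*$ and $\cS$ are fixed. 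The standard universal-OWF argument resolves this by re-indexing the security parameter, but re-indexing moves the successful parameters off $\Sigma_\cS$, which destroys exactly the simultaneity you need. The missing idea is to exploit the quantifier structure of \cref{def:OWFsSigma}: the ``there exists $m$'' sits \emph{inside} the definition, so the negation of the QAS/OWF condition gives, for the single chosen machine-evaluating candidate $f^*$, a PPT inverter with noticeable success at lengths $m(n)$ for infinitely many $n\in\Sigma_\cS$ \emph{for every} polynomial $m$ --- and $m$ may be chosen after seeing $\{f_n\}$, $k$, and the IL-constructed function, with padding so that the embedded block has the right size and time budget. One also needs $f^*$ to be of the block/concatenation type (so that a noticeable overall inverter projects to an inverter of the specific embedded block, rather than succeeding only on other machines' slices, or alternatively a weak-to-strong amplification carried out per parameter). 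Without these two points your universality claim is unjustified, and for the textbook universal OWF it is false in the per-parameter, $\Sigma_\cS$-preserving sense you need; the rest of your plan (relativizing IL per parameter to $\Sigma_\cS$, and absorbing polynomial slack via $p(n)=n^{k+c}$) is sound but routine by comparison.
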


\subsection{IV-PoQ}
\begin{definition}[Inefficient-Verifier Proofs of Quantumness (IV-PoQ)~\cite{C:MorYam24}]\label{def:IVPoQ}
    An IV-PoQ is a tuple $(\cP,\cV_1,\cV_2)$ of interactive algorithms. 
    $\cP$ (prover) is QPT, $\cV_1$ (first verifier) is PPT, and $\cV_2$ (second verifier) is unbounded.
    The protocol is divided into two phases.
    In the first phase, $\cP$ and $\cV_1$ take the security parameter $1^n$ as input and interact with each other over a classical channel.
    Let $\tau$ be the transcript, i.e., the sequence of all classical messages exchanged between $\cP$ and $\cV_1$.
    In the second phase, $\cV_2$ takes $1^n$ and $\tau$ as input and outputs $\top$ (accept) or $\bot$ (reject).
    We require the following two properties for some functions $c$ and $s$ such that $c(n)-s(n)\ge 1/\poly(n)$.
    \begin{itemize}
    \item
    {\bf $c$-completeness:} 
        \begin{equation}
            \Pr[\top\gets\cV_2(1^n,\tau):\tau\gets\langle\cP,\cV_1\rangle(1^n)] \ge c(n)
        \end{equation}
    holds for all sufficiently large $n\in\mathbb{N}$.
    \item
    {\bf $s$-soundness:} For any PPT prover $\cP^*$,
        \begin{equation}
            \Pr[\top\gets\cV_2(1^n,\tau):\tau\gets\langle\cP^*,\cV_1\rangle(1^n)] \le s(n)
        \end{equation}
    holds for all sufficiently large $n\in\mathbb{N}$.
    \end{itemize}
\end{definition}

\begin{theorem}[\cite{MorShiYam24}]
IV-PoQ exist if and only if the QAS/OWF condition is satisfied.    
\end{theorem}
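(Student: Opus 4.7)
The plan is to route through classically-secure one-way puzzles (cs-OWPuzzs) as an intermediate primitive, splitting the equivalence into (i)~IV-PoQ $\Leftrightarrow$ cs-OWPuzzs and (ii)~cs-OWPuzzs $\Leftrightarrow$ QAS/OWF. Step (i) is attributed to \cite{MorShiYam24} in the excerpt, so the substantive work is to prove step (ii).

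For the direction QAS/OWF $\Rightarrow$ cs-OWPuzzs, let $(\cQ, f, p)$ witness \cref{def:QAS/OWF} with parameter polynomial $m$. I would define $\Samp(1^n)$ to sample $x \leftarrow \cQ(1^n)$ and $y \leftarrow \bit^{m(n)}$ uniformly, and output $\puzz \seteq (x, f(y))$ and $\ans \seteq y$. Let $\Ver((x,z), y')$ accept iff $f(y') = z$ and $x$ lies in the heavy-probability support of $\cQ(1^n)$ (an unbounded check). Correctness is immediate. For security, suppose a PPT adversary $\cA$ given $\puzz$ outputs a valid $\ans'$ with non-negligible probability. The first coordinate of $\Samp$ yields a PPT sampler $\cS$ that exactly matches $\cQ$'s marginal (so $n \in \Sigma_\cS$ for all $n$); on $\Sigma_\cS$, $f$ is a classically-secure OWF by hypothesis, and $\cA$ would invert $f$ on a random image $f(y)$, contradicting that hypothesis. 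A small subtlety is that one may need to replace the trivial $\cS$ by an $\cA$-dependent sampler to make the reduction go through when $\cA$'s success depends on $x$; this is handled by averaging over $x$ and invoking the existence of a ``good'' $x$-slice.

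For the direction cs-OWPuzzs $\Rightarrow$ QAS/OWF, given $(\Samp, \Ver)$, let $\cQ(1^n)$ be $\Samp$ restricted to the puzzle component. The harder task is defining a classically polynomial-time $f$ whose hardness exactly captures cs-OWPuzz security. My plan is first to derive a classical-security analogue of the distributional equivalence of \cref{thm:dist_invert}, producing a QPT sampler whose output bits cannot be classically extrapolated on average. The function $f$ is then chosen so that its preimage structure encodes this extrapolation task: for any PPT $\cS$ that is $1/p$-close to $\cQ$ on $\Sigma_\cS$, a PPT inverter of $f$ on $\Sigma_\cS$ combined with $\cS$ would yield a PPT cs-OWPuzz adversary, so cs-OWPuzz security forces $f$ to be a classically-secure OWF on $\Sigma_\cS$.

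The main obstacle will be this second direction: cs-OWPuzz security concerns inverting a quantum sampling process, and translating it into the hardness of a deterministic classical function on exactly the set $\Sigma_\cS$ requires a careful hybrid argument. I expect to spend most of the technical effort formulating the correct candidate $f$ (likely via a bit-by-bit extrapolation chain of $\Samp$'s output) and verifying that approximate classical samplability on $\Sigma_\cS$ is precisely the hypothesis under which $f$ is forced to be one-way, since a purely black-box use of cs-OWPuzz security only gives hardness against adversaries that receive honestly sampled puzzles, whereas the QAS/OWF condition requires hardness against adversaries operating on arbitrary PPT-samplable distributions that are merely close to honest.
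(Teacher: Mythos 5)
This theorem is not proved in the paper at all: it is quoted from \cite{MorShiYam24}, and your decomposition through classically-secure OWPuzzs is the same hub that the cited work (and \cref{fig:graph}) uses, so the question is whether your two halves stand on their own. They do not; the first contains a concretely false step. For QAS/OWF $\Rightarrow$ cs-OWPuzzs you claim that ``the first coordinate of $\Samp$ yields a PPT sampler $\cS$ that exactly matches $\cQ$'s marginal, so $n\in\Sigma_\cS$ for all $n$.'' But $\Samp$ runs the QPT algorithm $\cQ$, and the whole point of \cref{def:QAS/OWF} is that no PPT $\cS$ need be $1/p$-close to $\cQ$ on more than finitely many $n$; in that quantum-advantage regime every $\Sigma_\cS$ is finite, the clause ``$f$ is a classically-secure OWF on $\Sigma_\cS$'' (\cref{def:OWFsSigma}) is vacuous, and the condition can be witnessed with a trivial $f$, e.g.\ the identity. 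Your puzzle is then $\puzz=(x,y)$, $\ans=y$, which a PPT adversary breaks by reading $y$ off the puzzle (the unbounded check on $x$ is irrelevant, since $x$ is supplied honestly in $\puzz$). So your construction is insecure exactly in the regime where the QAS/OWF condition carries quantum advantage rather than one-wayness; the $\cA$-dependent-sampler ``subtlety'' does not help, because one-wayness of $f$ is simply unavailable on the relevant $n$. A correct construction must place the classically-hard-to-produce sample (or information about it) in the \emph{answer} and have the unbounded verifier test that the adversary's output is distributed like $\cQ$; making that test sound against PPT provers is the technically heavy part of \cite{MorShiYam24} and is absent from your sketch.

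For cs-OWPuzzs $\Rightarrow$ QAS/OWF, your plan hinges on a ``classical-security analogue of \cref{thm:dist_invert}'' for the QPT sampler $\Samp$, i.e.\ a PPT bit-by-bit extrapolator whose nonexistence follows from classical OWPuzz security. The paper explicitly flags this as unknown (end of the technical overview: it is not known whether classically-secure OWPuzzs are equivalent to classically-secure \emph{distributional} OWPuzzs), and \cref{thm:dist_invert} is proved only for quantumly-secure OWPuzzs against QPT adversaries; you cannot simply ``derive'' the classical analogue. The known route never applies Impagliazzo--Levin-type inversion to the quantum sampler: one argues in the contrapositive, assuming the QAS/OWF condition fails, which (as in \cref{lem:shirakawa}) supplies a PPT sampler $\cS$ close to the puzzle distribution together with a distributional inverter for classical functions of $\cS$'s seed, and only then extrapolates with respect to $\cS$ to attack the puzzle. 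Moreover, even granting some extrapolation lemma, you still owe the actual definition of a single deterministic polynomial-time $f$ (and polynomial $p$) that is one-way on \emph{every} $\Sigma_\cS$ simultaneously; ``likely via a bit-by-bit extrapolation chain'' is not yet a construction. As written, both directions are incomplete, and the first is broken.
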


\subsection{Kolmogorov Complexity}
We also review some basics of Kolmogorov complexity. For details, see for example \cite{LV19}.
Throughout this paper, we consider a fixed deterministic universal Turing machine $U$.

\begin{definition}[Kolmogorov Complexity]
    The Kolmogorov complexity $\mathrm{K}(x)$ for a string $x$ is defined as
    \begin{equation}
        \mathrm{K}(x) \coloneqq \min_{d\in\bit^*} \{|d| : x \la U(d) \}.
    \end{equation}
\end{definition}

\begin{definition}[$\mathsf{GapK}$]
Let $s_1:\N\ra\N$ and $s_2:\N\ra\N$ be functions such that $s_2(n)-s_1(n)>w(\log(n))$.
$\mathsf{GapK}[s_1,s_2]\coloneqq (\cL_{\mathsf{Yes}},\cL_{\mathsf{No}})\subseteq \bit^*$ is a promise problem whose yes instances are strings $x$ such that $K(x)\leq s_1(\abs{x})$ and no instances are strings $x$ such that $K(x)\geq s_2(\abs{x})$.
\end{definition}

\section{Assumptions}

In this section, we introduce assumptions.

\subsection{Average-Hardness of GapK}
We introduce three assumptions on average-hardness of GapK.
We first introduce strongly-quantum-average-hardness of GapK as follows.

\begin{assumption}[Strongly-Quantum-Average-Hardness of $\mathsf{GapK}\lbrack s_1,s_2\rbrack$ ]\label{ass:Strong_AH_hardness_GapK}
     Let $s_1:\N\ra\N$ and $s_2:\N\ra\N$ be polynomial-time-computable functions with $s_2(n)-s_1(n)>w(\log(n))$.
    There exist an integer $k>0$ and a QPT algorithm $\cQ$ that outputs $n$-bit strings on input $1^n$
    such that for any QPT algorithm $\cA$, 
    \begin{align}
        \Pr_{x\la\cQ(1^n)}[ \mathsf{no}\la \cA(x) \wedge x\in\cL_{\mathsf{Yes}}]
        +\Pr_{x\la\cQ(1^n)}[\mathsf{yes}\la\cA(x)\wedge x\in\cL_{\mathsf{No}} ]\geq \frac{1}{2}- \frac{1}{n^k}
    \end{align}
    for all sufficiently large $n\in\N$, where $\cL_{\mathsf{Yes}}$ (resp. $\cL_{\mathsf{No}}$) is the set of yes (resp. no) instances of $\mathsf{GapK}[s_1,s_2]$.
\end{assumption}

We next introduce a weaker version as follows where the failure probability is larger than $1/poly$, not $1/2-1/poly$.
\begin{assumption}[Weakly-Quantum-Average-Hardness of $\mathsf{GapK}\lbrack s_1,s_2 \rbrack$]\label{ass:AH_hardness_GapK}
    Let $s_1:\N\ra\N$ and $s_2:\N\ra\N$ be polynomial-time-computable functions with $s_2(n)-s_1(n)>w(\log(n))$.
    There exist an integer $k>1$ and a QPT algorithm $\cQ$ that outputs $n$-bit strings on input $1^n$
    such that for any QPT algorithm $\cA$, 
    \begin{align}
        \Pr_{x\la\cQ(1^n)}[ \mathsf{no}\la \cA(x) \wedge x\in\cL_{\mathsf{Yes}} ] 
        +\Pr_{x\la\cQ(1^n)}[\mathsf{yes}\la\cA(x)\wedge x\in\cL_{\mathsf{No}} ]\geq \frac{1}{n^k}
    \end{align}
    for all sufficiently large $n\in\N$, where $\cL_{\mathsf{Yes}}$ (resp. $\cL_{\mathsf{No}}$) is the set of yes (resp. no) instances of $\mathsf{GapK}[s_1,s_2]$.
\end{assumption}

We finally introduce a classical-average-hardness version which is equivalent to \cref{ass:AH_hardness_GapK} except that the adversary $\cA$ is PPT, not QPT.
(Note that the instance sampling algorithm is still QPT.)
\begin{assumption}[Weakly-Classical-Average-Hardness of $\mathsf{GapK}\lbrack s_1,s_2 \rbrack$]\label{ass:c_AH_hardness_GapK}
    Let $s_1:\N\ra\N$ and $s_2:\N\ra\N$ be polynomial-time-computable functions with $s_2(n)-s_1(n)\geq w(\log(n))$.
    There exist an integer $k>1$ and a QPT algorithm $\cQ$ that outputs $n$-bit strings on input $1^n$
    such that for any PPT algorithm $\cA$, 
    \begin{align}
        \Pr_{x\la\cQ(1^n)}[ \mathsf{no}\la \cA(x) \wedge x\in\cL_{\mathsf{Yes}} ] 
        +\Pr_{x\la\cQ(1^n)}[\mathsf{yes}\la\cA(x)\wedge x\in\cL_{\mathsf{No}} ]\geq \frac{1}{n^k}
    \end{align}
    for all sufficiently large $n\in\N$, where $\cL_{\mathsf{Yes}}$ (resp. $\cL_{\mathsf{No}}$) is the set of yes (resp. no) instances of $\mathsf{GapK}[s_1,s_2]$.
\end{assumption}

\subsection{Average-Hardness of QPE}
We introduce two types of average-hardness of QPE as follows.
\cref{ass:AH_hardness_Estimate} is defined against QPT adversaries, while
\cref{ass:c_AH_hardness_Estimate} is defined against PPT adversaries.

\begin{assumption}[Quantum-Average-Hardness of Quantum Probability Estimation]\label{ass:AH_hardness_Estimate}
    There exist a real $c>1$, an integer $q>0$, and a QPT algorithm $\cQ$ that outputs $n$-bit strings on input $1^n$
    such that, for any QPT algorithm $\mathsf{Estimate}$, we have
    \begin{align}
        \Pr_{x\la\cQ(1^n)}\left[\frac{1}{c}\Pr[x\la\cQ(1^n)]\leq \mathsf{Estimate}(x)\leq c \Pr[x\la\cQ(1^n)]\right]\leq 1-\frac{1}{n^{q}}
    \end{align}
    for all sufficiently large $n\in\N$. 
\end{assumption}

\begin{assumption}[Classical-Average-Hardness of Quantum Probability Estimation]\label{ass:c_AH_hardness_Estimate}
    There exist a real $c>1$, an integer $q>0$, and a QPT algorithm $\cQ$ that outputs $n$-bit strings on input $1^n$ such that, for any PPT algorithm $\mathsf{Estimate}$, we have
    \begin{align}
        \Pr_{x\la\cQ(1^n)}\left[\frac{1}{c}\Pr[x\la\cQ(1^n)]\leq \mathsf{Estimate}(x)\leq c \Pr[x\la\cQ(1^n)]\right]\leq 1-\frac{1}{n^{q}}
    \end{align}
    for all sufficiently large $n\in\N$.
\end{assumption}

\section{Results on OWPuzzs and QPRGs}

We show the following theorem. 

\begin{theorem}\label{thm:equivalence}
    The following three are equivalent:
    \begin{itemize}
        \item OWPuzzs exist.
        \item There exists a real $0<\epsilon<1$ and a polynomial-time-computable function $\Delta(n)=w(\log(n))$ such that \cref{ass:AH_hardness_GapK} holds with $s_1=n-n^{\epsilon}$ and $s_2=n-\Delta$.
        \item \cref{ass:AH_hardness_Estimate} holds.
    \end{itemize}
\end{theorem}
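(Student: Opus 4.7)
The plan is to prove the cycle $(1)\Rightarrow(2)\Rightarrow(3)\Rightarrow(1)$, following the technical overview. For the implication $(1)\Rightarrow(2)$, I would first invoke \cref{thm:QEFID_OWPuzz} to obtain a nuQPRG from OWPuzzs, and then \cref{lem:non_uniform_far} to amplify it to an exponentially statistically far nuQPRG $\Gen^*$ with a good advice $\mu^*\in[n]$ and slack $2^{-n^\tau}$ for some $\tau>\epsilon$. The standard fact that $\mathsf{SD}(P,U_n)\ge 1-\delta$ forces $P$ to concentrate its mass on a set of size at most $\delta\cdot 2^n$ then shows that every heavy output $x$ of $\Gen^*(1^n,\mu^*)$ admits a description of length $n-n^\tau+O(\log n)\le n-n^\epsilon=s_1$ (hardcoding $\Gen^*$ and $\mu^*$ and indexing into a computable enumeration of the heavy set), so $x\in\cL_{\mathsf{Yes}}$ with probability $\ge 1-2^{-n^\tau}$; meanwhile a uniform string satisfies $K(x)\ge n-\Delta=s_2$ except with probability $2^{-\Delta}=\negl(n)$ by counting. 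Define the GapK sampler $\cQ(1^n)$ to sample $\mu\la[n]$ and $b\la\bit$ and output $\Gen^*(1^n,\mu)$ if $b=0$ and a uniform $x\la\bit^n$ otherwise; a QPT GapK solver on $\cQ$ with failure probability smaller than every $1/n^k$ then distinguishes $\Gen^*(1^n,\mu^*)$ from uniform by conditioning on the event $\mu=\mu^*$ (which has probability $1/(2n)$), contradicting computational indistinguishability of the nuQPRG.

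For $(2)\Rightarrow(3)$, I would argue contrapositively: assume that for every constant $c>1$ and every polynomial $1/n^q$ there is a QPT $\mathsf{Estimate}$ that $c$-multiplicatively approximates $\Pr[x\la\cQ(1^n)]$ on all but a $1/n^q$ fraction of inputs (the negation of \cref{ass:AH_hardness_Estimate}), and build a GapK adversary that outputs $\yes$ iff $\mathsf{Estimate}(x)\ge\tau$ for a threshold $\tau$ chosen below. Two Kolmogorov-side facts drive the analysis. The first is the deterministic bound
\begin{align}
K(x)\le \log\bigl(1/\Pr[x\la\cQ(1^n)]\bigr)+O(\log n),
\end{align}
obtained by describing $x$ via the code of $\cQ$, the input $1^n$, and its index in a computable enumeration of strings whose $\cQ$-probability is at least a given threshold (of which there are at most the inverse of that threshold); contrapositively $K(x)\ge s_2$ forces $\Pr[x\la\cQ(1^n)]\le 2^{-s_2+O(\log n)}$. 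The second is the averaging bound
\begin{align}
\Pr_{x\la\cQ(1^n)}\Bigl[K(x)\le s_1\ \text{and}\ \Pr[x\la\cQ(1^n)]<2^{-s_1-k\log n}\Bigr]\le 2n^{-k},
\end{align}
because at most $2^{s_1+1}$ strings have $K\le s_1$. Choosing $\tau$ inside the interval $\bigl(c\cdot 2^{-s_2+O(\log n)},\ 2^{-s_1-k\log n}/c\bigr)$, which is nonempty since $s_2-s_1=\omega(\log n)$, makes the GapK adversary err with probability at most $2n^{-k}+n^{-q}$, contradicting \cref{ass:AH_hardness_GapK}.

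For $(3)\Rightarrow(1)$, I would again argue contrapositively: assume OWPuzzs do not exist and fix the sampler $\cQ$ from \cref{ass:AH_hardness_Estimate}. By \cref{thm:dist_invert}, for any polynomial slack $1/n^t$ there is a QPT extrapolator $\mathsf{Ext}$ whose output approximates each conditional bit distribution of $\cQ(1^n)$ given its prefix. Running $\mathsf{Ext}$ polynomially many times at each position and applying a Chernoff bound yields an accurate estimate of each conditional probability $\Pr[y_i\mid y_1,\ldots,y_{i-1}]$, and multiplying the $n$ estimates produces a QPT algorithm that $c$-multiplicatively approximates $\Pr[x\la\cQ(1^n)]$ on all but a $1/n^q$ fraction of $x\la\cQ(1^n)$, contradicting \cref{ass:AH_hardness_Estimate}. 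The main obstacle is the parameter bookkeeping in $(2)\Rightarrow(3)$: the multiplicative constant $c$, the threshold $\tau$, the GapK gap $s_2-s_1=\omega(\log n)$, and the hypothesized failure rate $1/n^k$ all have to fit together so that the threshold window is nonempty and the total error strictly beats $1/n^k$ for some fixed $k$. The delicate Kolmogorov ingredient is the description-length bound $K(x)\le\log(1/\Pr[x\la\cQ(1^n)])+O(\log n)$, which must accommodate the code of $\cQ$, the input $1^n$, and an index into a computable enumeration over probability thresholds; this is essentially the argument of \cite{IRS21}, tightened slightly for the parameters we use.
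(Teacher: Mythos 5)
Your plan is correct and follows essentially the same route as the paper: the same cycle of implications, with (1)$\Rightarrow$(2) obtained from \cref{thm:QEFID_OWPuzz} and \cref{lem:non_uniform_far} together with the half-pseudorandom/half-uniform sampler and the advice-conditioning argument, (2)$\Rightarrow$(3) via the \cite{IRS21}-style threshold on $\mathsf{Estimate}$ backed by the two Kolmogorov counting bounds, and (3)$\Rightarrow$(1) via the extrapolator from \cref{thm:dist_invert} with empirical estimation of each conditional bit. The only spot where your sketch is thinner than the paper's proof is the additive-to-multiplicative conversion in (3)$\Rightarrow$(1): besides a Markov step to turn the average statistical-distance guarantee into per-prefix closeness (the paper's \cref{claim:invert_is_high}), you need the inverse-polynomial lower bound on the conditional probabilities holding with high probability over $x\la\cQ(1^n)$ (the paper's \cref{claim:samp_is_high}), since an additively accurate estimate of a tiny conditional probability is not multiplicatively accurate.
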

\cref{thm:equivalence} follows from the following three 
\cref{thm:owpuzz_from_kolmogorov,thm:estimate,thm:kolmogorov_from_owpuzz}.\footnote{\cref{thm:owpuzz_from_kolmogorov} was essentially shown in \cite{IRS21}. Here, we show a slightly stronger version for our purpose.}
 
\begin{lemma}[\cite{IRS21}]\label{thm:owpuzz_from_kolmogorov}
    Suppose that there exists a real $0<\epsilon<1$ and a polynomial-time-computable function $\Delta(n)=w(\log(n))$ such that \cref{ass:AH_hardness_GapK} holds with $s_1=n-n^{\epsilon}$ and $s_2=n-\Delta$.
    Then, \cref{ass:AH_hardness_Estimate} holds.
\end{lemma}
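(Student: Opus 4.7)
I prove the contrapositive: assuming \cref{ass:AH_hardness_Estimate} fails, for every admissible $\epsilon$ and $\Delta$ the corresponding \cref{ass:AH_hardness_GapK} fails too. Fix a QPT sampler $\cQ$ that outputs $n$-bit strings and an integer $k>1$; I construct a QPT $\cA$ whose GapK mistake probability on $x\la\cQ(1^n)$ is below $1/n^k$ for infinitely many $n$. Apply the negation of \cref{ass:AH_hardness_Estimate} to $\cQ$ with $c=2$ and $q=k+1$: there is a QPT $\mathsf{Estimate}$ that $2$-multiplicatively approximates $\Pr[x\la\cQ(1^n)]$ except with probability $1/n^{k+1}$ over $x\la\cQ(1^n)$, for infinitely many $n$. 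Let $\cA(x)$ output $\mathsf{yes}$ iff $\mathsf{Estimate}(x)\ge\tau$, with threshold $\tau:=2^{-s_1}/(16n^k)$, where $s_1 = n - n^\epsilon$.

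On any such good $n$ I bound the two one-sided error events over $x\la\cQ(1^n)$. On the $\mathsf{yes}$-side, the event ``$K(x)\le s_1$ and $\cA(x)=\mathsf{no}$'' implies either $\mathsf{Estimate}$ failed on $x$ or $\Pr[x\la\cQ(1^n)]<2\tau$; a direct counting argument gives
\begin{align*}
\Pr_{x\la\cQ(1^n)}\!\left[K(x)\le s_1\,\land\,\Pr[x\la\cQ(1^n)]<2\tau\right]\le \card{\{x:K(x)\le s_1\}}\cdot 2\tau\le 2^{s_1+1}\cdot 2\tau=\tfrac{1}{4n^k}.
\end{align*}
On the $\mathsf{no}$-side, ``$K(x)\ge s_2 = n-\Delta$ and $\cA(x)=\mathsf{yes}$'' implies either $\mathsf{Estimate}$ failed or $\Pr[x\la\cQ(1^n)]\ge \tau/2$. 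Here I will invoke a coding bound for QPT samplers: there is a fixed deterministic program that, given a description of $\cQ$, an input $1^n$, and a threshold $\alpha$, enumerates the set $S_\alpha:=\{x'\in\bit^n:\Pr[x'\la\cQ(1^n)]\ge\alpha\}$ and outputs its $i$-th element; since $\card{S_\alpha}\le 1/\alpha$, this yields $K(x)\le \log(1/\alpha) + O(\log n)$ for every $x\in S_\alpha$. Applied with $\alpha=\tau/2$ this gives $K(x)\le s_1 + O(\log n)<s_2$ for large $n$, using the GapK promise $s_2-s_1=n^\epsilon-\Delta=\omega(\log n)$ that is built into the hypothesis. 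Hence the $\mathsf{no}$-side event vanishes aside from estimator failure, and the total mistake probability is at most $\tfrac{1}{4n^k}+\tfrac{1}{n^{k+1}}<\tfrac{1}{n^k}$, refuting \cref{ass:AH_hardness_GapK}.

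The main technical step is the coding bound, which is the only place quantum (rather than classical) sampling enters. Since Kolmogorov complexity imposes no running-time restriction, a fixed universal program can exactly simulate $\cQ(1^n)$ symbolically on a standard gate set (whose amplitudes lie in a fixed algebraic number field), compute $\Pr[x'\la\cQ(1^n)]$ exactly, and emit the $i$-th element of $S_\alpha$ in a canonical order, using $\lceil\log(1/\alpha)\rceil$ bits for $i$ plus $O(\log n)$ bits for $n$ and for a rational upper bound on $\alpha$. The remaining parts of the proof are bookkeeping: choosing $c,q,\tau$ so that both the counting bound and the coding bound fit inside the logarithmic gap $s_2-s_1=\omega(\log n)$, and transferring ``for infinitely many $n$'' from the QPE negation to the GapK negation.
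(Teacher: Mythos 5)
Your proposal is correct and takes essentially the same route as the paper's proof: argue the contrapositive, threshold the estimator's output, bound the yes-side error by counting the at most $2^{s_1+1}$ strings of low Kolmogorov complexity, and kill the no-side error with an enumeration/coding bound showing that any $x$ with $\Pr[x\la\cQ(1^n)]$ above (roughly) the threshold has $K(x)\le s_1+O(\log n)<s_2$, the $O(\log n)$ being absorbed by the $\omega(\log n)$ gap. The only differences are parameter choices --- your threshold $2^{-s_1}/(16n^k)$ with $c=2$ versus the paper's $2^{-s+\Delta/2}$ with $c=100/99$, yielding a polynomially rather than exponentially small counting error --- which do not affect the argument.
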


\begin{lemma}\label{thm:estimate}
       \cref{ass:AH_hardness_Estimate} implies the existence of OWPuzzs.
\end{lemma}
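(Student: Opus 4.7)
The plan is to prove the contrapositive: assuming OWPuzzs do not exist, I will show that \cref{ass:AH_hardness_Estimate} fails. Fix an arbitrary QPT sampler $\cQ$ outputting $n$-bit strings together with constants $c>1$ and $q>0$; the goal is to construct a QPT algorithm $\mathsf{Estimate}$ that returns a multiplicative $c$-approximation of $\Pr[x\gets\cQ(1^n)]$ with probability at least $1-1/n^q$ over $x\gets\cQ(1^n)$, contradicting the assumption.

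First I would invoke \cref{thm:dist_invert}: since OWPuzzs do not exist, for any large constant $t$ there is a QPT extrapolator $\mathsf{Ext}$ whose prefix-extension distribution is $1/n^t$-close to that of $\cQ(1^n)$ at every step. A standard $n$-step hybrid chaining these prefix bounds shows that the distribution $D'$ defined by iteratively feeding $\mathsf{Ext}$ its own prior outputs satisfies $\mathsf{SD}(\cQ(1^n),D')\le n/n^{t}$. Writing $q_i(x):=\Pr[\mathsf{Ext}(1^n,i,x_{<i})=x_i]$ yields the closed form $\Pr_{D'}[x]=\prod_{i=1}^{n} q_i(x)$, which is efficiently approximable via Monte Carlo.

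Next I would design $\mathsf{Estimate}(x)$ as follows: for each $i\in[n]$, approximate $q_i(x)$ by running $\mathsf{Ext}(1^n,i,x_{<i})$ independently $N=\poly(n)$ times and recording the fraction $\hat{q}_i(x)$ equal to $x_i$; output $\prod_i \hat{q}_i(x)$, clipping each factor below by a small polynomial threshold to avoid accidental zeros. The analysis then splits into two multiplicative approximation claims, each required to hold with probability at least $1-1/(2n^q)$ over $x\gets\cQ(1^n)$: (i) the true density ratio $\Pr_{D'}[x]/\Pr_{\cQ(1^n)}[x]$ lies in $[c^{-1/2},c^{1/2}]$; and (ii) the Monte-Carlo product $\prod_i \hat{q}_i(x)$ lies within a factor $c^{1/2}$ of $\prod_i q_i(x)$. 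A union bound on these two events delivers the required estimator.

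The hard part will be (i): converting the small total variation bound $\mathsf{SD}(\cQ(1^n),D')\le n/n^t$ into a two-sided multiplicative density-ratio bound that holds with $(1-1/\poly)$ probability under $\cQ(1^n)$. The idea is to exploit the identity $\Exp_{x\gets \cQ(1^n)}[\Pr_{D'}[x]/\Pr_{\cQ(1^n)}[x]]=1$ (and its symmetric counterpart under $D'$) together with the fact that small TV distance controls the $\cQ$-mass of the set where the ratio is far from $1$; applying Markov's inequality to the ratio and its reciprocal, after discarding a set of $\cQ$-mass at most $O(n/n^t)$, yields the $c^{1/2}$ bound provided $t$ is chosen sufficiently large relative to $q$. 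Step (ii) is then handled by Chernoff on each non-clipped index, giving multiplicative accuracy $1+O(1/n^2)$ per factor for a suitable polynomial $N$, which compounds to at most $c^{1/2}$ across $n$ factors; the indices where the true $q_i(x)$ falls below the clipping threshold can only occur on a $1/\poly$-fraction of $x$ by step (i), since they would force $\Pr_{\cQ(1^n)}[x]$ below an inverse-exponential value, which excludes most $x\gets\cQ(1^n)$. Choosing $t$, the clipping threshold, and $N$ appropriately in terms of $c$ and $q$ closes the argument.
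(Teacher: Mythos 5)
Your overall architecture matches the paper's: prove the contrapositive, invoke \cref{thm:dist_invert} to obtain the extrapolator $\mathsf{Ext}$, estimate each conditional probability by Monte Carlo, and output the product. Your step (i) is a legitimate, slightly different packaging: the paper never forms the iterated distribution $D'$, but instead compares each $\widetilde{p}[y_i]$ directly to the true conditional probabilities of $\cQ$ and telescopes a per-coordinate $(1\pm 4/n^2)$ multiplicative error; your route via $\mathsf{SD}(\cQ(1^n),D')\le n/n^{t}$ works provided you argue through the $\cQ$-mass of the sets where the density ratio deviates by a constant factor (the appeal to Markov's inequality for a mean-$1$ ratio alone only yields a constant bound such as $c^{-1/2}$, not $1-1/\poly(n)$).

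The genuine gap is in step (ii), where you need every $q_i(x)$ to be at least inverse-polynomial so that Hoeffding with $\poly(n)$ samples gives per-factor multiplicative accuracy $1+O(1/n^2)$. Your justification --- that $q_i(x)$ below the clipping threshold ``would force $\Pr[x\la\cQ(1^n)]$ below an inverse-exponential value, which excludes most $x\gets\cQ(1^n)$'' --- fails on both counts: a small $q_i(x)$ only forces $\Pr_{D'}[x]$ (hence, via your step (i), $\Pr[x\la\cQ(1^n)]$) below the \emph{polynomial} clipping threshold, not below an inverse-exponential value; and, more importantly, exponentially small point probability does not make $x$ atypical under $\cQ$ --- if $\cQ(1^n)$ is uniform over $\bit^n$, every string has probability $2^{-n}$ yet such strings carry all the mass. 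Clipping also breaks multiplicativity: when the true $q_i(x)$ lies somewhat below the threshold, the clipped factor overestimates it by an unbounded polynomial factor, and you give no bound on the $\cQ$-mass of such $x$. The missing ingredient is exactly the paper's \cref{claim:samp_is_high}: for any $a>0$, with probability at least $1-n/a$ over $y\la\cQ(1^n)$, every true conditional probability $\Pr[(y_1,\ldots,y_i)\la\cQ]/\Pr[(y_1,\ldots,y_{i-1})\la\cQ]$ is at least $1/(2a)$, proved by a direct telescoping/counting argument rather than from the TV bound; combined with a per-index Markov bound on the extrapolator's additive error (the paper's \cref{claim:invert_is_high}), this lower-bounds each $q_i(x)$ polynomially, renders clipping unnecessary, and closes the analysis. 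Without such a claim your estimator's analysis does not go through. A minor further point: keep the quantifiers straight --- \cref{thm:dist_invert} gives closeness only for infinitely many $n$, which is what you need to contradict the ``all sufficiently large $n$'' clause of \cref{ass:AH_hardness_Estimate}.
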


\begin{lemma}\label{thm:kolmogorov_from_owpuzz}
    If OWPuzzs exist, then for all real $0<\epsilon<1$, there exists a polynomial-time-computable function $\Delta(n)=w(\log(n))$ such that  \cref{ass:AH_hardness_GapK} holds with $s_1=n-n^{\epsilon}$ and $s_2=n-\Delta$.
\end{lemma}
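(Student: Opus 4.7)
The plan is to route through the equivalence OWPuzzs$\Leftrightarrow$nuQPRGs of \cref{thm:QEFID_OWPuzz} together with the padding amplification of \cref{lem:non_uniform_far}. Fix $\epsilon\in(0,1)$ and choose any $\tau\in(\epsilon,1)$ (say $\tau=(1+\epsilon)/2$). From OWPuzzs I first invoke \cref{thm:QEFID_OWPuzz} to get a nuQPRG and then apply \cref{lem:non_uniform_far} with this $\tau$ to obtain a nuQPRG $\Gen^*$ with a good advice $\mu^*\in[n]$ such that $D_n:=\Gen^*(1^n,\mu^*)$ satisfies $\mathsf{SD}((x)_{x\la D_n},(x)_{x\la\bit^n})\geq 1-2^{-n^\tau}$ and is still computationally indistinguishable from uniform. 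Set $\Delta(n):=\lceil\log^2 n\rceil$, which is polynomial-time-computable and $\omega(\log n)$, and note $s_2-s_1=n^\epsilon-\Delta(n)=\omega(\log n)$, so $\mathsf{GapK}[s_1,s_2]$ is a legitimate promise problem. Define $\cQ(1^n)$ to draw $\mu\la[n]$ and $b\la\bit$, and output $\Gen^*(1^n,\mu)$ if $b=0$ and a uniform $n$-bit string if $b=1$. The event that $\cQ$ uses the good advice and the generator branch has probability exactly $1/(2n)$, which is what will drive the $1/\poly$ failure bound.

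The technical heart of the proof is the claim $\Pr_{x\la D_n}[K(x)\leq n-n^\epsilon]\geq 1-\negl(n)$. To establish it I use the standard Kolmogorov bound for efficiently samplable distributions: there is a constant $c_0$, depending only on the universal Turing machine and on the description of $\Gen^*$, such that every $x$ with $\Pr[x\la D_n]\geq 2^{-k}$ satisfies $K(x)\leq k+c_0\log n$ (the $c_0\log n$ encodes $n$, the advice $\mu^*\in[n]$, and the index of $x$ among the at most $2^k$ heavy outputs). It therefore suffices to show that $\Pr[x\la D_n]$ is rarely too small. Let $B:=\{x:\Pr[x\la D_n]>2^{-n}\}$; writing $\mathsf{SD}(D_n,\bit^n)=\Pr_{D_n}[B]-|B|/2^n$ and using $\mathsf{SD}\geq 1-2^{-n^\tau}$ gives $|B|\leq 2^{n-n^\tau}$ and $\Pr_{D_n}[B]\geq 1-2^{-n^\tau}$. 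Set $\beta:=2^{-(n-n^\epsilon-c_0\log n)}$ so that $\Pr[x\la D_n]\geq\beta$ implies $K(x)\leq n-n^\epsilon$. Splitting by $B$ and $B^c$ and bounding each summand crudely,
\begin{align}
\Pr_{x\la D_n}[\Pr[x\la D_n]<\beta]\leq \beta\cdot|B|+\Pr_{D_n}[B^c]\leq 2^{n^\epsilon+c_0\log n-n^\tau}+2^{-n^\tau},
\end{align}
which is $\negl(n)$ since $\tau>\epsilon$.

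The final step is a standard distinguishing argument. Under uniform the count $|\{x\in\bit^n:K(x)\leq n-\Delta(n)\}|\leq 2^{n-\Delta(n)+1}$ gives $\Pr_{x\la\bit^n}[K(x)\geq n-\Delta(n)]\geq 1-2^{-\Delta(n)+1}=1-\negl(n)$, so uniform samples are $\mathsf{no}$-instances with overwhelming probability. Suppose toward contradiction that for some integer $k>1$ (we will take $k=2$) some QPT $\cA$ has total failure probability below $1/n^k$ on $\cQ$ for infinitely many $n$. Because the two branches carry weight $1/(2n)$ and $1/2$, the per-branch failure probabilities are at most $2/n^{k-1}$ and $2/n^k$ respectively; combined with the two preceding facts this forces $\Pr_{x\la D_n}[\cA(x)=\mathsf{yes}]\geq 1-\negl(n)-2/n^{k-1}$ and $\Pr_{x\la\bit^n}[\cA(x)=\mathsf{yes}]\leq \negl(n)+2/n^k$. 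Hence $\cA$ distinguishes $D_n$ from $\bit^n$ with advantage approaching $1$ for infinitely many $n$, contradicting the computational indistinguishability clause of \cref{lem:non_uniform_far}.

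The main obstacle is the Kolmogorov estimate of the middle step. Exponentially small statistical distance from uniform by itself does not pin down the individual probabilities of points in the support; the argument succeeds only because the counting bound inside $B$ trades the cardinality $|B|\leq 2^{n-n^\tau}$ against the threshold $\beta=2^{-(n-n^\epsilon-c_0\log n)}$, and this trade is profitable strictly when $\tau>\epsilon$. This is precisely why the exponential amplification of \cref{lem:non_uniform_far} is essential: starting from a generic nuQPRG whose distance from uniform is only $1-\negl(n)$ would yield a bound of the form $\beta\cdot 2^n\gg 1$, which is vacuous. The remaining bookkeeping on $\Delta$, the promise gap, and the polynomial blow-up when passing between $\cQ$ and a single branch is routine.
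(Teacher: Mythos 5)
Your proposal is correct and follows essentially the same route as the paper: OWPuzzs give nuQPRGs (\cref{thm:QEFID_OWPuzz}), the padding lemma (\cref{lem:non_uniform_far}) makes them exponentially far from uniform, the sampler mixes $\Gen^*(1^n,\mu)$ for random $\mu\la[n]$ with the uniform distribution so the good advice branch carries weight $1/(2n)$, the heavy-element counting argument shows $K(x)\le n-n^\epsilon$ with overwhelming probability over $\Gen^*(1^n,\mu^*)$ (this is the paper's \cref{claim:low}), and a GapK solver is turned into a distinguisher exactly as in the paper's proof of \cref{thm:QEFID_GapK}. One cosmetic slip: in your ``standard Kolmogorov bound'' the index among the at most $2^{k}$ heavy outputs costs $k$ bits and is the main term (it is not part of the $c_0\log n$ overhead, which covers $n$, $\mu^*$, and the threshold), but the bound $K(x)\le k+c_0\log n$ that you actually use is the correct one.
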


\cref{thm:kolmogorov_from_owpuzz} is obtained by a slight modification of the following theorem.
\begin{theorem}\label{thm:QEFID_GapK}
    If QPRGs exist, then for all real $0<\epsilon<1$, there exists a polynomial-time-computable function $\Delta(n)=w(\log(n))$ such that  \cref{ass:Strong_AH_hardness_GapK} holds with $s_1=n-n^{\epsilon}$ and $s_2=n-\Delta$.
\end{theorem}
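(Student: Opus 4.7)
The plan is to instantiate $\cQ$ as an equal mixture of an exponentially-statistically-far QPRG $\Gen^*$ (obtained from \cref{lem:far} with parameter $\tau := (1+\epsilon)/2 \in (\epsilon,1)$) and the uniform distribution on $\bit^n$, and to take $\Delta(n) := \lceil (\log n)^2 \rceil$. Then $\Delta$ is polynomial-time-computable and $\omega(\log n)$, and the GapK gap satisfies $n^\epsilon - \Delta(n) = \omega(\log n)$ as required by the assumption.

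Next I will prove two structural facts about $\cQ$. First, a sample from $\Gen^*(1^n)$ is a yes-instance of $\mathsf{GapK}[n-n^\epsilon,\,n-\Delta]$ with probability at least $1 - 2^{-n^\tau}$. Writing $D_n(x) := \Pr[\Gen^*(1^n)=x]$ and $T := \{x \in \bit^n : D_n(x) > 2^{-n}\}$, the identity $\mathsf{SD}(D_n,U_n) = \sum_{x\in T}(D_n(x) - 2^{-n})$ together with $\mathsf{SD}(D_n,U_n) \ge 1 - 2^{-n^\tau}$ yields both $|T| \le 2^{n-n^\tau}$ and $\Pr_{x\la\Gen^*(1^n)}[x\in T] \ge 1 - 2^{-n^\tau}$. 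Since a universal Turing machine can, given the constant-size code of $\Gen^*$ and $n$, compute each $D_n(y)$ exactly by unbounded simulation and enumerate $T$ in a canonical order, every $x \in T$ satisfies $K(x) \le \lceil \log_2|T|\rceil + O(\log n) \le n - n^\tau + O(\log n) \le n - n^\epsilon$ for all sufficiently large $n$, using the slack $\tau > \epsilon$. Second, a uniform sample is a no-instance with probability at least $1 - 2^{-\Delta(n)+1}$, because fewer than $2^{n-\Delta(n)+1}$ strings have $K(x) < n - \Delta(n)$.

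For the hardness I will argue by contraposition. Fix $k$ and suppose a QPT algorithm $\cA$ (which we may take $\{\yes,\no\}$-valued, since outputting anything else only shrinks the failure probability) whose failure probability under $\cQ$ is less than $1/2 - 1/n^k$ for infinitely many $n$. Set $p := \Pr_{x\la\Gen^*(1^n)}[\cA(x)=\yes]$ and $p' := \Pr_{x\la U_n}[\cA(x)=\yes]$. The two structural facts immediately give
\begin{align*}
\Pr_{x\la\cQ}\left[\no \la \cA(x)\ \wedge\ x\in\cL_{\yes}\right] &\ge \tfrac{1}{2}(1-p) - 2^{-n^\tau},\\
\Pr_{x\la\cQ}\left[\yes \la \cA(x)\ \wedge\ x\in\cL_{\no}\right] &\ge \tfrac{1}{2}p' - 2^{-\Delta(n)}.
\end{align*}
Summing and rearranging yields $p - p' > 2/n^k - \negl(n)$, which contradicts the computational indistinguishability of $\Gen^*(1^n)$ from the uniform distribution on $\bit^n$.

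The main obstacle is the Kolmogorov upper bound in the first structural fact: one must convert an exponential statistical-distance lower bound into a small enumerable support and budget enough slack between $n^\tau$ and $n^\epsilon$ to absorb the $O(\log n)$ cost of describing $n$ to the universal machine. The choice $\tau = (1+\epsilon)/2$ provides polynomial slack $n^\tau - n^\epsilon$, which comfortably dominates the $O(\log n)$ overhead for all large $n$. Everything else is a textbook counting bound together with a direct reduction to the pseudorandomness of $\Gen^*$.
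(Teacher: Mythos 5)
Your proposal is correct and follows essentially the same route as the paper: the same sampler $\cQ$ (an equal mixture of the exponentially-statistically-far QPRG obtained from \cref{lem:far} with $\tau=(1+\epsilon)/2$ and the uniform distribution over $\bit^n$), the same two structural facts ($K(x)\le n-n^{\epsilon}$ with overwhelming probability under $\Gen^*$ via enumerating a small set of high-probability strings, and $K(x)\ge n-\Delta$ with overwhelming probability under uniform by counting), and the same reduction converting a GapK decider into a distinguisher against the QPRG. Your derivation of the first fact via the single set $T=\{x:\Pr[x\la\Gen^*(1^n)]>2^{-n}\}$ and the statistical-distance identity is a mild streamlining of the paper's three-set ($A,B,C$) decomposition with the auxiliary parameter $G=2^{-n^{\epsilon}}$, but it is the same enumeration/counting idea rather than a genuinely different argument.
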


\if0
\begin{theorem}[Average-Case Hardness of GapK implies Existence of OWPuzzs]\label{thm:owpuzz_from_kolmogorov}
    Let $\Delta:\N\ra \N$ be a function such that $\Delta(n)=w(\log(n))$ and let $s:\N\ra\N$ be an arbitrary function.
    Suppose that OWPuzzs do not exist. 
    Then, there is a quantum algorithm $\cA$ that infinitely-often solves $\mathsf{GapK}[s,s-\Delta]$ on arbitrary QPT samplable distribution $\cD$ with probability at least $1-O(n^{-q})$.
\end{theorem}
    \mor{Give the definition of ``$\cA$ infinitely-often solves $\mathsf{GapK}[s,s-\Delta]$ on a QPT samplable distribution $\cD$ with probability at least $1-O(n^{-q})$" in the preliminary.}
To prove this, we use the following~\cref{thm:estimate}, which we will prove in \cref{sec:estimate}.

\begin{lemma}[Hardness of Estimating Probability implies Existence of OWPuzzs]\label{thm:estimate}
Assume that (resp. infinitely-often) OWPuzzs do not exist.
Let $\cD = \{\cD_n\}_{n\in\mathbb{N}}$ be a QPT samplable distribution.
Let $c,q>1 $ be two arbitrary constants.
Then, there exists a QPT algorithm $\mathsf{Estimate}$ such that for infinitely many $n\in\N$ (resp. all sufficiently large $n\in\N$),
\begin{align}
    \Pr_{y\la\cD_n}\left[ \frac{p_y}{c}\leq \mathsf{Estimate}(y)\leq c p_y \right]\geq 1-O\left(\frac{1}{n^q}\right).
    \label{assumption}
\end{align}
\end{lemma}    
\fi

In the following subsections, we show them.

\subsection{Proof of \cref{thm:owpuzz_from_kolmogorov}}
In this subsection, we show \cref{thm:owpuzz_from_kolmogorov}.
The proof was essentially given in \cite{IRS21}, 
but here we re-provide a proof with stronger parameters, because it is convenient to show \cref{thm:estimate}
(and also for the convenience of readers).

\begin{proof}[Proof of \cref{thm:owpuzz_from_kolmogorov}]
In the following, for the notational simplicity, we often omit $|y|$ of 
$s(\abs{y})$ and $\Delta(\abs{y})$,
and just write $s$ and $\Delta$, respectively.
For contradiction, we assume that \cref{ass:AH_hardness_Estimate} does not follow, and construct a QPT algorithm $\cA$
that breaks \cref{ass:AH_hardness_GapK}.
For an arbitrary constant $k>0$, there exists a constant 
$q>0$ such that 
\begin{align}
    \frac{1}{n^q}+2^{-\Delta/3}\leq \frac{1}{n^k}
\end{align}
for all sufficiently large $n\in\N$.
Because we assume that \cref{ass:AH_hardness_Estimate} does not hold,
for any $q>0$ and for any QPT algorithm $\cQ$ there exists a QPT algorithm $\mathsf{Estimate}$ such that
\begin{align}\label{AH_hardness_Estimate_broken}
    \Pr_{y\la\cQ(1^n)}\left[\frac{99}{100}\Pr[y\la\cQ(1^n)]\leq \mathsf{Estimate}(y)\leq \frac{100}{99} \Pr[y\la\cQ(1^n)]\right]> 1-\frac{1}{n^{q}}
\end{align}
for infinitely many $n\in\N$.

Our QPT algorithm $\cA$ that solves $\mathsf{GapK}[s-\Delta,s]$ is constructed as follows:
It receives an instance $y\gets\cQ(1^n)$ and runs $\mathsf{Estimate}(y)$.
$\cA$ outputs $\mathsf{yes}$ indicating $y\in\cL_{\mathsf{yes}}$ 
if $\mathsf{Estimate}(y)\geq 2^{-s+\Delta/2}$,
and outputs $\mathsf{no}$ otherwise indicating $y\in\cL_{\mathsf{no}}$.

We use the following \cref{claim:high_prob_low_kolmogorov,claim:low_prob_high_kolmogorov}, which we will prove later.
\begin{claim}\label{claim:low_prob_high_kolmogorov}
    For all sufficiently large $n\in\N$,
        \begin{align}
            \Pr_{y \leftarrow \cQ(1^n)}\left[\Pr[y\la\cQ(1^n)] < \frac{100}{99}\cdot 2^{-s+\Delta/2 }\wedge K(y) \le s - \Delta \right] \le 2^{-\Delta/3}.
        \end{align}
\end{claim}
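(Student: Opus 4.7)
The plan is to prove the claim by a direct counting argument: the event constrains $y$ to lie in a small set (those with small Kolmogorov complexity), and on that set each contribution to the probability is small.

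First, I would invoke the standard counting bound: the number of strings $y\in\bit^n$ with $K(y)\le s-\Delta$ is at most $2^{s-\Delta+1}$, since each such string has a description of length at most $s-\Delta$, and the number of binary strings of length at most $s-\Delta$ is $\sum_{\ell=0}^{s-\Delta} 2^\ell \le 2^{s-\Delta+1}$.

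Next, I would rewrite the probability in the claim as a sum over such $y$:
\begin{align}
    \Pr_{y \leftarrow \cQ(1^n)}\!\left[\Pr[y\la\cQ(1^n)] < \tfrac{100}{99}\cdot 2^{-s+\Delta/2 }\wedge K(y) \le s - \Delta \right]
    = \sum_{y \in S} \Pr[y\la\cQ(1^n)],
\end{align}
where $S$ is the set of $y$ satisfying both conditions. By the counting bound $|S|\le 2^{s-\Delta+1}$, and by the first condition each summand is strictly less than $\tfrac{100}{99}\cdot 2^{-s+\Delta/2}$, so the sum is at most
\begin{align}
    2^{s-\Delta+1}\cdot \tfrac{100}{99}\cdot 2^{-s+\Delta/2} \;=\; \tfrac{200}{99}\cdot 2^{-\Delta/2}.
\end{align}

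Finally, I would compare this to the target bound $2^{-\Delta/3}$. The inequality $\tfrac{200}{99}\cdot 2^{-\Delta/2}\le 2^{-\Delta/3}$ is equivalent to $\tfrac{200}{99}\le 2^{\Delta/6}$, which holds for all sufficiently large $n$ because $\Delta(n)=\omega(\log n)$ and thus $2^{\Delta(n)/6}$ grows superpolynomially. This completes the proof. No step looks delicate; the main thing to be careful about is the off-by-one in the counting bound (using $2^{s-\Delta+1}$ rather than $2^{s-\Delta}$), which is exactly what gives the harmless constant factor absorbed by the $\omega(\log n)$ growth of $\Delta$.
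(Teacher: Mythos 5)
Your proposal is correct and is essentially the same argument as the paper's proof: bound the number of strings with $K(y)\le s-\Delta$ by $2^{s-\Delta+1}$, sum the per-string probability bound $\tfrac{100}{99}2^{-s+\Delta/2}$ over this set, and absorb the resulting $O(2^{-\Delta/2})$ into $2^{-\Delta/3}$ for large $n$ using $\Delta=\omega(\log n)$. Your write-up just makes the final comparison step slightly more explicit than the paper does.
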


\begin{claim}\label{claim:high_prob_low_kolmogorov}
We have
\begin{align}\label{eq:high_prob_low_kolmogorov}
    \Pr_{y\la\cQ(1^n)}\left[ \Pr[y\la\cQ(1^n)]\geq \frac{99}{100}\cdot 2^{-s+\Delta/2}\wedge K(y)\geq s \right]=0
\end{align}    
for all sufficiently large $n\in\N$.
\end{claim}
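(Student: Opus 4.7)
The plan is a short counting-plus-encoding argument showing that the event in the statement is in fact empty for all sufficiently large $n$, which immediately yields probability zero. First I would define
\begin{align}
S_n := \left\{y'\in\bit^n : \Pr[y'\la\cQ(1^n)]\geq \tfrac{99}{100}\cdot 2^{-s+\Delta/2}\right\}
\end{align}
and observe that, because the output probabilities of $\cQ(1^n)$ sum to at most $1$, we have $|S_n|\leq \tfrac{100}{99}\cdot 2^{s-\Delta/2}$. Hence every $y\in S_n$ is uniquely determined by its lexicographic index within $S_n$, which fits in at most $\log|S_n|\leq s-\Delta/2+O(1)$ bits.

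Next I would exhibit a fixed (classical) universal-machine program that, on input consisting of $n$ together with an index $i$, does the following: (i) compute $s(n)$ and $\Delta(n)$, which are polynomial-time computable by assumption; (ii) classically simulate $\cQ$ on $1^n$ to compute $\Pr[y'\la\cQ(1^n)]$ exactly for every $y'\in\bit^n$ (this is a finite rational computation since $\cQ$ is a polynomial-size quantum circuit followed by a computational-basis measurement, and Kolmogorov complexity imposes no efficiency requirement); (iii) enumerate $S_n$ in lexicographic order and output its $i$-th element. Since the descriptions of $\cQ$, $s$, and $\Delta$ are fixed constants, the whole program has length at most $\log|S_n|+O(\log n)\leq s-\Delta/2+O(\log n)$, where the $O(\log n)$ term accounts for a self-delimiting prefix encoding of $n$.

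Because $\Delta(n)=\omega(\log n)$, for all sufficiently large $n$ the bound $s-\Delta/2+O(\log n)$ is strictly less than $s$. Thus every $y\in S_n$ satisfies $K(y)<s$, so no $y\in\bit^n$ can simultaneously satisfy $\Pr[y\la\cQ(1^n)]\geq \tfrac{99}{100}\cdot 2^{-s+\Delta/2}$ and $K(y)\geq s$; the event in the claim is empty, and the probability is zero. The only point requiring care is ensuring that the overhead for encoding $n$ (and the boundary marker between $n$ and $i$) really is $O(\log n)$ and is absorbed by the $\Delta/2$ slack, which is immediate from $\Delta=\omega(\log n)$.
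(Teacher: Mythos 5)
Your proposal is correct and follows essentially the same argument as the paper: bound the number of high-probability strings by $\tfrac{100}{99}2^{s-\Delta/2}$, describe any such string by a fixed machine plus $O(\log n)$ bits (for $n$ and the threshold) plus an index of about $s-\Delta/2$ bits, and use $\Delta=\omega(\log n)$ to conclude $K(y)<s$, so the event is empty. The only cosmetic difference is that you have the program recompute $s(n),\Delta(n)$ from $n$ instead of receiving the threshold as an explicit input, which changes nothing in the bound.
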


Now, we have
\begin{align}
&\Pr_{y\la\cQ(1^n)}[\mathsf{no}\la\cA(y) \wedge y\in\cL_\mathsf{Yes}] 
+\Pr_{y\la\cQ(1^n)}[\mathsf{yes}\la\cA(y) \wedge y\in\cL_{\mathsf{No}}]\\
&=\Pr_{y\la\cQ(1^n)}[\mathsf{Estimate}(y)< 2^{-s+\Delta/2} \wedge K(y)\leq s-\Delta ] 
+\Pr_{y\la\cQ(1^n)}[\mathsf{Estimate}(y)\geq  2^{-s+\Delta/2} \wedge K(y)\geq s ]\\
&< \frac{1}{n^q}+\Pr_{y\la\cQ(1^n)}\left[\Pr[y\la\cQ(1^n)]<\frac{100}{99}\cdot 2^{-s+\Delta/2}\wedge K(y)\leq s-\Delta \right]\\
&+\Pr_{y\la\cQ(1^n)}\left[\Pr[y\la\cQ(1^n)]\geq \frac{99}{100}\cdot 2^{-s+\Delta/2}\wedge K(y)\geq s \right]\\
&\leq \frac{1}{n^q} + 2^{-\Delta/3}\\
&\leq \frac{1}{n^k},
\end{align}
for infinitely many $n\in\N$, where, in the first inequality, we have used \cref{AH_hardness_Estimate_broken}, that is,
$\frac{99}{100}\Pr[y\la\cQ(1^n)]\leq \mathsf{Estimate}(y)\leq \frac{100}{99}\Pr[y\la\cQ(1^n)]$ with probability at least $1-\frac{1}{n^q}$
and, in the second inequality, we have used \cref{claim:high_prob_low_kolmogorov,claim:low_prob_high_kolmogorov}.

\begin{proof}[Proof of \cref{claim:low_prob_high_kolmogorov}]
    Let 
   \begin{align}
   \mathsf{Low}\seteq \left\{ y\in\bit^n: K(y)\leq s-\Delta\mbox{\,\,and\,\,} \Pr[y\la\cQ(1^n)] <\frac{100}{99}2^{-s+\Delta/2}\right\}.
   \end{align} 
    Because the number of string $y\in\bit^n$ such that $K(y)\leq s-\Delta$ is at most $2^{s-\Delta+1}$, we have
    \begin{align}
        \abs{\mathsf{Low}}\leq 2^{s -\Delta+1}.
    \end{align}
    Therefore, we have
    \begin{align}
        \Pr_{y\la\cQ(1^n)}[y\in \mathsf{Low} ]
        &=\sum_{y\in \mathsf{Low}} \Pr[y\la\cQ(1^n)]\\
        &\leq \sum_{y\in \mathsf{Low}} \frac{100}{99} 2^{-s+\Delta/2}\\
        &\leq  2^{s -\Delta+1} \cdot\frac{100}{99} 2^{-s+\Delta/2}\\
        &\leq 2^{-\Delta/3}
    \end{align}
    for all sufficiently large $n\in\N$,    
    which shows the claim.
\end{proof}

\begin{proof}[Proof of \cref{claim:high_prob_low_kolmogorov}]
    Let 
    \begin{align}
    \mathsf{High}\seteq \left\{y\in\bit^n: \Pr[y\la\cQ(1^n)]\geq \frac{99}{100}2^{-s+\Delta/2}\right\}. 
    \end{align}
    Then, we have $\abs{\mathsf{High}}\leq \frac{100}{99}2^{s-\Delta/2}$.
    There exists a Turing machine $\cM$ that generates any $y\in \mathsf{High}$ by specifying the code of $\cQ$, $n$, $\frac{99}{100}2^{-s+\Delta/2}$ and the index $i$ of $y\in \mathsf{High}$\footnote{ 
    An inefficient Turing machine $y\la\cM\left(\cQ,n,\frac{99}{100}2^{-s+\Delta/2},i\right)$ works as follows:
    For all $y\in\bit^n$, $\cM$ computes $\Pr[y\la\cQ(1^n)]$ and adds $y\in\mathsf{High}$ if $\Pr[y\la\cQ(1^n)]\geq \frac{99}{100}2^{-s+\Delta/2}$.
    $\cM$ outputs the $i$-th string that belongs to $\mathsf{High}$.
    }.
    The code of $\cQ$ and $\cM$ are described by constant bits, $n$ and $\frac{100}{99}2^{-s+\Delta/2} $ are described by $O(\log(n))$ bits, and the index $i$ of $y\in \mathsf{High}$ is described by $\left(s-\Delta/2+1\right)$-bits.
    Therefore, the Kolmogorov complexity of $y\in \mathsf{High}$ is at most 
    \begin{align}
        O(1)+O(\log(n)) +s-\Delta/2+1\leq s-w(\log(n))
    \end{align}
    for all sufficiently large $n\in\N$.
    Hence
\cref{eq:high_prob_low_kolmogorov} is obtained for all sufficiently large $n\in\N$.
\end{proof}

\end{proof}

\subsection{Proof of \cref{thm:estimate}}\label{sec:estimate}

\if0
In this section, we prove \cref{thm:estimate}.
For proving this, we show \cref{lem:contraposition}, which is the contraposition of \cref{thm:estimate}.

\begin{lemma}[Restatement of \cref{thm:estimate}]\label{lem:contraposition}
Assume that OWPuzzs do not exist.
Let $\cD = \{\cD_n\}_{n\in\N}$ be an arbitrary QPT samplable distribution family, and let $c>1$ and $q>0$ be two arbitrary constants.
Then, there exists a QPT algorithm $\mathsf{Estimate}$ such that for infinitely many $n\in\N$,
\begin{align}
    \Pr_{y\la\cD_n}\left[ \frac{1}{c}\Pr[y\la\cD_n]\leq \mathsf{Estimate}(y)\leq c \Pr[y\la\cD_n] \right]\geq 1-\frac{1}{n^q}.
\end{align}
\end{lemma}    
\fi

\begin{proof}[Proof of \cref{thm:estimate}]
We show \cref{thm:estimate} by showing its contraposition.
More precisely, assume that OWPuzzs do not exist.
Then, for an arbitrary QPT algorithm $\cQ$ and arbitrary two constants $c>1$ and $q>0$, we construct a QPT algorithm $\mathsf{Estimate}$ such that for infinitely many $n\in\N$, 
\begin{align}
    \Pr_{y\la\cQ(1^n)}\left[ \frac{1}{c}\Pr[y\la\cQ(1^n)]\leq \mathsf{Estimate}(y)\leq c \Pr[y\la\cQ(1^n)] \right]\geq 1-\frac{1}{n^q}.
\end{align}

In the following, we often omit $1^n$ of $\cQ(1^n)$.
From the assumption that OWPuzzs do not exist and \cref{thm:dist_invert}, there exists a QPT algorithm $\mathsf{Ext}$ such that 
\begin{align}
    \mathsf{SD}\left( (y_1,...,y_i)_{(y_1,...,y_i)\gets\cQ} , 
    (y_1...,y_{i-1}, \mathsf{Ext}(i,y_1,..., y_{i-1}) )_{(y_1,...,y_{i-1})\gets \cQ} \right)\leq \frac{1}{n^{50q}} 
\end{align}
for all $i\in[n]$
and infinitely many $n\in\N$.

We construct $\mathsf{Estimate}$ by using the $\mathsf{Ext}$ as follows.
\begin{description}
    \item[Construction of $\mathsf{Estimate}$:] $ $
    \begin{enumerate}
    \item Receive $(y_1,...,y_n)$ as input.
        \item For each $i\in[n]$, run as follows:
        \begin{itemize}
            \item Run $b\gets \mathsf{Ext}(i,y_1\cdots y_{i-1})$ for $ n^{100} n^{100q}$ times.
            Let $\mathsf{Count}_{y_1\cdots y_{i-1}}(b)$ be the number of times that $\mathsf{Ext}(i,y_1\cdots y_{i-1})$ outputs $b$.
            \item 
            Set 
            \begin{align}
            \widetilde{p}[y_i]\seteq \frac{\mathsf{Count}_{y_1\cdots y_{i-1}}(y_i)}{n^{100}n^{100q}}.
            \end{align}
        \end{itemize}
        \item Output the value of $\prod_{i=1}^n\widetilde{p}[y_i]$.
    \end{enumerate}
\end{description}

In the following, we prove that 
\begin{align}
\frac{1}{c}\Pr[y\la\cQ]\leq \prod_{i=1}^n\widetilde{p}[y_i]\leq c\cdot\Pr[y\la\cQ]    
\end{align}
with high probability over $y\la\cD_n$.

\if0
The high level overview of the proof is as follows.
With the high probability over $y\la\cQ$, 
\begin{align}
    \frac{\Pr[y_1\cdots y_{i-1} y_i\la \cQ]}{\Pr[y_1\cdots y_{i-1}\la\cQ]}
\end{align}
must be sufficiently large for all $i\in[n]$ which can be shown by a standard probabilistic argument.
Furthermore, with high probability over $y\la\cQ$, the outputs of $\mathsf{Extrapolate}$ are statistically close to  $\cQ$ in terms of additive error for all $i\in[n]$.
More precisely, for all $i\in[n]$,
\begin{align}
    \abs{\Pr[y_i\la \mathsf{Extrapolate}(y_1\cdots y_{i-1})] -\frac{\Pr[y_1\cdots y_{i-1} y_i\la \cQ]}{\Pr[y_1\cdots y_{i-1}\la\cQ]}} 
\end{align}
is small.
This can be shown by the definition of $\mathsf{Extrapolate}$ and standard probabilistic argument.
On the other hand, for all $i\in[n]$, $\widetilde{p}[y_i]$ is close to $\Pr[y_i\la \mathsf{Extrapolate}(y_1\cdots y_{i-1})]$ in terms of multiplicative error.
This is because, with high probability over $y\la\cQ$,
$\Pr[y_i\la \mathsf{Extrapolate}(y_1\cdots y_{i-1})]\approx\frac{\Pr[y_1\cdots y_{i-1} y_i\la \cQ]}{\Pr[y_1\cdots y_{i-1}\la\cQ]}$ and $\frac{\Pr[y_1\cdots y_{i-1} y_i\la \cQ]}{\Pr[y_1\cdots y_{i-1}\la\cQ]}$ is sufficiently large.
Furthermore, we have
\begin{align}
    \Pr[y\la \cQ]=\Pr[y_1\la \cQ]\frac{\Pr[y_1y_2\la \cQ ]}{\Pr[y_1\la \cQ]}\cdots \frac{\Pr[y_1\cdots y_{n}\la \cQ]}{\Pr[y_1\cdots y_{n-1}\la \cQ]} 
\end{align}
Therefore,  
\begin{align}
\Pr[y\la\cQ] &= \Pr[y_1\la \cQ]\frac{\Pr[y_1y_2\la \cQ ]}{\Pr[y_1\la \cQ]}\cdots \frac{\Pr[y_1\cdots y_{n}\la \cQ]}{\Pr[y_1\cdots y_{n-1}\la \cQ]}\\
&\approx \Pr[y_1\la \mathsf{Extrapolate}]\Pr[y_2\la\mathsf{Extrapolate}(y_1)]\cdots \Pr[y_n\la\mathsf{Extrapolate}(y_1\cdots y_{n-1})]\\
&\approx \widetilde{p}[y_1]\widetilde{p}[y_2]\cdots \widetilde{p}[y_n]
\end{align}
with high probability over $y\la\cQ$ and the internal randomness of $\mathsf{Extrapolate}$.
In the following, we will carefully analyze the argument above.
\fi

For showing \cref{thm:estimate}, we use the following \cref{claim:samp_is_high,claim:invert_is_high,claim:Hoeff}, which we prove later.
Here, $\Pr[(y_1,...,y_{i-1})\gets\cQ]=1$ if $i=1$.
\if0
\begin{claim}\label{claim:samp_is_high}
    For all $a\in\R$,
    with probability at least $1-\frac{n}{a}$ over $y\la\cQ$, for all $i\in[n]$, we have
    \begin{align}
     \frac{1}{2a} \leq  \frac{\Pr[y_1\cdots y_{i-1}y_i\la\cQ]}{\Pr[y_1\cdots y_{i-1}\la\cQ]}.
    \end{align}
\end{claim}
\fi

\begin{claim}\label{claim:samp_is_high}
    For any real $a>0$, we have
    \begin{align}
     \Pr_{y\la\cQ}\left[\frac{1}{2a} \leq  \frac{\Pr[(y_1,..., y_{i-1}y_i)\la\cQ]}{\Pr[(y_1,..., y_{i-1})\la\cQ]} \mbox{\,\,for\,\,all } i\in [n]\right]\geq 1-\frac{n}{a}
    \end{align}
    for all $n\in\N$.
\end{claim}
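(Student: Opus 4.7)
My approach is to observe that the ratio $\Pr[(y_1,\ldots,y_{i-1},y_i)\la\cQ]/\Pr[(y_1,\ldots,y_{i-1})\la\cQ]$ appearing in the claim is exactly the one-step conditional probability that the $i$-th output bit equals $y_i$ given the prefix, under the distribution $\cQ(1^n)$ (with the convention that the denominator is $1$ at $i=1$, as the paragraph before the claim already specifies). Hence the event inside the probability is: along a sample path drawn from $\cQ$, every realized bit $y_i$ has conditional probability $\ge 1/(2a)$ given its prefix. I will bound the failure probability for a single index $i$ by a simple counting argument that uses only that $y_i$ is one bit, and then take a union bound over $i\in[n]$.

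First, I will dispose of the trivial regime $a<1$: there the right-hand side $1-n/a$ is nonpositive, so the statement is vacuous. From now on I assume $a\ge 1$, which in particular gives $1/(2a)\le 1/2$.

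Next, fix $i\in[n]$ and any prefix $(y_1,\ldots,y_{i-1})$, and let $p_b$ denote the conditional probability that the $i$-th output bit equals $b\in\{0,1\}$, so $p_0+p_1=1$. Since $1/(2a)\le 1/2$, at most one of $p_0,p_1$ can be strictly less than $1/(2a)$. Therefore the probability, conditioned on the prefix, that the realized bit $y_i$ has conditional mass below $1/(2a)$ is
\begin{align*}
p_0\,\mathbf{1}[p_0<1/(2a)] + p_1\,\mathbf{1}[p_1<1/(2a)] \;\le\; 1/(2a).
\end{align*}
Averaging over the prefix (itself drawn from the marginal of $\cQ$) yields $\Pr_{y\la\cQ}[\,p_i(y)<1/(2a)\,]\le 1/(2a)$ for each $i\in[n]$.

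Finally, I will apply a union bound over $i=1,\ldots,n$ to bound the probability that some index $i$ fails by $n/(2a)\le n/a$, which is exactly the complement of the event in the claim. I do not expect any real obstacle here; the only point that requires care is the conventional interpretation of the ratio at $i=1$, and the binary-alphabet counting step, which is what makes the factor $1/(2a)$ (rather than $1/a$) loss per index suffice.
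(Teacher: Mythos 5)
Your proposal is correct and follows essentially the same route as the paper: a per-coordinate bound on the probability that the realized bit's conditional mass falls below $1/(2a)$, followed by a union bound over $i\in[n]$. The only difference is cosmetic: the paper bounds the bad event at index $i$ by summing the telescoping product over \emph{both} values of the free bit (giving $1/a$ per index and hence $n/a$ with no restriction on $a$), whereas you note that at most one bit value can be light, obtaining the slightly sharper $n/(2a)$ at the cost of the (correctly handled) vacuous case $a<1$.
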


\if0
\begin{claim}\label{claim:invert_is_high}
   For all $b\in\R$, with probability at least $1-\frac{n}{b}$ over $y\la\cQ$, for all $i\in[n]$, we have
   \begin{align}
       \abs{\Pr[y_i\la \mathsf{Extrapolate}(y_1\cdots y_{i-1})]-\frac{\Pr[y_1\cdots y_{i-1} y_i\la \cQ]}{\Pr[y_1\cdots y_{i-1}\la\cQ]}}\leq bn^{-50q}.
   \end{align}
\end{claim}
\fi

\begin{claim}\label{claim:invert_is_high}
   For any real $b>0$, we have
   \begin{align}
       \Pr_{y\la\cQ}\left[\abs{\Pr[y_i\la \mathsf{Ext}(i,y_1,..., y_{i-1})]-\frac{\Pr[(y_1,..., y_{i-1} y_i)\la \cQ]}{\Pr[(y_1,..., y_{i-1})\la\cQ]}}\leq bn^{-50q}\mbox{\,\,for all\,\,}i\in[n] \right]\geq 1- \frac{n}{b}
   \end{align}
   for infinitely many $n\in\N$.
\end{claim}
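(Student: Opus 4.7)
The plan is to translate the statistical-distance bound provided by \cref{thm:dist_invert} (with the constant $t=50q$) into a pointwise additive-error bound that holds with high probability over $y\la\cQ$, and then apply a union bound over $i\in[n]$. Set $p(y_i\mid y_{<i})\seteq \Pr[(y_1,\ldots,y_i)\la\cQ]/\Pr[(y_1,\ldots,y_{i-1})\la\cQ]$ and $q(y_i\mid y_{<i})\seteq\Pr[y_i\la\mathsf{Ext}(i,y_1,\ldots,y_{i-1})]$. The statistical-distance bound says that the joint distribution $(y_1,\ldots,y_i)_{\la\cQ}$ and the ``one-step replaced'' distribution $(y_1,\ldots,y_{i-1},\mathsf{Ext}(i,y_1,\ldots,y_{i-1}))_{\la\cQ}$ differ by at most $1/n^{50q}$.

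First I would rewrite this statistical distance as an expectation of $\ell_1$-distances of conditional distributions:
\begin{align}
\mathsf{SD}\bigl((y_1,\ldots,y_i)_{\la\cQ},\, (y_1,\ldots,y_{i-1},\mathsf{Ext}(i,y_{<i}))_{\la\cQ}\bigr)
= \tfrac{1}{2}\,\E_{y_{<i}\la\cQ}\!\Bigl[\sum_{y_i}\bigl|p(y_i\mid y_{<i})-q(y_i\mid y_{<i})\bigr|\Bigr].
\end{align}
Denoting $\Delta_i(y_{<i})\seteq \sum_{y_i}|p(y_i\mid y_{<i})-q(y_i\mid y_{<i})|$, this yields $\E_{y_{<i}\la\cQ}[\Delta_i(y_{<i})]\le 2/n^{50q}$. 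By Markov's inequality applied to the non-negative random variable $\Delta_i(y_{<i})$, we obtain
\begin{align}
\Pr_{y_{<i}\la\cQ}\!\left[\Delta_i(y_{<i})> bn^{-50q}\right]\le \frac{2}{b}
\end{align}
for every $i\in[n]$ (absorbing the factor $2$ into $b$ if desired).

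Next I would observe that on the event $\Delta_i(y_{<i})\le bn^{-50q}$, the pointwise bound $|p(y_i\mid y_{<i})-q(y_i\mid y_{<i})|\le \Delta_i(y_{<i})\le bn^{-50q}$ holds for every $y_i$, and in particular for the realised $y_i$ drawn from $\cQ$. A union bound over $i\in[n]$ then gives that with probability at least $1-n/b$ over $y\la\cQ$ (using the tight constant from $\mathsf{SD}$), the desired additive error bound holds simultaneously for all $i\in[n]$. The ``infinitely many $n$'' qualifier is inherited directly from \cref{thm:dist_invert}.

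The argument is essentially routine and I do not foresee a significant obstacle; the only things to keep track of are the quantifier order (the bound must hold \emph{for the sampled $y_i$}, which is where the pointwise domination by $\Delta_i(y_{<i})$ is used), and the constant factor in the definition of statistical distance, which can be absorbed into the free parameter~$b$.
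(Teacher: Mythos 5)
Your proposal is correct and follows essentially the same route as the paper: rewrite the statistical distance from \cref{thm:dist_invert} as an expectation over the prefix $(y_1,\ldots,y_{i-1})\la\cQ$ of the conditional-probability difference, apply Markov's inequality for each $i$, and union bound over $i\in[n]$, with the ``infinitely many $n$'' qualifier inherited from the lemma. The only bookkeeping difference is that the paper exploits the fact that $\mathsf{Ext}$ outputs a single bit (so the differences for $y_i=0$ and $y_i=1$ coincide, making the statistical distance exactly the expectation of the pointwise conditional difference and yielding the stated $1-n/b$ with no factor of $2$), whereas you work with the full conditional $\ell_1$-distance $\Delta_i$ and absorb the factor $2$ into $b$ --- an inessential discrepancy.
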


\begin{claim}\label{claim:Hoeff}
    For any real $d>0$, we have
    \begin{align}
        \Pr[\abs{\widetilde{p}[y_i]-\Pr[y_i\la\mathsf{Ext} (i,y_1,...,y_{i-1})]}\leq \frac{1}{d}\mbox{ for all }i\in[n]]\geq 1-2n\exp{\frac{-2n^{100+100q}}{d^2}}
    \end{align}
    for all $n\in\N$,
    where the probability is taken over $\mathsf{Estimate}(y)$ for computing $\widetilde{p}[y_i]$.
\end{claim}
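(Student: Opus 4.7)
The plan is to prove this claim by a direct application of Hoeffding's inequality combined with a union bound over $i \in [n]$. By construction of $\mathsf{Estimate}$, the quantity $\mathsf{Count}_{y_1\cdots y_{i-1}}(y_i)$ is the sum of $n^{100+100q}$ i.i.d.\ $\{0,1\}$-valued indicator random variables, each equal to $1$ with probability exactly $\Pr[y_i \la \mathsf{Ext}(i,y_1,\ldots,y_{i-1})]$ (over the internal randomness of $\mathsf{Ext}$). Therefore $\widetilde{p}[y_i]$ is an empirical average of $n^{100+100q}$ i.i.d.\ Bernoulli random variables whose expectation is exactly $\Pr[y_i \la \mathsf{Ext}(i,y_1,\ldots,y_{i-1})]$.

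First, I would invoke Hoeffding's inequality: for each fixed prefix $(y_1,\ldots,y_{i-1})$ and each fixed $i\in[n]$,
\begin{align}
\Pr\left[\,\abs{\widetilde{p}[y_i] - \Pr[y_i \la \mathsf{Ext}(i,y_1,\ldots,y_{i-1})]} > \frac{1}{d}\,\right] \leq 2\exp\!\left(\frac{-2\,n^{100+100q}}{d^2}\right),
\end{align}
where the probability is over the $n^{100+100q}$ independent runs of $\mathsf{Ext}$ used to form $\widetilde{p}[y_i]$. Note that this bound holds for every fixed conditioning on $(y_1,\ldots,y_{i-1})$, so it also holds over the joint randomness of $\mathsf{Estimate}$ on input $y$.

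Next, I would apply a union bound over the $n$ indices $i \in [n]$ to conclude that the probability that \emph{some} $i$ violates the bound is at most $2n\exp(-2n^{100+100q}/d^2)$. Taking the complementary event yields the claim exactly as stated. Since this is a routine concentration argument, I do not anticipate any real obstacle; the only minor subtlety is that for different $i$, the randomness of the $n^{100+100q}$ invocations of $\mathsf{Ext}$ is drawn fresh and independently within $\mathsf{Estimate}$, which is what legitimizes both the Hoeffding step (i.i.d.\ samples per $i$) and the union bound (the event for each $i$ is well-defined given the input $y$). No further properties of $\mathsf{Ext}$ or of the distribution $\cQ$ are needed here; this is purely a sampling/concentration statement, which is why the bound holds for all $n\in\N$ rather than only infinitely many.
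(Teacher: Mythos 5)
Your proposal is correct and matches the paper's proof exactly: Hoeffding's inequality applied to the $n^{100+100q}$ i.i.d.\ runs of $\mathsf{Ext}$ for each fixed $i$, followed by a union bound over $i\in[n]$. No differences worth noting.
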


We use the claims above by setting $a=n^{q+2}$, $b=n^{q+4}$ and $d=n^{q+4}$.
From \cref{claim:invert_is_high,claim:Hoeff}, with probability at least $1-2n^{-q-3}$, we have
\begin{align}
    \abs{\widetilde{p}[y_i]-\frac{\Pr[(y_1,..., y_{i-1}y_i)\la\cQ]}{\Pr[(y_1,..., y_{i-1})\la\cQ]}}\leq 1/d + bn^{-50q}\leq 2n^{-q-4}
\end{align}
for all $i\in[n]$.
This implies that
\begin{align}
    \widetilde{p}[y_i]\leq \frac{\Pr[(y_1,...,y_i)\la\cQ]}{\Pr[(y_1,..., y_{i-1})\la\cQ]}\left(1+ 2n^{-q-4} \frac{\Pr[(y_1,..., y_{i-1})\la\cQ]}{\Pr[(y_1,..., y_i)\la\cQ]} \right)
\end{align}
with probability at least $1-2n^{-q-3}$.
Furthermore, from \cref{claim:samp_is_high}, with probability at least $1-n^{-q-1} $, we have
\begin{align}
  \frac{\Pr[(y_1,..., y_{i-1})\la\cQ]}{\Pr[(y_1,..., y_{i-1}y_i)\la\cQ]}\leq 2a=2n^{q+2}
\end{align}
for all $i\in[n]$.
Therefore, with probability at least $1-3n^{-q-1}$, we have
\begin{align}
    \widetilde{p}[y_i]\leq \frac{\Pr[(y_1,..., y_i)\la\cQ]}{\Pr[(y_1,..., y_{i-1})\la\cQ]}\left(1+ \frac{4}{n^2} \right).
\end{align}
Therefore, with probability at least $1-3n^{-q-1}$, we have
\begin{align}
    \prod_{i\in[n]}\widetilde{p}[y_i]&\leq \prod_{i\in[n]}\left(\frac{\Pr[(y_1,..., y_i)\la\cQ]}{\Pr[(y_1,..., y_{i-1})\la\cQ]}\left(1+ \frac{4}{n^2} \right)\right)\\
    &=\Pr[(y_1,..., y_n)\la\cQ] \left(1+ \frac{4}{n^2} \right)^n
\end{align}
for infinitely many $n\in\N$.\footnote{Remember that \cref{claim:invert_is_high} satisfied only for infinitely many $n\in\N$.}
For any constant $c>1$, there exists an $n_0\in\N$ such that 
\begin{align}
    \left(1+ \frac{4}{n^2} \right)^n\leq c
\end{align}
for all $n\ge n_0$. 
Therefore, we have
\begin{align}\label{one}
    \prod_{i\in[n]}\widetilde{p}[y_i]\leq c\cdot\Pr[(y_1,..., y_n)\la\cQ]
\end{align}
for infinitely many $n\in\N$.

In the same way, we can prove that, with probability at least $1-3n^{-q-1}$,
\begin{align}
    \frac{1}{c}\Pr[(y_1,..., y_n)\la\cQ]\leq \prod_{i\in[n]}\widetilde{p}[y_i]
\end{align}
for infinitely many $n\in\N$ as follows.
From \cref{claim:invert_is_high,claim:Hoeff}, with probability at least $1-2n^{-q-3}$, we have
\begin{align}
     \frac{\Pr[(y_1,..., y_i)\la\cQ]}{\Pr[(y_1,..., y_{i-1})\la\cQ]}\left(1 - 2n^{-q-4} \frac{\Pr[(y_1,..., y_{i-1})\la\cQ]}{\Pr[(y_1,..., y_i)\la\cQ]} \right)\leq \widetilde{p}[y_i]
\end{align}
for infinitely many $n\in\N$ and for all $i\in[n]$.
Furthermore, from \cref{claim:samp_is_high}, with probability at least $1-n^{-q-1} $, we have
\begin{align}
 -2n^{q+2}=-2a\leq  -\frac{\Pr[(y_1,..., y_{i-1})\la\cQ]}{\Pr[(y_1\cdots y_{i-1}y_i)\la\cQ]}.
\end{align}
Therefore, with probability at least $1-3n^{-q-1}$, we have
\begin{align}
    \frac{\Pr[(y_1,..., y_i)\la\cQ]}{\Pr[(y_1,..., y_{i-1})\la\cQ]}\left(1 - \frac{4}{n^2} \right) \leq \widetilde{p}[y_i].
\end{align}
Hence, with probability at least $1-3n^{-q-1}$, we have
\begin{align}
    \prod_{i\in[n]}\widetilde{p}[y_i]&\geq \prod_{i\in[n]}\left(\frac{\Pr[(y_1,..., y_i)\la\cQ]}{\Pr[(y_1,..., y_{i-1})\la\cQ]}\left(1- \frac{4}{n^2} \right)\right)\\
    &=\Pr[(y_1,..., y_n)\la\cQ] \left(1- \frac{4}{n^2} \right)^n
\end{align}
for infinitely many $n\in\N$.
For any constant $c>1$, there exists an $n_1\in\N$ such that 
\begin{align}
    \frac{1}{c}\leq\left(1-\frac{4}{n^2}\right)^n
\end{align}
for all $n\ge n_1$.
Therefore, we have, with probability at least $1-3n^{-q-1}$
\begin{align}\label{two}
    \frac{1}{c}\Pr[(y_1,..., y_n)\la\cQ]\leq \prod_{i\in[n]}\widetilde{p}[y_i]
\end{align}
for infinitely many $n\in\N$.
By combining \cref{one} and \cref{two}, we have that
\begin{align}
    \frac{1}{c}\Pr[(y_1,..., y_n)\la\cQ]\leq \prod_{i\in[n]}\widetilde{p}[y_i]
    \le c \Pr[(y_1,..., y_n)\la\cQ]
\end{align}
with probability at least $1-6n^{-q-1}$ (which is larger than $1-\frac{1}{n^q}$ for sufficiently large $n\in\N$)
for infinitely many $n\in\N$, which shows the lemma.

\begin{proof}[Proof of \cref{claim:samp_is_high}]
This is shown by a standard probabilistic argument.
    Let 
    \begin{align}
        \mathsf{Good}\seteq\left\{y\in\bit^n:\frac{\Pr[(y_1,..., y_{i})\la\cQ]}{\Pr[(y_1,..., y_{i-1})\la\cQ]}\geq \frac{1}{2a} \mbox{ for all }i\in[n] \right\},
    \end{align}  
    and let
    \begin{align}
        \mathsf{Bad_i}\seteq \left\{y\in\bit^n:\frac{\Pr[(y_1,..., y_i)\la\cQ]}{\Pr[(y_1,..., y_{i-1})\la\cQ]}< \frac{1}{2a} \right\}.
    \end{align}
    Because
    \begin{align}
        \sum_{y\in\mathsf{Good}}\Pr[(y_1,..., y_n)\la\cQ]\geq 1-\sum_{i\in[n]}\sum_{y\in\mathsf{Bad}_i}\Pr[(y_1,..., y_n)\la\cQ],
    \end{align}
    it is sufficient to show 
    \begin{align}
        \sum_{y\in\mathsf{Bad}_i}\Pr[(y_1,..., y_n)\la\cQ] <1/a
    \end{align}
    for all $i\in[n]$.
    We have
    \begin{align}
        \sum_{y\in\mathsf{Bad}_i}\Pr[(y_1,..., y_n)\la\cQ]
        &=\sum_{y\in\mathsf{Bad}_i} \left(\prod_{j\in[n]}\frac{\Pr[(y_1,..., y_j)\la\cQ]}{\Pr[(y_1,..., y_{j-1})\la\cQ]}\right) \\
        &=\sum_{y\in\mathsf{Bad}_i} \left(\prod_{j\in[n]\backslash i}\frac{\Pr[(y_1,..., y_j)\la\cQ]}{\Pr[(y_1,..., y_{j-1})\la\cQ]}\right)\cdot \frac{\Pr[(y_1,..., y_{i-1}y_i)\la \cQ]}{\Pr[(y_1,..., y_{i-1})\la\cQ]} \\
        &<\sum_{y\in\mathsf{Bad}_i} \left(\prod_{j\in[n]\backslash i}\frac{\Pr[(y_1\cdots y_j)\la\cQ]}{\Pr[(y_1,..., y_{j-1})\la\cQ]}\right)\cdot \frac{1}{2a} \\
        &<\sum_{y\in\bit^n} \left(\prod_{j\in[n]\backslash i}\frac{\Pr[(y_1,..., y_j)\la\cQ]}{\Pr[(y_1,..., y_{j-1})\la\cQ]}\right)\cdot \frac{1}{2a}\\
        &= \frac{1}{a}.
    \end{align}
In the last equation, we have used that
\begin{align}
    &\sum_{y_1,...,y_n\in\bit^n} \left(\prod_{j\in[n]\backslash i}\frac{\Pr[(y_1,..., y_j)\la\cQ]}{\Pr[(y_1,..., y_{j-1})\la\cQ]}\right)\\
    &=\sum_{y_1,...,y_n\in\bit^n}\frac{\Pr[(y_1,...,y_{i-1})\gets\cQ]}{\Pr[(y_1,...,y_i)\gets\cQ]}\Pr[(y_1,...,y_n)\gets\cQ]\\
    &=\sum_{y_1,...,y_n\in\bit^n}\Pr[(y_1,...,y_{i-1})\gets\cQ]
    \Pr[(y_{i+1},...,y_n)\gets\cQ|(y_1,...,y_i)\gets\cQ]\\
    &=\sum_{y_1,...,y_i\in\bit^i}\Pr[(y_1,...,y_{i-1})\gets\cQ]\\
    &=\sum_{y_i\in\bit}1\\
    &=2.
\end{align}

\if0
    We have
    \begin{align}
        &\sum_{y\in\mathsf{Bad}_i}\Pr[y_1\cdots y_n\la\cQ]\\
        &=\sum_{y\in\mathsf{Bad}_i}\Pr[y_1\la\cQ]\cdots\frac{\Pr[y_1\cdots y_i\la\cQ]}{\Pr[y_1\cdots y_{i-1}\la\cQ]}\cdots \frac{\Pr[y_1y_2\cdots y_n\la\cQ]}{\Pr[y_1y_2\cdots y_{n-1}\la\cQ]}\\
        &<\sum_{y\in\mathsf{Bad}_i}\Pr[y_1\la\cQ]\cdots1/2a\cdots \frac{\Pr[y_1y_2\cdots y_n\la\cQ]}{\Pr[y_1y_2\cdots y_{n-1}\la\cQ]}\\
        &<\sum_{y\in\bit^n}\Pr[y_1\la\cQ]\cdots1/2a\cdots \frac{\Pr[y_1y_2\cdots y_n\la\cQ]}{\Pr[y_1y_2\cdots y_{n-1}\la\cQ]}= \frac{1}{a}
    \end{align}
\fi
\end{proof}

\begin{proof}[Proof of \cref{claim:invert_is_high}]
    From the definition of $\mathsf{Ext}$, we have
    \begin{align}
    \mathsf{SD}\left( (y_1,..., y_i)_{(y_1,...,y_i)\gets\cQ} , 
    (y_1,...,y_{i-1}, \mathsf{Ext}(i,y_1,..., y_{i-1}) )_{(y_1,...,y_{i-1})\gets \cQ} \right)\leq \frac{1}{n^{50q}} 
    \end{align}
    for all $i\in[n]$.
This implies that
    \begin{align}
        &\sum_{y_1,..., y_{i-1}\in\bit^{i-1}}
        \left|\Pr[(y_1,..., y_{i-1},1)\la\cQ]-\Pr[(y_1,..., y_{i-1})\la\cQ]\Pr[1\la\mathsf{Ext} (i,y_1,..., y_{i-1})] \right|\\
        &=\sum_{y_1,..., y_{i-1}\in\bit^{i-1}}\Pr[(y_1,..., y_{i-1})\la\cQ]\cdot\abs{\frac{\Pr[(y_1,..., y_{i-1},1)\la\cQ]}{\Pr[(y_1,..., y_{i-1})\la\cQ]} - \Pr[1\la\mathsf{Ext}(i,y_1,..., y_{i-1}) ]}\\
        &= \mathbb{E}_{(y_1,..., y_{i-1})\la\cQ}\left[\abs{\frac{\Pr[(y_1,..., y_{i-1},1)\la\cQ]}{\Pr[(y_1,..., y_{i-1})\la\cQ]} - \Pr[1\la\mathsf{Ext}(i,y_1,..., y_{i-1}) ]}\right]
        \leq n^{-50q}
    \end{align}
    for all $i\in[n]$.
    From Markov inequality, for each $i\in[n]$, we have
    \begin{align}
        \Pr_{(y_1,..., y_{i-1})\la\cQ}\left[\abs{\frac{\Pr[(y_1,..., y_{i-1},1)\la\cQ]}{\Pr[(y_1,..., y_{i-1})\la\cQ]} - \Pr[1\la\mathsf{Ext}(i,y_1,..., y_{i-1}) ]}\geq bn^{-50q} \right]\leq \frac{1}{b}.
    \end{align}
    Therefore, 
    \begin{align}
        &\abs{\frac{\Pr[(y_1,.., y_{i-1},1)\la\cQ]}{\Pr[(y_1,..., y_{i-1})\la\cQ]} - \Pr[1\la\mathsf{Ext}(i,y_1,..., y_{i-1}) ]}\\
        &=\abs{\frac{\Pr[y_1,..., y_{i-1},0\la\cQ]}{\Pr[y_1,..., y_{i-1}\la\cQ]}-\Pr[0\la\mathsf{Ext}(i,y_1,..., y_{i-1}) ] }< bn^{-50q}
    \end{align}
    is satisfied for all $i\in[n]$
    with probability at least $1-\frac{n}{b}$.
\end{proof}

\begin{proof}[Proof of \cref{claim:Hoeff}]
From the Hoeffding inequality, for each $i\in[n]$,
\begin{align}
\Pr_{\widetilde{p}[y_i]\la\mathsf{Estimate}(y)}\left[ \left|\widetilde{p}[y_i] - \Pr[y_i\la \mathsf{Ext}(i,y_1,..., y_{i-1}) ]\right| > \frac{1}{d} \right]\leq  2\exp\left( \frac{-2n^{100+100q}}{d^2}\right).
\end{align}
Union bound implies that
\begin{align}
    \Pr_{\{\widetilde{p}[y_i]\}_{i\in[n]}\la\mathsf{Estimate}(y)}\left[\abs{\widetilde{p}[y_i]-\Pr[y_i\la\mathsf{Ext}(i,y_1,..., y_{i-1})]}>1/d \mbox{\,\,for some\,\,}i\in[n] \right]\leq 2n\exp\left(\frac{-2n^{100+100q}}{d^2}\right).
\end{align}
\end{proof}

\if0
\color{red}
\begin{fact}\label{fact:exp}
For all constants $1<c$, there exists a $n_0\in\N$ such that 
\begin{align}
    \left(1+ \frac{4}{n^2} \right)^n\leq c
\end{align}
for all $n_0\geq n$.
\end{fact}
\begin{proof}[Proof of \label{fact:exp}]
    
\end{proof}
\color{black}
\fi

\end{proof}

\if0
\section{Average-Case Hardness of Kolmogorov Complexity from EFID}

\begin{theorem}
    If there exists EFID, then for all $0< \epsilon<1 $ there exists a quantumly sampleable distribution $\cD=\{\cD_n\}_{n\in\N}$ such that $\mathsf{GapK}[n-n^{\epsilon}]$ is hard on average 
\end{theorem}
\fi

\subsection{Proof of \cref{thm:QEFID_GapK}}
\if0
\mor{[Behera, Malavolta, Morimae, Yamakawa] showed that QEFID will not imply OWSGs.
This means that quantum-average-hardness of GapK problem will not imply OWSGs.
OWSGs do not seem to imply quantum-average-hardness of GapK, so 
the assumption of quantum-average-hardness of GapK
and OWSGs are imcomparable assumptions. Hence we have constructed OWPuzzs from another assumption that seems to be incomparable with OWSGs.}
\mor{QEFID will not imply OWFs. So this result suggests that GapK will not imply OWFs.}
\mor{QOWF imply GapK and GapK imply OWPuzz. This means that even if P=NP, OWPuzz exist. This is not implied by the known result of Luowen which is even if P=NP, 1-PRSG exist}
\taiga{!!Under preparations!!}

We can show \cref{ass:AH_hardness_GapK} with some parameter regimes assuming the existence of QEFID with uniform distribution.

In this subsection, we prove \cref{thm:QEFID_GapK,thm:kolmogorov_from_owpuzz}.
For clarity, we describe the \cref{thm:QEFID_GapK}.

\begin{theorem}[QEFID with uniform distribution imply Average-Case Hardness of GapK]\label{thm:hard_GapK}
    The existence of QEFID with uniform distribution implies \cref{ass:Strong_AH_hardness_GapK}.
\end{theorem}
\fi


\begin{proof}[Proof of \cref{thm:QEFID_GapK}]
Assume that a QPRG exists.
Our goal is to show that, for any real $0<\epsilon<1$, 
there exist an integer $k>0$ and a QPT algorithm $\cQ$ such that for any QPT algorithm $\cA$, 
\begin{align}
    \Pr_{x\la\cQ(1^n)}[ \mathsf{no}\la \cA(x) \wedge x\in\cL_{\mathsf{Yes}}]+\Pr_{x\la\cQ(1^n)}[\mathsf{yes}\la\cA(x)\wedge x\in\cL_{\mathsf{No}} ]\geq \frac{1}{2}- \frac{1}{n^k}
\end{align}
for all sufficiently large $n\in\N$.
Here $\cL_{\mathsf{Yes}}$ (resp. $\cL_{\mathsf{No}}$) is the set of yes (resp. no) instances of $\mathsf{GapK}[n-n^{\epsilon},n-\Delta]$.
Let $\tau\seteq\frac{1+\epsilon}{2}$ so that $\epsilon<\tau < 1$.
From \cref{lem:far}, for any real $0<\epsilon<\tau<1$, there exists a QPRG $\Gen$ such that
\begin{align}
    \mathsf{SD}((x)_{x\la\Gen(1^n)},(x)_{x\la \bit^n})\geq 1-2^{-n^{\tau}}\label{eqn:SD}
\end{align}
for all sufficiently large $n\in\N$.
Let $\cQ$ be a QPT algorithm such that, on input $1^n$, it samples $y\la \bit^n$ with probability $1/2$ and 
$x\la\mathsf{Gen}(1^n)$ with probability $1/2$.
We claim that thus defined $\cQ$ is the desired sampling algorithm.
For contradiction, assume that, for any integer $k>0$, there exists a QPT algorithm $\cA$ such that 
\begin{align}
\label{assumption_uniformQEFID_1}
    \Pr_{x\la\cQ(1^n)}[ (\mathsf{yes}\la \cA(x) \wedge x\in\cL_{\mathsf{Yes}})\vee(\mathsf{no}\la\cA(x)\wedge x\in\cL_{\mathsf{No}} )\vee (x\notin\cL_{\mathsf{yes}}\cup\cL_{\mathsf{no}})]> \frac{1}{2}+\frac{1}{n^k}
\end{align}
for infinitely many $n\in\N$. 
Then, we construct a QPT adversary $\cB$ that breaks the QPRG $\Gen$ as follows:
$\cB$ receives $x$, and runs $\mathsf{yes}/\mathsf{no}\la\cA(x) $.
$\cB$ outputs 1 if $\mathsf{yes}\la\cA(x)$ and outputs 0 otherwise.
We use the following \cref{claim:high,claim:low}.

\begin{claim}\label{claim:high}
For any $\Delta(n)=w(\log(n))$,
we have
    \begin{align}
        \Pr_{x\la \bit^n}[K(x) >n-\Delta]\geq 1-2^{-\Delta+1}.
    \end{align}
\end{claim}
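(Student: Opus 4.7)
The plan is to prove this via a straightforward counting argument that is standard in Kolmogorov complexity. The key observation is that the set of strings with low Kolmogorov complexity is small simply because there are few short programs.

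First, I would upper bound the size of the set $S \coloneqq \{x \in \bit^n : K(x) \le n - \Delta\}$. By definition, every $x \in S$ is produced by the universal machine $U$ from some program of length at most $n - \Delta$. Since the number of binary strings of length at most $n - \Delta$ is
\begin{align}
\sum_{i=0}^{n-\Delta} 2^i = 2^{n-\Delta+1} - 1 \le 2^{n-\Delta+1},
\end{align}
and the map from programs to their outputs can only shrink the set, we obtain $|S| \le 2^{n-\Delta+1}$.

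Next, since $x$ is drawn uniformly from $\bit^n$, which has $2^n$ elements,
\begin{align}
\Pr_{x \la \bit^n}[K(x) \le n - \Delta] = \frac{|S|}{2^n} \le \frac{2^{n-\Delta+1}}{2^n} = 2^{-\Delta+1}.
\end{align}
Taking the complementary event yields the claimed bound $\Pr_{x \la \bit^n}[K(x) > n - \Delta] \ge 1 - 2^{-\Delta+1}$.

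There is essentially no obstacle here — the hypothesis $\Delta(n) = \omega(\log n)$ is not even needed for the bound as stated (it only becomes relevant when one wants $2^{-\Delta+1}$ to be negligible, which is how the claim will be used downstream in the proof of \cref{thm:QEFID_GapK}). The only minor subtlety is being careful with the constant in the exponent when summing the geometric series, which is why the bound is $2^{-\Delta+1}$ rather than $2^{-\Delta}$.
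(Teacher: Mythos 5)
Your proof is correct and is essentially identical to the paper's: both bound the number of strings $x\in\bit^n$ with $K(x)\le n-\Delta$ by the number of programs of length at most $n-\Delta$, namely $2^{n-\Delta+1}$, and divide by $2^n$. Your remark that the hypothesis $\Delta(n)=\omega(\log n)$ is not needed for the bound itself is also accurate.
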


\begin{claim}\label{claim:low}
For an arbitrary real $G>0$, we have
\begin{align}
    \Pr_{x\la\Gen(1^n)}[K(x)< n-n^{\tau} + \log(1/G) +O(\log(n))]\geq 1-\left(G+ 2^{-n^{\tau}}\right)
\end{align}
for infinitely many $n\in\N$.
\end{claim}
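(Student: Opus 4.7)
The plan is to exploit \cref{lem:far}: since the statistical distance of $\Gen(1^n)$ from uniform is at least $1-2^{-n^\tau}$, the algorithm $\Gen(1^n)$ places at least $1-2^{-n^\tau}$ of its mass on a set of at most $2^{n-n^\tau}$ strings, so a typical output of $\Gen$ is compressible via enumeration of a ``heavy'' subset of outputs. Concretely, I would fix the threshold $\alpha \seteq G\cdot 2^{n^\tau - n}$, define
\begin{align}
H \seteq \left\{ x \in \bit^n : \Pr[x \la \Gen(1^n)] \geq \alpha \right\},
\end{align}
and establish two facts separately: every $x\in H$ has small Kolmogorov complexity, and $\Gen(1^n)$ lands in $H$ with probability at least $1-G-2^{-n^\tau}$.

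For the complexity bound, $\sum_x \Pr[x\la\Gen(1^n)] \le 1$ forces $|H| \le 1/\alpha = G^{-1}\cdot 2^{n-n^\tau}$. An unbounded Turing machine, with the code of $\Gen$ hardwired, can enumerate $H$ in some canonical order by computing $\Pr[x\la\Gen(1^n)]$ exactly for each $x\in\bit^n$ and keeping those exceeding $\alpha$. Thus each $x\in H$ admits a description consisting of a constant-size program together with self-delimiting encodings of $n$, $\tau$, $G$, and the index of $x$ in the enumeration; the index dominates and costs $\lceil\log|H|\rceil \le n-n^\tau+\log(1/G)+1$ bits, while all other quantities cost $O(\log n)$ bits (absorbing the encoding of $G$ under the mild assumption $G \ge 2^{-\poly(n)}$). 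Summing yields $K(x) < n-n^\tau+\log(1/G)+O(\log n)$ for every $x\in H$.

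For the probability bound, I would use the exponentially-statistically-far property to pick a set $S\subseteq\bit^n$ realizing the statistical distance: $\Pr[\Gen(1^n)\in S]\ge 1-2^{-n^\tau}$ and $\Pr_{x\la\bit^n}[x\in S]\le 2^{-n^\tau}$, hence $|S|\le 2^{n-n^\tau}$. Pointwise bounding each probability by $\alpha$ gives
\begin{align}
\Pr_{x\la\Gen(1^n)}[x\in S\setminus H]\; \le\; |S|\cdot\alpha \; \le\; 2^{n-n^\tau}\cdot G\cdot 2^{n^\tau-n} \;=\; G,
\end{align}
so $\Pr_{x\la\Gen(1^n)}[x\notin H]\le \Pr[\Gen(1^n)\notin S]+\Pr[\Gen(1^n)\in S\setminus H]\le 2^{-n^\tau}+G$. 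Combining the two parts would then yield the claim for all sufficiently large $n$, and hence for infinitely many $n\in\N$.

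The only step that looks like it requires care is the bookkeeping of the description length, since both the threshold $\alpha$ and the exponent $n^\tau$ must be represented inside the program. I expect no genuine obstacle here: $n$ and $\tau$ are encoded in $O(\log n)$ bits, $G$ is treated as a fixed real (or a rational approximation costing $O(\log(1/G))$ bits, which is absorbed by the $\log(1/G)$ term already present in the target bound), and the remaining cost is dominated by the index into $H$.
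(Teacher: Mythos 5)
Your proposal is correct and follows essentially the same route as the paper: your heavy set $H$ is exactly the paper's set $C$ (threshold $G\cdot 2^{-n+n^{\tau}}$), the enumeration-by-index argument for $K(x)\le n-n^{\tau}+\log(1/G)+O(\log n)$ is the same, and your union bound via the SD-maximizing set $S$ is just a repackaging of the paper's explicit $A,B,C$ partition computation showing $\sum_{x\in C}\Pr[x\la\Gen(1^n)]\ge 1-G-2^{-n^{\tau}}$. The bookkeeping caveat you raise about encoding $G$ and the threshold is handled at the same level of (in)formality in the paper, and is harmless in the intended application $G=2^{-n^{\epsilon}}$.
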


\cref{claim:high,claim:low} will be shown later.
We use \cref{claim:low} by setting $G=2^{-n^{\epsilon}}$.
Hence, we have 
\begin{align}
  1-2\cdot 2^{-n^{\epsilon}}&\leq \Pr_{x\la\Gen(1^n)}[K(x)<  n-n^{\tau}+n^{\epsilon}+O(\log(n))]\\
  &\leq \Pr_{x\la\Gen(1^n)}[K(x)<  n-3n^{\epsilon}+n^{\epsilon}+O(\log(n))]\\
  &\leq \Pr_{x\la\Gen(1^n)}[K(x)<  n-n^{\epsilon}]
\end{align}
for infinitely many $n\in\N$.
From \cref{claim:high,claim:low,assumption_uniformQEFID_1}, 
we have
\begin{align}
    \frac{1}{2}+\frac{1}{n^k}&<\Pr_{x\la\cQ(1^n)}[ (\mathsf{yes}\la \cA(x) \wedge x\in\cL_{\mathsf{Yes}}) \vee (\mathsf{no}\la\cA(x)\wedge x\in\cL_{\mathsf{No}}) \vee (x\notin\cL_{\mathsf{yes}}\cup\cL_{\mathsf{no}})]\\
    &<\frac{1}{2}\left(\Pr_{x\la\Gen(1^n)}[ \mathsf{yes}\la \cA(x)] +\Pr_{x\la \bit^n}[\mathsf{no}\la\cA(x)]+2\cdot 2^{-n^{\epsilon}}+2^{-\Delta+1}\right)\\
    &\leq \frac{1}{2}\left(\Pr_{x\la\Gen(1^n)}[ \mathsf{yes}\la \cA(x) ] +1-\Pr_{x\la \bit^n}[\mathsf{yes}\la\cA(x) ]+2\cdot 2^{-n^{\epsilon}}+2^{-\Delta+1}\right)
\end{align}
for infinitely many $n\in\N$.
This implies that
\begin{align}
   \frac{2}{n^{k}}-2\cdot 2^{-n^{\epsilon}}-2^{-\Delta+1} <
   \Pr_{x\la\Gen(1^n)}[\mathsf{yes}\la\cA(x)]-\Pr_{x\la \bit^n}[\mathsf{yes}\la\cA(x)].
\end{align}
This implies that, for some polynomial $q$,
\begin{align}
\abs{\Pr_{x\la\Gen(1^n)}[1\la\cB(1^n,x)] -\Pr_{x\la \bit^n}[1\la\cB(1^n,x)]}\geq \frac{1}{q(n)}
\end{align}
for infinitely many $n\in\N$.
This is the contradiction.

\begin{proof}[Proof of \cref{claim:high}]
    The number of string $y\in\bit^n$ such that $K(y)\leq n-\Delta$ is at most $2^{n-\Delta+1}$.
    Therefore, we have
    \begin{align}
        \Pr_{x\la \bit^n}[K(x)\leq n-\Delta]\leq \frac{2^{n-\Delta+1}}{2^n}=2^{-{\Delta}+1}.
    \end{align}
\end{proof}

\begin{proof}[Proof of \cref{claim:low}]
    We define the following sets.
    \begin{align}
        &A\seteq\{x\in\bit^n:\Pr[x\la\Gen(1^n)]<2^{-n}\}\\
        &B\seteq \{x\in\bit^n: 2^{-n}\leq \Pr[x\la\Gen(1^n)]< G\cdot 2^{-n+n^{\tau}}\}\\
        &C\seteq\{x\in\bit^n:G\cdot 2^{-n+n^{\tau}}\leq \Pr[x\la\Gen(1^n)]\leq 1\}.
    \end{align}
    From \cref{eqn:SD}, we have
    \begin{align}
        1- 2^{-n^{\tau}}&\leq \sum_{x\in A}(\Pr[x\la \bit^n ] - \Pr[x\la\Gen(1^n)])\\
        & =\sum_{x\in A} (2^{-n} - \Pr[x\la\Gen(1^n)])\\
        &\leq 2^{-n} \abs{A}.
    \end{align}
    This implies that
    \begin{align}
        \abs{B}+\abs{C}\leq 2^{n-n^{\tau}}\label{eqn:count}.
    \end{align}
    Furthermore, we have
    \begin{align}
        1- 2^{-n^{\tau}}&\leq \sum_{x\in B}(\Pr[x\la\Gen(1^n)]-2^{-n})+\sum_{x\in C}(\Pr[x\la\Gen(1^n)]-2^{-n})\\
        &\leq \sum_{x\in B}G\cdot 2^{-n+n^{\tau}}+\sum_{x\in C}\Pr[x\la\Gen(1^n)] \\
        &\leq \abs{B}G\cdot 2^{-n+n^{\tau}} +\sum_{x\in C}\Pr[x\la\Gen(1^n)].
    \end{align}
    This and \cref{eqn:count} implies that
    \begin{align}
       1- G- 2^{-n^{\tau}} &\leq \sum_{x\in C}\Pr[x\la\Gen(1^n)].
    \end{align}
    Now, the remaining part is to show that 
    \begin{align}
        K(x)< n-n^{\tau} +\log(1/G) +O(\log(n))
    \end{align}
    for all $x\in C$.
    There exists a Turing machine $\cM$ such that it generates $x\in C$ by taking as input the $\Gen$, $n$, $\frac{1}{G}\cdot 2^{n-n^{\tau}}$
    and the index $i$ 
    of $x\in C$.~\footnote{ 
    A Turing machine $y\la\cM\left(\Gen,n,\frac{1}{G}2^{n-n^\tau},i\right)$ works as follows:
    For all $y\in\bit^n$, $\cM$ computes $\Pr[y\la\Gen(1^n)]$ and adds $y\in\mathsf{High}$ if $\Pr[y\la\Gen(1^n)]\geq G\cdot 2^{-n+n^{\tau}}$.
    $\cM$ outputs the $i$-th string that belongs to $\mathsf{High}$.
    }
    Therefore, $K(x)$ is upper bounded by
    \begin{align}
        O(1)+O(\log(n)) +\log(\frac{1}{G}\cdot 2^{n-n^{\tau}} )\leq n-n^{\tau} +\log(1/G)+O(\log(n))
    \end{align}
    for all sufficiently large $n\in\N$.
\end{proof}
\end{proof}

\subsection{Proof of \cref{thm:kolmogorov_from_owpuzz}}
We can show \cref{thm:kolmogorov_from_owpuzz} in the almost same way as \cref{thm:QEFID_GapK} by considering nuQPRGs instead of QPRGs.
Therefore, instead of re-providing the proof, we briefly explain the difference.

In the proof of \cref{thm:QEFID_GapK}, we consider QPRGs $\Gen(1^n)$ given in \cref{lem:far} and construct 
a QPT algorithm $\cQ$ such that, on input $1^n$, it outputs $y\la\Gen(1^n)$ with probability $1/2$ and $y\la\bit^n$ with probability $1/2$.
We showed the strongly quantum hardness of $\mathsf{GapK}$ over $x\gets\cQ$ 
by using the fact that $K(x)$ is sufficiently small with overwhelming probability over the distribution $x\la\Gen(1^n)$ when $\Gen(1^n)$ is exponentially far from the uniform distribution.

For the proof of \cref{thm:kolmogorov_from_owpuzz}, we consider nuQPRGs $\Gen(1^n,\mu)$ given in \cref{lem:non_uniform_far} instead of considering QPRGs.
Then, we construct a QPT algorithm $\cQ$ that, on input $1^n$, samples $\mu^*\la[n]$, and outputs $y\la \Gen(1^n,\mu^*)$ with probability $1/2$ and $y\la\bit^n$ with probability $1/2$.
Now, we have $K(x)$ is sufficiently small with overwhelming probability over the distribution $x\la\Gen(1^n,\mu^*)$.
Furthermore, such $\mu^*$ is sampled with probability $1/n$. 
Therefore, we can show the weakly-quantum-average-hardness of $\mathsf{GapK}$ over $x\gets\cQ$.

\if0
\paragraph{Weak quantum-average-hardness of $\mathsf{GapK}$ from non-uniform QEFID with uniform distribution}

By leveraging the argument above, we can show that the existence of OWPuzz implies weak quantum-average-hardness of GapK problem.
The proof is almost the same as \cref{thm:QEFID_GapK}.
The difference is that we consider non-uniform QEFID with uniform distribution instead of QEFID with uniform distribution.
For claity, we describe the proof.

\begin{proof}[Proof of \cref{thm:kolmogorov_from_owpuzz}]
    
Assume that OWPuzz exists.
Then, for any real $0<\epsilon<1$, 
we construct $\cD=\{\cD_n\}_{n\in\N}$ such that for any QPT algorithm $\cA$, 
\begin{align}
    \Pr_{x\la\cD_n}[ \mathsf{no}\la \cA(x) \wedge x\in\cL_{\mathsf{Yes}}]+\Pr_{x\la\cD_n}[\mathsf{yes}\la\cA(x)\wedge x\in\cL_{\mathsf{No}} ]\geq \frac{1}{n^k}
\end{align}
for all sufficiently large $n\in\N$.
Let $\tau\seteq\frac{1+\epsilon}{2}$ so that $\epsilon<\tau < 1$.
From \cref{lem:non_uniform_far,thm:QEFID_OWPuzz}, for any real $0<\tau<1$, there exists a non-uniform QEFID with uniform distribution $\Gen$ such that
\begin{align}
    \mathsf{SD}((x)_{x\la\Gen(1^n,\mu^{*}(n))},(x)_{x\la \bit^n})\geq 1-2^{-n^{\tau}}
\end{align}
for all sufficiently large $n\in\N$.
Let $\cD_n$ be a distribution that uniformly randomly samples $\mu^{*}(n)\la[n]$, and then samples $x\la \bit^n$ with probability $1/2$ and $x\la\mathsf{Gen}(1^n,\mu^*(n))$ with probability $1/2$.

For contradiction, assume that, for any integer $k>0$, there exists a QPT algorithm $\cA$ such that 
\begin{align}
\label{assumption_uniformQEFID}
    \Pr_{x\la\cD_n}[ (\mathsf{yes}\la \cA(x) \wedge x\in\cL_{\mathsf{Yes}})\cup(\mathsf{no}\la\cA(x)\wedge x\in\cL_{\mathsf{No}} )\cup (x\notin\cL_{\mathsf{yes}}\cup\cL_{\mathsf{no}})]> 1-\frac{1}{n^k}
\end{align}
for infinitely many $n\in\N$, where $\cL_{\mathsf{Yes}}$ (resp. $\cL_{\mathsf{No}}$) is the set of yes (resp. no) instances of $\mathsf{GapK}[n-n^{\epsilon},n-\Delta]$.

Then, we construct a QPT adversary $\cB$ that breaks the security of non-uniform QEFID with uniform distribution $\Gen$ as follows:
$\cB$ receives $x$, and runs $\mathsf{yes}/\mathsf{no}\la\cA(x) $.
$\cB$ outputs 1 if $\mathsf{no}\la\cA(x)$ and outputs 0 otherwise.
We use the \cref{claim:high,claim:low}.

\begin{claim}[Restatement of \cref{claim:high}]\label{claim:kolmogorov}
    We have
    \begin{align}
        \Pr_{x\la \bit^n}[K(x) >n-\Delta]\geq 1-2^{-\Delta+1}.
    \end{align}
\end{claim}

\begin{claim}[Restatement of \cref{claim:low}]\label{claim:low_kolmogorov}
For an arbitrary real $0<G$, we have
\begin{align}
    \Pr_{x\la\Gen(1^n,\mu^*(n))}[K(x)< n-n^{\tau} + \log(1/G) +O(\log(n))]\geq 1-\left(G+ 2^{-n^{\tau}}\right)
\end{align}
for infinitely many $n\in\N$.
\end{claim}

\cref{claim:kolmogorov} is the same as \cref{claim:high}.
\cref{claim:low_kolmogorov} is shown in the same way as \cref{claim:low}.
We use \cref{claim:low} by setting $G=2^{-n^{\epsilon}}$.
Hence, we have 
\begin{align}
   \Pr_{x\la\Gen(1^n,\mu^*(n))}[K(x)<  n-n^{\epsilon}]\geq 1-2\cdot 2^{-n^{\epsilon}}
\end{align}
for infinitely many $n\in\N$.
From \cref{claim:high,claim:low,assumption_uniformQEFID}, 
we have
\begin{align}
    1-\frac{1}{n^k}&<\Pr_{x\la\cD_n}[ (\mathsf{yes}\la \cA(x) \wedge x\in\cL_{\mathsf{Yes}}) \cup (\mathsf{no}\la\cA(x)\wedge x\in\cL_{\mathsf{No}}) \cup (x\notin\cL_{\mathsf{yes}}\cup\cL_{\mathsf{no}})]\\
    &<\frac{1}{2n}\left(\Pr_{x\la\Gen(1^n,\mu^*(n))}[ \mathsf{yes}\la \cA(x)] + \Pr_{x\la \bit^n}[\mathsf{no}\la\cA(x)]\right)+ \left(1-\frac{1}{n}\right)+ 2\cdot 2^{-n^{\epsilon}}+2^{-n^{\tau}}+2^{-\Delta} \\
    &=\frac{1}{2n}\left(\Pr_{x\la\Gen(1^n,\mu^*(n))}[ \mathsf{yes}\la \cA(x) ] +1-\Pr_{x\la \bit^n}[\mathsf{yes}\la\cA(x) ]\right)+ \left(1-\frac{1}{n}\right)+2\cdot 2^{-n^{\epsilon}}+2^{-n^{\tau}}+2^{-\Delta}
\end{align}
for infinitely many $n\in\N$.
This implies that
\begin{align}
  1- \frac{2}{n^{k-1}}-2\cdot 2^{-n^{\epsilon}}-2^{-n^{\tau}}-2^{-\Delta} <\Pr_{x\la\Gen(1^n,\mu^*(n))}[\mathsf{yes}\la\cA(x)]-\Pr_{x\la \bit^n}[\mathsf{no}\la\cA(x)].
\end{align}
This implies that, for some polynomial $q$,
\begin{align}
\abs{\Pr_{x\la\Gen(1^n)}[1\la\cB(x)] -\Pr_{x\la \bit^n}[1\la\cB(1^n)]}\geq \frac{1}{q(n)}
\end{align}
for infinitely many $n\in\N$.
This contradicts that $\Gen$ is QEFID with uniform distribution.

\end{proof}
\fi

\if0
\section{Classical}
\label{sec:classical}

\begin{definition}[The QAS/OWF Condition~\cite{MorShiYam24}]\label{def:QAS/OWF}
    The QAS/OWF condition holds if there exist a polynomial $p$, a QPT algorithm $\cQ$ that takes $1^n$ as input and outputs a classical string, 
    and a function $f:\bit^*\to\bit^*$ that is computable in classical deterministic polynomial-time
    such that for any PPT algorithm $\cS$, the following holds:
    if we define
    \begin{align}
        \Sigma_\cS := \left\{ n\in\mathbb{N} :  \mathsf{SD}(\cQ(1^n),\cS(1^n)) \le \frac{1}{p(n)} \right\},
    \end{align}
    then $f$ is a classically-secure OWF on $\Sigma_\cS$.
\end{definition}

\begin{lemma}\label{lem:supremacy_to_GapK}
If IV-PoQ exist then classical-average-hard GapK exist.
\end{lemma}

\begin{lemma}\label{lem:GapK_to_supremacy}
If classical-average-hard GapK exist then IV-PoQ exist.
\end{lemma}

\begin{proof}[Proof of \cref{lem:GapK_to_supremacy}]
Assume that the QAS/OWF condition is not satisfied.    
\end{proof}
\fi

\section{Results on IV-PoQ}

\if0
\begin{assumption}[Strong-Classical-Average-Hardness of $\mathsf{GapK}\lbrack s_1,s_2\rbrack$ ]\label{ass:c_Strong_AH_hardness_GapK}
     Let $s_1:\N\ra\N$ and $s_2:\N\ra\N$ be polynomial-time-computable functions with $s_2(n)-s_1(n)>w(\log(n))$ for all $n\in\N$.
    There exist an integer $k>0$ and a QPT samplable distribution family $\cD=\{\cD_n\}_{n\in\mathbb{N}}$ such that for any PPT algorithm $\cA$, 
    \begin{align}
        \Pr_{x\la\cD_n}[ \mathsf{no}\la \cA(x) \wedge x\in\cL_{\mathsf{Yes}}]+\Pr_{x\la\cD_n}[\mathsf{yes}\la\cA(x)\wedge x\in\cL_{\mathsf{No}} ]\geq \frac{1}{2}- \frac{1}{n^k}
    \end{align}
    for all sufficiently large $n\in\N$, where $\cL_{\mathsf{Yes}}$ (resp. $\cL_{\mathsf{No}}$) is the set of yes (resp. no) instances of $\mathsf{GapK}[s_1,s_2]$.
\end{assumption}
\fi

\if0
\begin{theorem}
    The following three are equivalent:
    \begin{itemize}
        \item The QAS/OWF condition holds.
        \item There exists a real $0<\epsilon<1$ and a polynomial-time-computable function $\Delta(n)=w(\log(n))$ such that \cref{ass:c_AH_hardness_GapK} holds with $s_1=n-n^{\epsilon}$ and $s_2=n-\Delta$.
        \item \cref{ass:c_AH_hardness_Estimate} holds.
    \end{itemize}
\end{theorem}
\fi

\begin{theorem}
        If there exists a real $0<\epsilon<1$ and a polynomial-time-computable function $\Delta(n)=w(\log(n))$ such that \cref{ass:c_AH_hardness_GapK} holds with $s_1=n-n^{\epsilon}$ and $s_2=n-\Delta$,
        then
        the QAS/OWF condition holds.
\end{theorem}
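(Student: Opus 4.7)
The plan is to argue by contrapositive: I assume the QAS/OWF condition fails, and construct, for the QPT sampler $\cQ$ witnessing \cref{ass:c_AH_hardness_GapK}, a PPT algorithm that decides $\mathsf{GapK}[n-n^{\epsilon}, n-\Delta]$ correctly on $x \gets \cQ(1^n)$ with failure probability smaller than any prescribed $1/n^k$, on infinitely many $n$. By \cref{lem:shirakawa}, applied to $\cQ$ with a sufficiently large constant $k'$ of our choice, there is a PPT sampler $\cS$ with $\mathsf{SD}(\cQ(1^n), \cS(1^n)) \leq 1/n^{k'}$ such that for every polynomial-time-computable function family $\{f_n\}$ a PPT distributional inverter $\cR$ exists, both statements holding on infinitely many $n$.

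The first substantive step is to mirror the proof of \cref{thm:estimate} in the classical setting, replacing the use of \cref{thm:dist_invert} by the inverter guaranteed by \cref{lem:shirakawa}. Let $m(n)$ be the randomness length of $\cS(1^n)$ and define the polynomial-time-computable family
\begin{align}
f_n(r, i) \seteq (i, y_1, \dots, y_{i-1}), \quad (y_1, \dots, y_n) = \cS(1^n; r),\ r \in \bit^{m(n)},\ i \in [n],
\end{align}
with $i$ encoded in $\lceil \log n \rceil$ bits. The resulting inverter $\cR$ yields a PPT bit-extrapolator $\mathsf{Ext}(1^n, i, y_1, \dots, y_{i-1})$ that outputs the $i$-th bit of $\cS(1^n; \cR(i, y_1, \dots, y_{i-1}))$ and satisfies the classical analogue of the conclusion of \cref{thm:dist_invert} for $\cS$. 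Feeding $\mathsf{Ext}$ into the product-of-marginals construction from the proof of \cref{thm:estimate} produces a PPT algorithm $\mathsf{Estimate}$ that $c$-multiplicatively approximates $\Pr[y \gets \cS(1^n)]$ with probability at least $1 - 1/n^q$ over $y \gets \cS(1^n)$, for any $c>1$ and any $q$ of my choosing, obtained by taking $k'$ sufficiently large.

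Next, I transfer the estimator from $\cS$ to $\cQ$. I use the SD bound twice: once, definitionally, to preserve the success event up to additive $1/n^{k'}$ under the change of measure; and once via a standard averaging argument to show that $\Pr[y \la \cQ(1^n)]$ and $\Pr[y \la \cS(1^n)]$ agree within a factor of $2$ for all but a $1/n^{k''}$ fraction of $y \la \cQ(1^n)$. Consequently, $\mathsf{Estimate}$ is a PPT $(2c)$-multiplicative approximator to $\Pr[y \la \cQ(1^n)]$ with probability $\geq 1 - O(1/n^{\min(q, k'')})$ over $y \la \cQ(1^n)$. I then plug this PPT estimator into the exact argument of \cref{thm:owpuzz_from_kolmogorov} (the PPT algorithm that outputs $\mathsf{yes}$ iff $\mathsf{Estimate}(y)$ exceeds a threshold of the form $2^{-s_2 + (s_2 - s_1)/2}$), obtaining a PPT decider for $\mathsf{GapK}[n-n^\epsilon, n-\Delta]$ whose total error is $O(1/n^q) + 2^{-\Omega(n^\epsilon - \Delta)}$ on infinitely many $n$. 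This contradicts \cref{ass:c_AH_hardness_GapK}.

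The main obstacle is the ``$\cS$ versus $\cQ$'' transfer: the estimator built from the inverter only directly controls probabilities under $\cS$, whereas the GapK instances (and the Kolmogorov-counting argument inside the proof of \cref{thm:owpuzz_from_kolmogorov}) are defined relative to $\cQ$, so one must carefully quantify how a small statistical distance translates into pointwise closeness of the probability-mass functions on a large subset of the support. A secondary issue is quantifier order: \cref{lem:shirakawa} commits to $\cS$ before the function family is chosen, so the single family $\{f_n\}$ must simultaneously enable extrapolation at every prefix length $i$, which is why $i$ is packed into the input of $f_n$.
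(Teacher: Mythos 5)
Your proposal is correct and follows essentially the same route as the paper: the paper also argues via the contrapositive of the QAS/OWF condition, invokes \cref{lem:shirakawa} with a prefix-revealing function $f_n(r,i)$ built from the PPT sampler $\cS$, turns the distributional inverter $\cR$ into a bit-extrapolator $\mathsf{Ext}$, feeds it into the product-of-marginals estimator of \cref{thm:estimate}, and concludes with the threshold GapK decider of \cref{thm:owpuzz_from_kolmogorov} (formally factored through \cref{ass:c_AH_hardness_Estimate} via \cref{thm:c_owpuzz_from_kolmogorov,thm:c_estimate}). The only difference is where the $\cS$-versus-$\cQ$ transfer happens --- the paper moves the extrapolation guarantee from $\cS$ to $\cQ$ by a triangle inequality on prefix distributions and then analyzes the estimator directly against $\cQ$, whereas you analyze the estimator against $\cS$ and transfer at the end via change of measure plus pointwise multiplicative closeness of the probability mass functions, which is a valid and essentially equivalent variant.
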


This theorem is obtained from the following lemmas.\footnote{A proof of \cref{thm:c_owpuzz_from_kolmogorov} is almost the same as that of \cref{thm:owpuzz_from_kolmogorov}, which is a slighly stonger version of \cite{IRS21}, and therefore
we omit its proof.}

\begin{lemma}[\cite{IRS21}]\label{thm:c_owpuzz_from_kolmogorov}
    Suppose that there exists a real $0<\epsilon<1$ and a polynomial-time-computable function $\Delta(n)=w(\log(n))$ such that \cref{ass:c_AH_hardness_GapK} holds with $s_1=n-n^{\epsilon}$ and $s_2=n-\Delta$.
    Then, \cref{ass:c_AH_hardness_Estimate} holds.
\end{lemma}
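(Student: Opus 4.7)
The plan is to mimic the proof of \cref{thm:owpuzz_from_kolmogorov} essentially verbatim, but replacing QPT adversaries with PPT adversaries throughout. The structural observations that drive the argument (the counting bounds on low-Kolmogorov-complexity strings and the compressibility of high-probability strings under a fixed sampling algorithm) are purely information-theoretic and combinatorial, so they do not care whether the adversary is classical or quantum. Only the computational model of the adversary changes.

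Concretely, I would prove the contrapositive. Suppose \cref{ass:c_AH_hardness_Estimate} fails. Then for the QPT sampler $\cQ$ (and in particular for the sampler that \cref{ass:c_AH_hardness_GapK} would require), and for any constants $c>1$ and $q>0$, there exists a \emph{PPT} algorithm $\mathsf{Estimate}$ such that
\begin{align}
\Pr_{y\la\cQ(1^n)}\!\left[\tfrac{1}{c}\Pr[y\la\cQ(1^n)]\le\mathsf{Estimate}(y)\le c\Pr[y\la\cQ(1^n)]\right] > 1-\tfrac{1}{n^q}
\end{align}
for infinitely many $n$. Fix $c=100/99$ as in the quantum proof. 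Then define a PPT algorithm $\cA$ that on input $y\gets\cQ(1^n)$ outputs $\mathsf{yes}$ iff $\mathsf{Estimate}(y)\ge 2^{-s+\Delta/2}$, where $s=n-n^\epsilon$ and $s_2=n-\Delta$ (so $s_1=s-\Delta'$ for the analogue of the quantum parameters, with the threshold calibrated exactly as in \cref{thm:owpuzz_from_kolmogorov}). Since $\mathsf{Estimate}$ is PPT, $\cA$ is PPT.

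Next I would invoke the two claims that are the heart of the quantum argument: (i) $\Pr_{y\la\cQ(1^n)}[\Pr[y\la\cQ(1^n)]<\tfrac{100}{99}\cdot 2^{-s+\Delta/2}\wedge K(y)\le s-\Delta]\le 2^{-\Delta/3}$, by counting $|\{y:K(y)\le s-\Delta\}|\le 2^{s-\Delta+1}$; and (ii) $\Pr_{y\la\cQ(1^n)}[\Pr[y\la\cQ(1^n)]\ge\tfrac{99}{100}\cdot 2^{-s+\Delta/2}\wedge K(y)\ge s]=0$, by describing the $i$-th heavy string using the code of $\cQ$, the parameters $n$ and $\tfrac{99}{100}2^{-s+\Delta/2}$, and an index of at most $s-\Delta/2+1$ bits. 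Neither claim depends on the adversary's computational model; they are identical to \cref{claim:low_prob_high_kolmogorov} and \cref{claim:high_prob_low_kolmogorov}.

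Combining these with the multiplicative-error guarantee of $\mathsf{Estimate}$ bounds the failure probability of $\cA$ on yes- and no-instances of $\mathsf{GapK}[s_1,s_2]$ by $\tfrac{1}{n^q}+2^{-\Delta/3}\le \tfrac{1}{n^k}$ for any prescribed $k$ and all large enough $n$, contradicting weakly-classical-average-hardness of GapK. There is no real obstacle here beyond bookkeeping: the only substantive change from \cref{thm:owpuzz_from_kolmogorov} is swapping ``QPT'' for ``PPT'' everywhere, and checking that the failure of \cref{ass:c_AH_hardness_Estimate} produces a \emph{classical} estimator (which it does by definition). Hence the proof goes through mutatis mutandis.
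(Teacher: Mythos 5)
Your proposal is correct and matches the paper's intent exactly: the paper omits this proof precisely because it is the proof of \cref{thm:owpuzz_from_kolmogorov} with ``QPT'' replaced by ``PPT'' for the estimator and the resulting GapK solver, the two counting/compressibility claims being model-independent. The only blemish is a small notational slip in your parameter bookkeeping (you write $s=n-n^\epsilon$ but then use $K(y)\ge s$ as the no-condition, whereas $s$ should play the role of $s_2$ as in the paper's relabeled proof), which does not affect the argument.
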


\begin{lemma}\label{thm:c_estimate}
       \cref{ass:c_AH_hardness_Estimate} implies the QAS/OWF condition.
\end{lemma}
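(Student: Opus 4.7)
The plan is to argue the contrapositive: assuming the QAS/OWF condition fails, I will construct, for any QPT $\cQ$, a PPT algorithm $\mathsf{Estimate}$ that $c$-approximates $\Pr[x\la\cQ(1^n)]$ with probability $>1-1/n^q$ over $x\la\cQ(1^n)$ for infinitely many $n$, contradicting \cref{ass:c_AH_hardness_Estimate}. The high-level structure mirrors the proof of \cref{thm:estimate}, but with the quantum distributional extractor of \cref{thm:dist_invert} replaced by a classical bit-by-bit extrapolator distilled from \cref{lem:shirakawa}.

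First, given the QPT $\cQ$ from \cref{ass:c_AH_hardness_Estimate} and a sufficiently large $k$, I invoke \cref{lem:shirakawa} to obtain a PPT sampler $\cS$ with $\mathsf{SD}(\cQ(1^n),\cS(1^n))\le 1/n^k$ for infinitely many $n$. Write $\cS(1^n;r)$ as a classical deterministic polynomial-time function $F_n$ of randomness $r\in\bit^{m(n)}$. In order to extract prefix-consistent randomness for every prefix length at once, I package all prefix functions into a single family by defining
\begin{align}
f^*_n(i,r)\seteq (i,\ F_n(r)_{1..i})
\end{align}
where $i\in[n]$ is given in binary. Applying \cref{lem:shirakawa} again to $\{f^*_n\}_{n\in\N}$, I obtain a PPT algorithm $\cR$ whose output on $(i,y_1..y_i)$ is close, in expectation over uniform $(i,r)$, to the uniform distribution over preimages. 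Because the outer uniform average over $i\in[n]$ contributes only a factor of $n$, the per-$i$ statistical-distance error inherits a bound $\le 1/n^{k-1}$, which is still $1/\poly$ with room to spare.

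From $\cR$ I build the desired PPT bit-by-bit extrapolator $\mathsf{Ext}(1^n,i,y_1..y_{i-1})$: run $(i,r')\la\cR(i-1,y_1..y_{i-1})$, compute $F_n(r')$, and output its $i$-th bit. Standard computation shows that for each $i\in[n]$,
\begin{align}
\mathsf{SD}\!\left((y_1,..,y_i)_{(y_1,..,y_i)\la \cS(1^n)},\ (y_1,..,y_{i-1},\mathsf{Ext}(1^n,i,y_1,..,y_{i-1}))_{(y_1,..,y_{i-1})\la \cS(1^n)}\right)\le \frac{1}{n^{(k-1)/2}},
\end{align}
which is exactly the classical analogue of the guarantee used in \cref{thm:estimate}. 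Plugging $\mathsf{Ext}$ into the very same estimator construction from the proof of \cref{thm:estimate} (choosing the number of repetitions and the parameters $a,b,d$ polynomially in $n$), I obtain a PPT $\mathsf{Estimate}_\cS$ that, for infinitely many $n$, outputs a value between $\tfrac{1}{c'}\Pr[x\la\cS(1^n)]$ and $c'\Pr[x\la\cS(1^n)]$ with probability $\ge 1-O(1/n^{q+1})$ over $x\la\cS(1^n)$, for any constants $c'>1$ and $q>0$ that we pick in advance.

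Finally, I transfer this estimator from $\cS$ to $\cQ$. Since $\mathsf{SD}(\cQ,\cS)\le 1/n^k$, a standard averaging argument (choose $\delta=1/n^{k/2}$) shows that with probability at least $1-4/n^{k/2}$ over $x\la\cQ(1^n)$, the two probabilities $\Pr[x\la\cS(1^n)]$ and $\Pr[x\la\cQ(1^n)]$ agree to within a $(1\pm 1/n^{k/2})$ multiplicative factor, and moreover the event that $\mathsf{Estimate}_\cS$ behaves correctly transfers from $\cS$ to $\cQ$ with the same $1/n^k$ loss. Choosing $c'<c$ and $k$ large enough relative to $q$ makes $\mathsf{Estimate}_\cS$ a valid $c$-approximation of $\Pr[x\la\cQ(1^n)]$ with probability $>1-1/n^q$ on the infinite set of good $n$'s, contradicting \cref{ass:c_AH_hardness_Estimate}. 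The main obstacle is the bookkeeping: ensuring that the same infinite set of $n$'s supports (i) $\cS\approx\cQ$, (ii) the distributional inverter guarantee for $f^*_n$, and (iii) the success of the Hoeffding-based rounding in $\mathsf{Estimate}_\cS$, so that the three simultaneous ``infinitely often'' conditions produce a single contradiction to the ``for all sufficiently large $n$'' statement of \cref{ass:c_AH_hardness_Estimate}.
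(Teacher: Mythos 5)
Your proposal is correct and follows essentially the same route as the paper: argue the contrapositive, invoke \cref{lem:shirakawa} to obtain a PPT sampler $\cS$ close to $\cQ$ together with a distributional inverter for the prefix function of $\cS$, build a bit-by-bit extrapolator $\mathsf{Ext}$, and reuse the estimator of \cref{thm:estimate}; the only cosmetic difference is that the paper transfers the extrapolation guarantee from $\cS$ to $\cQ$ via the triangle inequality and analyzes the estimator directly against $\cQ$, whereas you analyze it against $\cS$ and transfer the final multiplicative estimate, which also works. The ``bookkeeping obstacle'' you flag at the end is already resolved by the statement of \cref{lem:shirakawa}: a single application gives the closeness of $\cS$ to $\cQ$ and the inverter $\cR$ for the (later-chosen, $\cS$-dependent) function family simultaneously on the same infinitely many $n$, so it must not be read as two independent infinitely-often applications, and the Hoeffding step holds for all $n$ anyway.
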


\if0
\begin{lemma}\label{thm:c_kolmogorov_from_owpuzz}
    If the QAS/OWF condition holds, then for all real $0<\epsilon<1$, there exists a polynomial-time-computable function $\Delta(n)=w(\log(n))$ such that  \cref{ass:c_AH_hardness_GapK} holds with $s_1=n-n^{\epsilon}$ and $s_2=n-\Delta$.
\end{lemma}
\fi

\subsection{Proof of \cref{thm:c_estimate}}

\begin{proof}[Proof of \cref{thm:c_estimate}]
\if0
    \mor{ika wa technical overview ni kaku}
    In the proof of \cref{thm:estimate}, from the assumption of the non-existence of quantumly secure OWPuzzs and \cref{thm:dist_invert},
    we use a QPT $\mathsf{Ext}$ algorithm that can invert arbitrary QPT samplable distribution.
    However, the same strategy does not work in the classically secure case because it is non-trivial to extend \cref{thm:dist_invert} to the classically-secure case.
    Instead, here, we assume that QAS/OWF condition is not satisfied.
    By invoking the QAS condition is not satisfied, we replace arbitrary QPT samplable distribution with some PPT samplable distribution.
    By invoking the non-existence of OWFs, we construct $\mathsf{Ext}$ algorithm that can invert the PPT samplable distribution.
    Then, in the same way as \cref{thm:estimate}, by using $\mathsf{Ext}$, we construct $\mathsf{Estimate}$ algorithm.
    We describe the proof in more detail.
    \fi
    
    For contradiction, suppose that the QAS/OWF condition is not satisfied.
    Our goal is to show that for any real $c>1$, integer $q>0$, and QPT algorithm $\cQ$, 
    there exists a PPT algorithm $\mathsf{Estimate}$ such that 
    \begin{align}
        \Pr_{x\la\cQ(1^n)}\left[\frac{1}{c}\Pr[x\la\cQ(1^n)]\leq \mathsf{Estimate}(x)\leq c\Pr[x\la\cQ(1^n)]\right]\geq 1-\frac{1}{n^q}
    \end{align}
    for infinitely many $n\in\N$.

    From the assumption that the QAS/OWF condition is not satisfied and \cref{lem:shirakawa}, 
    for any real $k>0$, there exists a PPT algorithm $\cS$ such that 
    \begin{align}\label{chikai}
    \mathsf{SD}((x)_{x\la \cQ(1^n)},\cS(1^n;r)_{r\la \bit^{t(n)}})\leq \frac{1}{n^k}
    \end{align}
    for infinitely many $n\in\N$. 
    Here $t(n)$ is the length of the random seed of $\cS(1^n)$.
    Let $\Sigma$ be the set of such $n$.
    
    For each $n\in\mathbb{N}$, define a function $f_n:\bit^{m(n)}\to\bit^*$ as follows:
    \begin{enumerate}
        \item 
    Receive $(r,i)\in\bit^{t(n)}\times[n-1]$ as input.\footnote{Actually, the integer $i$ is encoded into a bit string. If the input bit string
    corresponds to an integer that is outside of $[n-1]$, $f_n$ outputs just 0.}
    \item 
    Compute $x=\cS(1^n;r)$.
    \item 
    Output $i$ and the first $i$ bits, $x_1,...,x_i$, of $x$.
    \end{enumerate}
 From \cref{lem:shirakawa}, there exists a PPT algorithm $\cR$ such that 
    \begin{align}
        \mathsf{SD}\left(\{(r,i),f_n(r,i)\}_{r\la\bit^{t(n)},i\gets[n-1]},
        \{\cR(1^{m(n)},f_n(r,i)),f_n(r,i)\}_{r\la\bit^{t(n)},i\gets[n-1]} 
        \right)\leq \frac{1}{n^k}
    \end{align}
for all $n\in\Sigma$.

Now, we define the following PPT algorithm, $\mathsf{Ext}$, as follows:
    \begin{enumerate}
        \item Receive $1^n$, $i\in[n-1]$, and $(y_1,...,y_i)\in\bit^i$ as input.
        \item Run $(r',i')\la \cR(1^{m(n)},i,y_1,...,y_i)$.
        \item Run $x=\cS(1^n;r')$, and output $x_{i+1}$.
    \end{enumerate}
We claim that $\mathsf{Ext}$ satisfies
\begin{align}\label{extrapolate}
    \mathsf{SD}( (x_1,...,x_{i+1})\la \cS(1^n), 
    (x_1,...,x_i,\mathsf{Ext}(1^n,i,x_1,...,x_i))_{x_1,...,x_i\la\cS(1^n)}  )<\frac{1}{n^{k-1}}
\end{align}
for all $i\in[n-1]$ and all $n\in\Sigma$.
This is shown as follows.
    \begin{align}
        \frac{1}{n^k}
        &\ge
        \mathsf{SD}\left(\{(r,i),f_n(r,i)\}_{r\la\bit^{t(n)},i\gets[n-1]},
        \{\cR(1^{m(n)},f_n(r,i)),f_n(r,i)\}_{r\la\bit^{t(n)},i\gets[n-1]} 
        \right)\\
         &=
        \mathsf{SD}\left(\{(r,i),(i,x_1,...,x_i)\}_{i\gets[n-1],r\gets\bit^{t(n)},(x_1,...,x_i)=\cS(1^n;r)},\right.\\
        &\left.\{\cR(1^{m(n)},(i,x_1,...,x_i)),(i,x_1,...,x_i)\}_{i\gets[n-1],r\gets\bit^{t(n)},(x_1,...,x_i)=\cS(1^n;r)} \right)\\
          &\ge
        \mathsf{SD}\left(\{(r,i),(i,x_1,...,x_i),x_{i+1}\}_{i\gets[n-1],r\gets\bit^{t(n)},(x_1,...,x_i)=\cS(1^n;r)},\right.\\
        &\left.\{\cR(1^{m(n)},(i,x_1,...,x_i)),(i,x_1,...,x_i),\mathsf{Ext}(1^n,i,x_1,...,x_i)\}_{i\gets[n-1],r\gets\bit^{t(n)},(x_1,...,x_i)=\cS(1^n;r)} \right)\\
           &\ge
        \mathsf{SD}\left(\{i,x_1,...,x_i,x_{i+1}\}_{i\gets[n-1],(x_1,...,x_i,x_{i+1})\gets\cS(1^n)},\right.\\
        &\left.\{i,x_1,...,x_i,\mathsf{Ext}(1^n,i,x_1,...,x_i)\}_{i\gets[n-1],(x_1,...,x_i)\gets\cS(1^n)} \right)\\
            &\ge
        \frac{1}{n-1}\sum_{i=1}^{n-1}
        \mathsf{SD}\left(\{x_1,...,x_i,x_{i+1}\}_{(x_1,...,x_i,x_{i+1})\gets\cS(1^n)},\right.\\
        &\left.\{x_1,...,x_i,\mathsf{Ext}(1^n,i,x_1,...,x_i)\}_{(x_1,...,x_i)\gets\cS(1^n)} \right).
    \end{align}

Then we claim that
\begin{align}
    \mathsf{SD}( (x_1,...,x_{i+1})\la \cQ(1^n), 
    (x_1,...,x_i,\mathsf{Ext}(1^n,i,x_1,...,x_i))_{x_1,...,x_i\la\cQ(1^n)}  )<\frac{1}{n^{k-2}}
\end{align}
for all $i\in[n-1]$ and all $n\in\Sigma$.
This is shown as follows
from the triangle inequality, \cref{extrapolate}, and \cref{chikai}.
\begin{align}
    &\mathsf{SD}( (x_1,...,x_{i+1})\la \cQ(1^n), 
    (x_1,...,x_i,\mathsf{Ext}(1^n,i,x_1,...,x_i))_{x_1,...,x_i\la\cQ(1^n)}  )\\
    &< \mathsf{SD}( (x_1,...,x_{i+1})\la \cQ(1^n), (x_1,...,x_{i+1})\gets\cS(1^n))\\
    &+\mathsf{SD}((x_1,...,x_{i+1})\gets\cS(1^n),(x_1,...,x_i,\mathsf{Ext}(1^n,i,x_1,...,x_i))_{x_1,...,x_i\la\cS(1^n)}  )\\
    &+\mathsf{SD}((x_1,...,x_i,\mathsf{Ext}(1^n,i,x_1,...,x_i))_{x_1,...,x_i\la\cS(1^n)},
    (x_1,...,x_i,\mathsf{Ext}(1^n,i,x_1,...,x_i))_{x_1,...,x_i\la\cQ(1^n)})\\
    &\le\frac{1}{n^k}+\frac{1}{n^{k-1}}+\frac{1}{n^k}\\
    &\le\frac{1}{n^{k-2}}.
\end{align}

By using $\mathsf{Ext}$, we construct a PPT algorithm $\mathsf{Estimate}$ as follows:
    \begin{enumerate}
    \item Receive $(y_1,...,y_n)\in\bit^n$ as input.
        \item For each $i\in[n]$, run as follows:
        \begin{itemize}
            \item Run $b\gets \mathsf{Ext}(i,y_1\cdots y_{i-1})$ for $ n^{100} n^{100q}$ times.
            Let $\mathsf{Count}_{y_1\cdots y_{i-1}}(b)$ be the number of times that $\mathsf{Ext}(i,y_1\cdots y_{i-1})$ outputs $b$.
            \item 
            Set 
            \begin{align}
            \widetilde{p}[y_i]\seteq \frac{\mathsf{Count}_{y_1\cdots y_{i-1}}(y_i)}{n^{100}n^{100q}}.
            \end{align}
        \end{itemize}
        \item Output the value of $\prod_{i=1}^n\widetilde{p}[y_i]$.
    \end{enumerate}
The analysis of $\mathsf{Estimate}$ is almost the same as \cref{thm:estimate},
and therefore we omit it.
\end{proof}

\if0
\subsection{Proof of \cref{thm:c_kolmogorov_from_owpuzz}}

\begin{proof}[Proof of \cref{thm:c_kolmogorov_from_owpuzz}]
Assume that the QAS/OWF condition holds.
     Then, by the definition of the QAS/OWF condition,
     there exist a polynomial $p$, a QPT algorithm $\cQ$ that takes $1^n$ as input and outputs a classical string, 
    and a function $f:\bit^*\to\bit^*$ that is computable in classical deterministic polynomial-time
    such that for any PPT algorithm $\cS$, the following holds:
    if we define
    \begin{align}
        \Sigma_\cS := \left\{ n\in\mathbb{N} :  \mathsf{SD}(\cQ(1^n),\cS(1^n)) \le \frac{1}{p(n)} \right\},
    \end{align}
    then $f$ is a classically-secure OWF on $\Sigma_\cS$.
 We divide the proof into the following two cases:
\paragraph{There exist a PPT algorithm $\cS$ and a finite subset $\Lambda\subseteq\mathbb{N}$ such that $\Sigma_\cS=\mathbb{N}\setminus\Lambda$.}
In this case, from \cref{lem:OWF_on_N}, classically-secure OWFs exist. 
If classically-secure OWFs exist, from ??, 
\cref{ass:c_AH_hardness_GapK} 
holds with $s_1=n-n^{\epsilon}$ and $s_2=n-\Delta$.

\paragraph{For any PPT algorithm $\cS$ and for any finite subset $\Lambda\subseteq\mathbb{N}$, $\Sigma_\cS\neq\mathbb{N}\setminus\Lambda$.}
In this case,
for any PPT algorithm $\cS$, there exists an $x\in\mathbb{N}\setminus\Sigma_\cS$ such that 
\begin{align}
    \mathsf{SD}(\cQ(1^x),\cS(1^x)) >\frac{1}{p(x)}.
\end{align}
For the sake of contradiction, assume that
\cref{ass:c_AH_hardness_GapK} 
with $s_1=n-n^{\epsilon}$ and $s_2=n-\Delta$ does not hold.
Then, 

\end{proof}

\begin{lemma}\label{lem:estimate_from_GapK}
    Suppose that \cref{ass:c_AH_hardness_Estimate} is satisfied.
    Then, there exists $s_1:\N\ra\N$ and $s_2:\N\ra\N$ with $s_2(n)-s_1(n)\geq w(\log(n))$, \cref{ass:c_AH_hardness_GapK} is satisfied.
\end{lemma}
\begin{proof}[Proof of \cref{lem:estimate_from_GapK}]
    For contradiction, suppose that for any $s_2(n)-s_1(n)\geq w(\log(n))$, $0<k$, and any QPT samplable distribution $\cD_n$, 
    there exists a PPT algorithm $\cA$ such that
    \begin{align}
        \Pr_{x\la\cD_n}[\mathsf{no}\la \cA(s_1,s_2,x)\wedge \cL_{\mathsf{Yes}}]  + \Pr_{x\la\cD_n}[\mathsf{yes}\la \cA(s_1,s_2,x)\wedge \cL_{\mathsf{No}}]\leq n^{-k} \label{eqn:gapK_solver}
    \end{align}
    for infinitely many $n\in\N$, where $\cL_{\mathsf{Yes}}$ is a yes instance of $\mathsf{GapK}[s_1,s_2]$ and $\cL_{\mathsf{No}}$ is a no instance of $\mathsf{GapK}[s_1,s_2]$.
    Then, for arbitrary QPT samplable distribution $\cD_n$, $c>1$, and $q>0$, we construct a PPT algorithm $\mathsf{Estimate}$ such that for infinitely many $n\in\N$,
    \begin{align}
    \Pr_{y\la\cD_n}\left[ \frac{1}{c}\Pr[x\la\cD_n]\leq \mathsf{Estimate}(x)\leq c \Pr[x\la\cD_n] \right]\geq 1-\frac{1}{n^q}.\label{eqn:result}
    \end{align}

    We construct $\mathsf{Estimate}$ by using the $\cA$ as follows:
    \begin{description}
        \item[Construction of $\mathsf{Estimate}$:]$ $
        \begin{itemize}
            \item Receive $x$.
            \item For all $I\in[n^{10/c}-1]$, run
            $
                \mathsf{yes}/\mathsf{no}\la\cA\left((I-1)\cdot n^{c/10}, I\cdot n^{c/10},x \right).
            $
            Here, $\cA$ is a PPT algorithm that satisfies \cref{eqn:gapK_solver} with $k=1000q$.
            \item Let $I^*$ be the smallest $I\in[n^{10/c}]$ such that
                $\mathsf{yes}\la \cA((I-1)\cdot n^{c/10}, I\cdot n^{c/10},x)$.
            \item Output $Y=\left(I^*-\frac{1}{2}\right)n^{c/10}$ and $2^{-Y}$.
            
        \end{itemize}
    \end{description}
\taiga{I do not caliculate in detail yet.}
    We first show that
    \begin{align}
        \Pr_{x\la\cD_n}\left[ Y-\frac{n^{c/10}}{2}< K(x)< Y+\frac{n^{c/10}}{2}\right]\geq 1-\frac{1}{n^{49cq}}
    \end{align}
    for infinitely many $n\in\N$.
From Markov inequality and \cref{eqn:gapK_solver}, with probability $1-\frac{1}{n^{500q}}$ over $x\la\cD_n$, $x$ satisfies 
\begin{align}
    \Pr_{x\la\cD_n}[\mathsf{no}\la \cA((I-1)n^{c/10},In^{c/10},x)\wedge \cL_{\mathsf{Yes}}]  + \Pr_{x\la\cD_n}[\mathsf{yes}\la \cA((I-1)n^{c/10},In^{c/10},x)\wedge \cL_{\mathsf{No}}]\leq n^{-500q},
\end{align}
for all $I\in[n^{10/c}-1]$ and all infinitely many $n\in\N$, where the probability is taken over the internal randomness of $\cA$.
From a union bound, except for probability $n^{-50cq}$, $\cA$ does not output an incorrect answer.
This implies that $Y$ satisfies 
\begin{align}
    \Pr\left[ Y-\frac{n^{c/1000}}{2}< K(x)< Y+\frac{n^{c/1000}}{2}\right]\geq 1-\frac{1}{n^{50cq}}
\end{align}
where the probability is taken over only the internal randomness of $\mathsf{Estimate}$.
Therefore, we have
\begin{align}
\Pr_{x\la\cD_n}\left[ Y-\frac{n^{c/1000}}{2}< K(x) < Y+\frac{n^{c/1000}}{2}\right]\geq 1-\frac{1}{n^{50cq}}- \frac{1}{n^{500q}}\geq 1-\frac{1}{n^{49cq}}
\end{align}
for infinitely many $n\in\N$.

Furthermore, it holds that
\begin{align}
    \Pr_{x\la\cD_n}\left[  \log(\frac{1}{\Pr[x\la\cD_n]})-c < K(x)<\log(\frac{1}{\Pr[x\la\cD_n]})+O(\log(n))   \right]\geq 1-2^{-c}
\end{align}

\end{proof}
\fi

\if0
\section{QPRGs from Quantum Average-Hardness of Kolmogorov Complexity}

\taiga{!!Under Preparation!!}
We can construct QPRGs from a special form of quantum average hardness of GapK problem.

\begin{theorem}\label{thm:AH_kolmogorov_QEFID}
    Suppose that for any integers $1\leq k$, there exists a real $0<\epsilon<1$ and a QPT algorithm $\cQ$ such that 
    \begin{align}
       \Pr_{x\la\cQ(1^n,0)}[x\in\cL_{\mathsf{yes}}]=\Pr_{x\la\cQ(1^n,1)}[x\in\cL_{\mathsf{no}}]=1
    \end{align}
    and, for any QPT $\cA$
    \begin{align}
        \Pr_{b\la\bit, x\la\cQ(1^n,b)}[ (\mathsf{no}\la \cA(x) \wedge x\in\cL_{\mathsf{Yes}}) \cup (\mathsf{yes}\la\cA(x)\wedge x\in\cL_{\mathsf{No}}) ]\geq \frac{1}{2}-\frac{1}{n^k}
    \end{align}
    for all sufficiently large $n\in\N$, where $\cL_{\mathsf{Yes}}$ (resp. $\cL_{\mathsf{No}}$) is the set of yes (resp. no) instances of $\mathsf{GapK}[n-n^{\epsilon},n-\Delta]$.
    Then, there exists a QPRG.
\end{theorem}

\begin{proof}
    For contradiction, we assume that no QPRG exists.
    Then, for any integers $1\leq k$ and reals $0<\epsilon<1$, any QPT algorithm $\cQ$ such that
    \begin{align}
       \Pr_{x\la\cQ(1^n,0)}[x\in\cL_{\mathsf{yes}}]=\Pr_{x\la\cQ(1^n,1)}[x\in\cL_{\mathsf{no}}]=1,
    \end{align}
    we construct a QPT algorithm $\cA$ such that
    \begin{align}
        \Pr_{b\la\bit,x\la\cQ(1^n,b)}[ (\mathsf{no}\la \cA(x) \wedge x\in\cL_{\mathsf{Yes}}) \cup (\mathsf{yes}\la\cA(x)\wedge x\in\cL_{\mathsf{No}}) ]\geq \frac{1}{2}-\frac{1}{n^k}
    \end{align}
    for infinitely many $n\in\N$, where $\cL_{\mathsf{Yes}}$ is...

From the assumption that no QPRG exists, for any QPT algorithm $\cQ^*(1^n,b)$ with
\begin{align}
    \mathsf{SD}(x_{x_\la\cQ^*(1^n,0)},x_{x\la\cQ^*(1^n,1)})\geq 1-\negl(n),
\end{align}
there exists a QPT algorithm $\cB$ such that
\begin{align}
    \abs{\Pr_{x\la\cQ^*(1^n)}[1\la\cB(x)]-\Pr_{x\la\bit^n}[1\la\cB(x)]}\geq 1/n.
\end{align}
Furthermore, we have 
\begin{align}
    \mathsf{SD}(x_{x_\la\cQ^*(1^n,0)},x_{x\la\cQ^*(1^n,1)})\geq 1-\negl(n).
\end{align}
This is because
\begin{align}
   \emptyset= \cL_{\mathsf{yes}}\cap\cL_{\mathsf{no}}.
\end{align}

\begin{description}
\item[The description of $\cA$:]$ $ 
    \begin{enumerate}
        \item Receive $x$.
        \item Run $\cB(x,\tau(n))$, and outputs $\mathsf{yes}$ if $1\la\cB(x,\tau(n))$ and $\mathsf{no}$ otherwise.
    \end{enumerate}
    \end{description}
We have
\begin{align}
\Pr_{x\la\cQ(1^n)}[(\mathsf{no}\la \cA(x)\wedge x\in\cL_{\mathsf{Yes}})\cup (\mathsf{yes}\la\cA(x)\wedge x\in\cL_{\mathsf{No}}) ]\geq \frac{1}{2}-\frac{1}{n^k}
\end{align}
for infinitely many $n\in\N$.

This is shown as follows:
\begin{align}
    &\frac{1}{2}\Pr_{x\la\cQ_{\mathsf{yes}}(1^n)}[\mathsf{no}\la \cA(x)\wedge x\in\cL_{\mathsf{Yes}}]+\frac{1}{2}\Pr_{x\la\bit^n}[\mathsf{yes}\la \cA(x)\wedge x\in\cL_{\mathsf{No}}]\\
    &\geq \frac{1}{2}\Pr_{x\la\cQ_{\mathsf{yes}}(1^n)}[\mathsf{no}\la \cA(x)]+\frac{1}{2}\Pr_{x\la\bit^n}[\mathsf{yes}\la \cA(x)]\\
    &=\frac{1}{2}\Pr_{x\la\cQ_{\mathsf{yes}}(1^n)}[0\la \cB(x)]+\frac{1}{2}\Pr_{x\la\bit^n}[1\la \cB(x)]\\
    & =\frac{1}{2}+\frac{1}{2}\left(\Pr_{x\la\bit^n}[1\la \cB(x)]- \Pr_{x\la\cQ_{\mathsf{yes}}(1^n)}[1\la \cB(x)]\right)>\frac{1}{2}-n.
\end{align}

Here, we use that for any QPT algorithm $\cQ_{\mathsf{yes}}$ such that 
\begin{align}
    \Pr_{x\la\cQ_{\mathsf{yes}}}[K(x)< n-w(\log(n))]=1,
\end{align}
we have
\begin{align}
    \mathsf{SD}(x_{x\la\cQ_{\mathsf{yes}}}, x_{x\la\bit^n})\leq 1-\negl(n).
\end{align}
\end{proof}
\fi

\ifnum\anonymous=0
\paragraph{Acknowledgements.}
TH is supported by 
JSPS research fellowship and by JSPS KAKENHI No. JP22J21864.
TM is supported by
JST CREST JPMJCR23I3,
JST Moonshot R\verb|&|D JPMJMS2061-5-1-1, 
JST FOREST, 
MEXT QLEAP, 
the Grant-in Aid for Transformative Research Areas (A) 21H05183,
and 
the Grant-in-Aid for Scientific Research (A) No.22H00522.
\else
\fi

\ifnum\llncs=1
\bibliographystyle{alpha} 
\bibliography{abbrev3,crypto,reference}
\else
\bibliographystyle{alpha} 
\bibliography{abbrev3,crypto,reference}
\fi

\appendix

\appendix
\section{Proof of \cref{lem:far}}\label{sec:appendix}

\begin{proof}[Proof of \cref{lem:far}]
Suppose that QEFID with uniform distribution exists.
Then, there exists $\Gen$ such that
\begin{align}
    \mathsf{SD}((x)_{x\la\Gen(1^n)},(x)_{x\la \bit^n})\geq \frac{1}{n}
\end{align}
for all sufficiently large $n\in\N$.

For an arbitrary real $0<\tau <1$, we consider the following $\Gen^*$.
\begin{description}
    \item[The description of $\Gen^*(1^{n})$:]$ $
    \begin{itemize}
        \item Let $A=n^{\frac{1-\tau}{2}}$ and $B=n^{\frac{1+\tau}{2}}$.
        \item Run $ x_i\la \Gen(1^A)$ for all $i\in[B]$.
        \item Output $\{x_i\}_{i\in[B]}$.
    \end{itemize} 
\end{description}
We have
\begin{align} 
    \mathsf{SD}\left( (\{x_i\}_{i\in[B]})_{\{x_i\}_{i\in[B]}\la \Gen^*(1^n)},X_{X\la\bit^{n}}  \right)&\geq 1- \exp\left(- B\cdot\mathsf{SD}\left(x_{x\la\Gen(1^A)},x_{x\la\bit^A}\right) \right)  \\
    &\geq 1-\exp(-\frac{B}{A})=1-\exp(-n^{\tau})>1-2^{-n^\tau}.
\end{align}
Furthermore, computationally indistinguishability of $\Gen^*$ follows from a standard argument.
\end{proof}

\ifnum\cameraready=1
\else
\ifnum\submission=1
\newpage
\setcounter{tocdepth}{1}
\tableofcontents
\else
\fi
\fi

\end{document}